\DeclareMathOperator{\Con}{Con}
\newcommand{\bd}{\begin{description}}
\newcommand{\ed}{\end{description}}
\newcommand{\lv}{{[\![}}  
\newcommand{\rv}{{]\!]}}
\renewcommand{\kappa}{m}
\newcommand{\var}[1]{{\mathcal{\uppercase{#1}}}}
\newcommand{\class}[1]{{\mathcal{\uppercase{#1}}}}
\newcommand{\al}[1]{\mathbf{#1}}
\newcommand{\aaa}{\mathsf{a}}
\newcommand{\ddd}{\mathsf{d}}
\newcommand{\mm}{\mathsf{m}}
\newcommand{\sss}{\mathsf{s}}
\newcommand{\ttt}{\mathsf{t}}
\newcommand{\PP}{\mathsf{P}}
\newcommand{\NP}{\mathsf{NP}}
\newcommand{\FP}{\mathsf{FP}}
\newcommand{\FNP}{\mathsf{FNP}}
\newcommand{\HH}{\mathbb{H}}
\newcommand{\SSS}{\mathbb{S}}
\newcommand{\PPP}{\mathbb{P}}
\DeclareMathOperator{\SMP}{\text{\sc SMP}}
\DeclareMathOperator{\SIP}{\text{\sc SIP}}
\DeclareMathOperator{\CompactRep}{\text{\sc CompactRep}}
\DeclareMathOperator{\LocalRep}{\text{\sc LocalRep}}
\DeclareMathOperator{\NeedForkWitnesses}{\text{\sc NeedMoreForkWitnesses}}
\DeclareMathOperator{\IsRepresentable}{\text{\sc IsRepresentable}}
\DeclareMathOperator{\fork}{\text{\sc fork}}
\newcommand{\LLL}{\Lambda}
\newcommand{\FFF}{\Phi}
\newcommand{\dcoh}{\text{$\ddd$-coh}}
\DeclareMathOperator{\SMPd}{\SMP_{\dcoh}}
\DeclareMathOperator{\proj}{proj}
\DeclareMathOperator{\Irr}{Irr}
\newcommand{\interval}[2]{\lv\, {#1},{#2}\,\rv}
\def\PSPACE{\mathrm{PSPACE}}
\let\phi=\varphi
\let\epsilon=\varepsilon
\let\bar=\overline
\let\hat=\widehat
\let\tilde=\widetilde
\def\crc#1{\mathsf{Circ}(#1)}
\newcommand{\crcf}[2]{\mathsf{Circ}_{#1}(#2)}
\newcommand{\crcplus}[1]{\mathsf{Circ}^+(#1)}
\begin{document}

\title[Subpower Membership Problem]{The Subpower Membership Problem\\
for Finite Algebras with Cube Terms}

\author[A.~Bulatov]{Andrei Bulatov\rsuper{a}}
\address{\lsuper{a}School of Computing Science\\
Simon Fraser University\\
Burnaby BC, Canada V5A 1S6}
\email{abulatov@sfu.ca}

\author[P.~Mayr]{Peter Mayr\rsuper{b}}
\address{\lsuper{b}Department of Mathematics\\
University of Colorado\\
Boulder CO, USA, 80309-0395}
\email{\{agnes.szendrei,peter.mayr\}@colorado.edu}

\author[\'A. Szendrei]{\'Agnes Szendrei\rsuper{b}}

\thanks{This material is based upon work supported by
the Austrian Science Fund (FWF) grant no.\ P24285,
the National Science Foundation grant no.\ DMS 1500254, 
the Hungarian National Foundation for Scientific Research (OTKA)
grant no.\ K104251 and K115518, and an NSERC Discovery Grant.
}
\keywords{membership problem, direct products, few subpowers, cube term, residually small variety}

\begin{abstract}
The subalgebra membership problem
is the problem of deciding if
a given element belongs to an algebra
given by a set of generators. This
is one of the best established
computational problems in algebra.
We consider a variant of this problem,
which is motivated by recent progress in the Constraint 
Satisfaction Problem, and is
often referred to as the Subpower Membership 
Problem (SMP).
In the SMP we are given a set of tuples in
a direct product of algebras from a fixed
finite set $\class{K}$ of finite algebras, and are
asked whether or not a given tuple belongs to the subalgebra of
the direct product generated by a
given set. 

Our main result is that
the subpower membership problem $\SMP(\class{K})$
is in $\PP$ if $\class{K}$ is a finite set of finite algebras 
with a cube term, provided $\class{K}$ is contained in a residually
small variety.
We also prove that for any finite set of finite algebras
$\class{K}$ in a variety with a cube term,
each one of the problems 
$\SMP(\class{K})$, $\SMP(\HH\SSS\class{K})$, and 
finding compact representations
for subpowers in $\class{K}$, 
is polynomial time reducible to any of the others, and
the first two lie in $\NP$.
\end{abstract}

\maketitle


\section{Introduction}
\label{sec-intro}

The subalgebra membership problem is one of the most well established and
most thoroughly
studied algorithmic problems in algebra. In this problem we are given a 
(finite) set of elements, called generators, of
an algebraic structure (briefly: algebra), and we are
asked to check whether or not a given element belongs to the subalgebra 
generated by the generators. This question has been studied 
within the general algebraic theory 
\cite{Ko:CFP,Ko:LBNP,FHL:PTA,Ma:SMP},
and
it is also an important problem in
computational group theory 
\cite{Si:CMSP,BMS:UPSC,GW:SPC}
where it is referred to as the subgroup membership problem.
In addition to its
theoretical interest, the subalgebra membership problem
for general algebraic structures
has found applications in some learning
algorithms~\cite{DJ:LQF,BCD:LIC,IM:TLAA}, and
the subgroup membership problem has found applications in areas
such as cryptography \cite{Shpilrain05,SZ:USM}. 

The subgroup membership problem was the first to be studied. The celebrated
algorithm by Sims \cite{Si:CMSP} decides, for any given
set of permutations $a_1,\dots, a_k$ on a finite set $X$,
whether or not a given permutation $b$
belongs to the subgroup of the
full symmetric group $\al{S}_X$
generated by $a_1,\dots, a_k$. This algorithm 
has been widely used, and a number of improvements have been suggested
\cite{FHL:PTA,Kn:ERP,Le:AFB}.
While Sims' algorithm is
quite efficient, Kozen \cite{Ko:LBNP} proved that if the
group $\al{S}_X$ is replaced
by the full transformation semigroup on $X$,
the subalgebra membership 
problem is $\PSPACE$-complete.

Kozen \cite{Ko:CFP} also considered the subalgebra membership 
problem for infinite algebras.
Clearly, in this case the elements of the algebras in 
question must be efficiently represented. The natural 
approach is to consider algebras that are finitely presented by a finite set of 
generators and finitely many defining relations.  Kozen suggested a polynomial 
time algorithm for this version of the problem.

Interest in the subalgebra membership problem has been renewed in relation 
to the study of the Constraint Satisfaction Problem
(CSP). Recall that in a CSP the 
goal is to assign values from a certain domain to variables subject
to specified constraints.
Representation of the
constraints is a very important issue in the CSP 
research, especially from the practical perspective, given the wide variety of 
constraint types considered.
Unfortunately, if we do not restrict ourselves to a 
specific constraint type, choices are limited.
The standard approach in the general case
(in finite domain CSPs)
is to represent constraints by constraint relations,
that is, to give an explicit 
list of all the allowed combinations of values.
This is clearly not space efficient, therefore more
concise representations of general constraints are
highly sought.

The algebraic approach to the CSP may offer a possible method
for representing
general constraints in a better way. Under this approach
the constraint relations are considered along with their
\emph{polymorphisms}, that is, operations
on the set of values which preserve all constraint relations.
Hence, each constraint relation can be viewed as (the underlying set of)
a subalgebra of a direct power of a certain finite algebra
associated to the problem; namely, the algebra on 
the set of values whose operations are all polymorphisms described above.
Therefore, knowing the polymorphisms, each
constraint relation can be given by a set of generators,
which is often much smaller than the
constraint relation (i.e., the generated subalgebra)
itself.
In a number of cases the 
representation
of a constraint relation
by a set of generators is exponentially
more concise
(in the arity of the constraint)
than listing all of its elements.
This is especially valuable when
the constraint relations have high arity.
But even if we are not in one of those lucky cases, savings may
be substantial. 

The main difficulty with representations of this kind, however,
is that although 
they are very space efficient, they can be inefficient in other aspects.
For example, it is not clear whether or not  it is possible to
efficiently check that a tuple belongs to a
constraint relation given by generators, but that check
is unavoidable if we want to test
if some mapping is a solution to a CSP. 
Thus we are 
led to the following variation of the subalgebra membership 
problem  suggested by Willard \cite{Wi:FUPC}.
Let $\al A$ be a finite algebra with finitely many fundamental operations. The 
\emph{subpower membership problem} ($\SMP$) for $\al A$  is the 
following decision problem:

\medskip
\pagebreak

\noindent $\SMP(\al{A})$:
\begin{itemize}
\item
INPUT:  
$a_1,\dots,a_k,b\in \al{A}^n$.
\item
QUESTION: Is $b$ in the subalgebra of $\al{A}^n$ 
generated by $\{a_1,\dots,a_k\}$?
\end{itemize}

\smallskip
\noindent
Observe that this problem is different from the problems studied in 
\cite{FHL:PTA,Ko:LBNP}, since the finite algebra $\al A$
is fixed and 
subalgebras are generated in
direct powers of $\al A$.

A naive algorithm for solving $\SMP(\al{A})$
is to compute and list
all elements of
the subalgebra $\al B$ of $\al{A}^n$ generated by $\{a_1,\dots,a_k\}$,
and then check whether $b$ is in $\al B$. Since the size of the input
$a_1,\ldots,a_k,b$ is $(k+1)n$, while the best upper bound for the size of
$\al B$ is $|A|^n$,
and for many algebras $\al{A}$,
some subalgebras $\al{B}$ of $\al{A}^n$ (say, $\al{B}=\al{A}^n$
itself) have generating sets of size
$k$ bounded by a polynomial of $n$,
the time complexity of the naive algorithm is exponential.
It turns out that without further restrictions on $\al A$, one cannot do 
better than the naive algorithm:
it follows from the
main result of M.~Kozik~\cite{Ko:FSFE} that there exists a finite algebra
$\al A$ such that the subpower membership problem for $\al A$ is
EXPTIME-complete.

In contrast, for finite algebras $\al A$ in many
familiar classes, the subpower membership problem is in $\PP$,
that is, there is a polynomial time algorithm for solving the problem.
For example, if $\al A$ is a group, then a variant of Sims' algorithm
(cf.\ \cite{FHL:PTA})
solves the problem in polynomial time. 
Other simple algorithms work if
$\al A$ is a finite lattice or a finite lattice with additional operations
(use the Baker--Pixley Theorem~\cite{BP:PIAT})
or if $\al A$ is a finite semilattice.
A recent result of A.~Bulatov, M.~Kozik, P.~Mayr, and M.~Steindl~\cite{BKMS:SMP} 
extends this observation on semilattices
to any finite commutative semigroup $\al A$ by
showing that if $\al A$ embeds into a direct product of a Clifford semigroup
and a nilpotent semigroup, then the subpower membership problem for $\al A$
is in $\PP$, and for all other finite commutative semigroups $\al A$
the problem is $\NP$-complete.

Extending the result on groups mentioned earlier R.~Willard~\cite{Wi:FUPC} 
proved that the subpower membership problem is in $\PP$ for every
finite algebra $\al A$ that is an expansion of a group by multilinear operations. 
In particular, this is the case for every finite ring, finite module, and finite 
$K$-algebra. It is not known whether this statement remains true if the word
`multilinear' is omitted. 

Returning to the CSP motivation
of the subpower membership problem,
the case in which
representing constraint relations by generators 
is the most advantageous, is when the finite
  algebra $\al{A}$ associated to the CSP has the property that
all subalgebras of finite powers $\al{A}^n$ of $\al{A}$
admit small generating sets;
here `small' means that the size
is bounded by $p(n)$ for some
polynomial $p$ depending only on $\al{A}$.
These algebras have 
been intensively studied within the theory of constraint satisfaction problems,
and 
constitute one of the two known major classes of CSPs
for which general polynomial time algorithms were found several years ago
\cite{BD:SAMC,IM:TLAA}.
It was proved in \cite{BI:VFSP} that a finite algebra $\al{A}$ has the
aforementioned property on small generating sets for subalgebras of powers
if and only if $\al{A}$ \emph{has few subpowers}
in the following sense:
there is a polynomial $q$ depending only on $\al{A}$ such that
$\al{A}^n$ has at most $2^{q(n)}$ subalgebras. 
This number is indeed `small',
as there exist finite
algebras $\al{A}$ (e.g., semilattices or unary algebras) for which
the number of subalgebras of $\al{A}^n$ is doubly exponential in $n$.

Algebras with few subpowers
were also characterized in \cite{BI:VFSP}
by the existence
of a \emph{cube term} (to be defined in Section~\ref{sec-cube}).
Cube terms generalize \emph{Mal'tsev terms} and \emph{near unanimity terms},
which have been studied in algebra since the 60's and 70's, and
more recently have played significant roles in isolating important properties
of constraints in CSPs. 

The principal research problem 
addressed in this paper is
the following.

\smallskip
\noindent
{\bf Question 1} \cite{IM:TLAA}{\bf .}
Is the subpower membership problem for $\al A$ in $\PP$ if
$\al A$ is a finite algebra with a cube term?

\smallskip
\noindent
The answer is known to be `yes' if $\al{A}$ has a near unanimity term (by the
Baker--Pixley Theorem~\cite{BP:PIAT}), and the answer is not known
in general, if $\al{A}$ has a Mal'tsev term; in fact, as we mentioned earlier,
the answer is not known even if $\al{A}$ is a finite group expanded by
further operations.
We will now briefly discuss what the main difficulty seems to be.

Recall from~\cite{BI:VFSP} that if a finite algebra $\al{A}$
has a cube term, then for every subalgebra
$\al{B}$ of any finite power $\al{A}^n$ of $\al{A}$
\emph{there exists} a sort of canonical set of generators,
called a \emph{compact representation} in 
\cite{BD:SAMC,BI:VFSP,IM:TLAA} 
or a \emph{frame} in \cite{DR:ED}.
Compact representations 
played
a crucial role in the CSP applications of finite algebras $\al{A}$ with
cube terms (see \cite{IM:TLAA}). 
For the subpower membership problem $\SMP(\al{A})$, a natural
approach to answering the question 
``Is $b$ in the subalgebra $\al{B}$ of $\al{A}^n$ generated by 
$a_1,\dots,a_k$?'' would be to first find a
compact representation for $\al{B}$, and then use it to answer the question.  
The main components of a compact representation for $\al{B}$
are \emph{witnesses for forks.}
For a
coordinate position $i\in\{1,\dots, n\}$ and
for two elements
$c,d\in A$, the
tuples $(c_1,\dots,c_n),(d_1,\dots,d_n)$ in $\al{B}$ are said to witness
that 
the pair $(c,d)$ is a fork in $\al{B}$ in position $i$
if $c_1=d_1,\dots,c_{i-1}=d_{i-1}$ and $c_i=c,d_i=d$.
It turns out
(see, e.g., Theorem~\ref{thm-smp-cr-equiv} and
  the paragraph preceding Lemma~\ref{lm-char-cr})
that in order to
give a positive answer to Question~1 it would be sufficient 
to be able to decide in polynomial time
--- using  only the specified generators $a_1,\dots,a_k$ ---
whether or not a given pair $(c,d)\in A^2$ is a fork in $\al{B}$ in
a given position $i$ (and if it is, find witnesses for it in $\al{B}$).
This, however, appears to be a significant problem;
cf.~Theorems~\ref{thm-cr-direct}, \ref{thm-short-rep},
  and~\ref{thm-smp-cr-equiv}.

Question~1 will remain unresolved in this paper, but we
will prove (see Theorem~\ref{thm-RScase})
that the answer to Question~1 
is YES in an important special case, namely when
$\al A$ belongs to a residually small variety (to be defined in Section~\ref{sec:cm}).
For a finite algebra $\al A$ with a cube term this condition is equivalent to the requirement that there exists a
natural number $c$ such that every algebra in the variety generated by $\al A$
is a subalgebra of a product of algebras of size less than $c$.
For example, a finite algebra $\al A$ belongs to a residually small variety if it is a lattice, a module, a group with
abelian Sylow subgroups, or a commutative ring with $1$ whose Jacobson radical $J$ satisfies $J^2 =0$.
Although subpower membership for all finite groups, rings and lattices is already known to be tractable by various
algorithms, we give a new unified approach that can handle all of them and also more general algebras in residually
small varieties. So our algorithm is a first step towards finding a polynomial time algorithm for all algebras with
a cube term.

The proof of Theorem~\ref{thm-RScase} does not use compact representations; rather, it
relies on a structure theorem proved in \cite{KS:CAPT} 
for the subalgebras of finite powers of an algebra
$\al A$ with a cube term (or equivalently, parallelogram term). 
The application of this structure theorem leads us to considering subalgebras
of finite products $\al S_1\times\dots\times\al S_n$ 
where the factors $\al S_1,\dots,\al S_n$ come from
the finite collection $\HH\SSS\al A$ of homomorphic images of subalgebras
of $\al A$. 
As a consequence, it is natural for us to expand the scope of the 
{\it subpower membership problem}, and define it for any finite set 
$\class{K}$ of finite algebras 
as follows:

\medskip

\noindent $\SMP(\class{K})$:
\begin{itemize}
\item
INPUT:  
$a_1,\dots,a_k,b\in \al A_1\times\dots\times\al A_n$
with $\al A_1,\ldots,\al A_n\in\class{K}$.
\item
QUESTION: Is $b$ in the subalgebra of $\al A_1\times\dots\times\al A_n$ 
generated by $\{a_1,\dots,a_k\}$?
\end{itemize}

\smallskip
There is an
important issue, which is raised by our
passage from $\SMP(\al A)$ to $\SMP(\HH\SSS\al A)$:
Does the subpower membership problem get harder 
when $\al A$ is replaced by the set $\SSS\al A$ of all its subalgebras or 
by the set $\HH\al A$ of all its homomorphic images?
It is easy to see that for any finite algebra $\al A$, 
$\SMP(\al A)$ and $\SMP(\SSS\al A)$ are essentially the same problem.
However, this is not the case for homomorphic images.
A surprising result of M.~Steindl~\cite{St:SMPB}
shows that there exists a $10$-element semigroup $\al S$
with a $9$-element quotient semigroup $\bar{\al S}$ such that
$\SMP(\al S)$ is in $\PP$ while $\SMP(\bar{\al S})$ is $\NP$-complete.
This shows that the problem $\SMP(\HH\SSS\al A)$ may be harder
than $\SMP(\al A)$ (provided $\PP\not=\NP$), and therefore
poses the following question for us:

\smallskip
\noindent
{\bf Question 2.} 
Are the problems $\SMP(\class{K})$ and $\SMP(\HH\SSS\class{K})$
polynomial time equivalent if $\class{K}$ is a finite set of
finite algebras in a variety with a cube term?

\smallskip
\noindent
We will prove (see Theorem~\ref{thm-smp-hom}) 
that the answer to Question~2 is YES.
The proof uses the techniques of compact representations mentioned above. 
We will also show that, given a finite set of finite
algebras $\class{K}$ in a variety with a cube term,
the subpower membership problem $\SMP(\class{K})$ and 
the problem of finding compact representations for subalgebras
of products of algebras in $\class{K}$
are
polynomial
time reducible to each other
(Theorem~\ref{thm-smp-cr-equiv}), 
and the two problems are in 
$\NP$ and $\FP^\NP$, respectively
(Theorems~\ref{thm-short-rep} and~\ref{thm-smp-cr-equiv}).

 Our results also yield new information on another problem that is closely related to $\SMP$, namely the
 {\it subpower intersection problem} for a finite set 
 $\class{K}$ of finite algebras which is defined as follows:

\medskip

\noindent $\SIP(\class{K})$:
\begin{itemize}
\item
INPUT:  
$a_1,\dots,a_k,b_1,\ldots,b_\ell\in \al A_1\times\dots\times\al A_n$
with $\al A_1,\ldots,\al A_n\in\class{K}$.
\item
OUTPUT: Generators for the intersection of the subalgebras of $\al A_1\times\dots\times\al A_n$ 
generated by $\{a_1,\dots,a_k\}$ and by $\{b_1,\ldots,b_\ell\}$.
\end{itemize}

\medskip

 Assume $\class{K}$ is a finite set of finite algebras in a variety with a cube term.
 Then it turns out that $\SIP(\class{K})$ has a polynomial time reduction to  $\SMP(\class{K})$.
 Indeed, given compact representations for two subalgebras $\al{B}$, $\al{C}$ of 
 $\al{A}_1\times\dots\times\al{A}_n$ with $\al{A}_1,\dots,\al{A}_n\in\class{K}$,
 Dalmau's algorithm~\cite{IM:TLAA} yields a compact representation for $\al B\cap \al C$ in polynomial time.
 Hence $\SIP(\class{K})$ reduces to finding compact representations, which is polynomial time equivalent to
 $\SMP(\class{K})$ (Theorem~\ref{thm-smp-cr-equiv}).
 For the converse, we do not know whether $\SMP(\class{K})$ reduces to $\SIP(\class{K})$ in general.
 However, it clearly does for idempotent algebras because for any elements $a_1,\dots,a_k,b$ in 
$\al{A}_1\times\dots\times\al{A}_k$, the subalgebra generated by
$\{a_1,\dots,a_k\}$ contains $b$ if and only if that subalgebra has a nonempty intersection with $\{b\}$,
the subalgebra generated by $b$. Hence for a finite set of finite algebras $\class{K}$ in an
 idempotent variety with a cube term the problems
 $\SMP(\class{K})$, $\SIP(\class{K})$, and finding compact representations
 are all polynomial time reducible to each other.

\section{Preliminaries}
\label{sec-prelim}

In this section we summarize the concepts and facts
from universal algebra that will be 
used throughout the paper. For more details the reader is referred to 
\cite{BS:CUA}.

For every natural number $m$, we will use the notation $[m]$ for the set
$\{1,2,\dots,m\}$. The collection of
all $k$-element subsets of a set $S$ will be denoted by $\binom{S}{k}$.

\subsection{Algebras and varieties}\label{sec-algebras}

An \emph{algebraic language} is a set $F$ of function symbols, together
with a mapping $F\to \{0,1,\dots\}$ which assigns an
\emph{arity} $k_f$ to every symbol $f\in F$.
An \emph{algebraic structure} (or briefly \emph{algebra})
\emph{in the language $F$} 
is a pair $\al A=(A,F^{\al A})$ where $A$ is a nonempty
set, called
the \emph{universe} of $\al A$, and $F^{\al A}$ is a set of operations
$f^{\al{A}}\colon A^{k_f}\to A$ on $A$,
indexed by elements of $F$, called
the \emph{(basic) operations} of $\al{A}$.
We will say that $\al{A}=(A;F^{\al{A}})$ is a \emph{finite algebra}
if $A$ is finite,
and $\al{A}$ has a \emph{finite language} if $F$ is finite.

Let $\al{A}$ be an algebra in the language $F$.
If $t$ is a \emph{term} in the language $F$
(as defined in first order logic),
we will write $t(x_1,\dots,x_n)$ to indicate that all variables occurring
in $t$ are among $x_1,\dots,x_n$ (but some of the variables
$x_1,\dots,x_n$ may not occur in $t$).
We may refer to a term $t(x_1,\dots,x_n)$
as an \emph{$n$-ary term}.
Each such term $t$
induces an $n$-ary \emph{term operation}
$t^{\al{A}}\colon A^n\to A$ of $\al{A}$.
Thus, the term operations of $\al{A}$
are exactly those operations that can be obtained
from the basic operations of $\al{A}$
and from projection operations by composition. A function $g\colon A^n\to A$
is called an $n$-ary
\emph{polynomial operation} of $\al{A}$ if it has the form
$g(x_1,\dots,x_n)=t^{\al{A}}(x_1,\dots,x_n,c_1,\dots,c_k)$ for some
$(n+k)$-ary term operation $t^{\al{A}}$ and some elements $c_1,\dots,c_k$
of $\al{A}$.

As in the last two paragraphs, algebras
will usually be denoted by boldface letters, and
their universes by the 
same letters in italics.
However, we will omit the superscript $\al A$
from $F^{\al{A}}$, $f^{\al{A}}$, and $t^{\al{A}}$ whenever
there is no danger of confusion.

Terms in a language $F$ encode all computations that are possible in
any algebra $\al{A}$ of the language $F$. Since we are interested in
the complexity of certain computations, we will now discuss
what measure of complexity we will use for terms.
Assuming that the language $F$ is finite, and hence there is a constant bound
on the arities of all symbols in $F$, the complexity of a term $t$
can be captured by the number of function
symbols in $t$, or equivalently, by the number
of nodes in the term tree of $t$.
This number will be referred to as
\emph{the length of $t$}.
If such a term $t$
is short, say, is polynomially long in some 
parameter, then it
can be efficiently evaluated in the straightforward way
in any algebra, for any given input.
However, even if the term itself is long, it may admit a more concise 
representation by an \emph{$F$-circuit}, i.e., a circuit with gates whose types correspond to
the symbols in $F$. Note that a circuit may evaluate identical subterms of
$t$ only once, and pass on the value to several gates in the circuit.
This is why we will consider the number of gates in such a circuit to measure the amount of computation
needed to evaluate $t$, which
may be exponentially
smaller than the length of $t$. 
More formally, let $\crcf{F}{t}$
denote a minimal $F$-circuit 
representing $t$,
that is,
no two gates compute identical
subterms.
We define the \emph{size of $\crcf{F}{t}$} to be the
number of gates in $\crcf{F}{t}$, and will use this number to measure the time
needed to evaluate $t$.
For $F = \{f\}$ a singleton, we also abbreviate $\{f\}$-circuit to $f$-circuit. Further we will omit the reference
to $F$ and write $\crc t$ when the set of gate types is clear from context.

Let $\al A,\al B$, and $\al{A}_j$ ($j\in J$)
be algebras in the same language $F$. A mapping
$\phi\colon A\to B$
is said to be a \emph{homomorphism}
$\phi\colon\al{A}\to\al{B}$ if
\[
\phi(f^{\al A}(a_1,\dots,a_{k_f}))=f^{\al B}(\varphi(a_1),\dots,
\phi(a_{k_f}))
\quad \text{for all $f\in F$ and $a_1,\dots,a_{k_f}\in A$.}
\]
An invertible (i.e., bijective)
homomorphism is called an \emph{isomorphism}.
We say that $\al{B}$ is a \emph{homomorphic image} of $\al{A}$ if there
exists an onto homomorphism $\al{A}\to\al{B}$.
Furthermore,
we say that
$\al{B}$ is 
  a \emph{subalgebra} of $\al{A}$, and write $\al{B}\le\al{A}$,
if $B\subseteq A$ and the
inclusion map $B\to A$ is a homomorphism $\al{B}\to\al{A}$;
equivalently, $\al{B}$ is a subalgebra of $\al{A}$ iff
$B\subseteq A$ and $f^{\al{B}}$ is the restriction of
$f^{\al{A}}$ to the set $B$ for all $f\in F$.
The \emph{direct product} $\prod_{j\in J}\al{A_j}$ of the algebras $\al{A}_j$
($j\in J$) is the unique algebra $\al{A}$ with universe $A:=\prod_{j\in J} A_j$
for which all projection maps $\proj_\ell\colon A\to A_\ell$,
$(a_j)_{j\in J}\mapsto a_\ell$ ($\ell\in J$) are homomorphisms; equivalently,
$\al{A}$ is the direct product of the algebras $\al{A}_j$ ($j\in J$) iff
$\al{A}$ has universe $A:=\prod_{j\in J} A_j$, and its operations $f^{\al{A}}$
act coordinatewise, via $f^{\al{A}_j}$ in coordinate $j\in J$, for all
$f\in F$.
A subalgebra $\al{B}$ of a direct product $\prod_{j\in J}\al{A}_j$
will be called a \emph{subdirect subalgebra} if
$\proj_\ell(B)=A_\ell$ for every $\ell\in J$.  

A \emph{congruence} of $\al{A}$ is the kernel of a homomorphism
with domain $\al{A}$; equivalently, a congruence of $\al{A}$ 
is an equivalence relation on $A$
that is invariant under the
operations of $\al{A}$.
If $\vartheta$ is a congruence of $\al{A}$, then there is a unique
algebra, denoted $\al{A}/\vartheta$, whose universe is
the set 
$A/\vartheta$ of equivalence classes of $\vartheta$, and whose operations
are defined so that the natural map $\nu\colon A\to A/\vartheta$,
$a\mapsto a/\vartheta$ is a homomorphism $\al{A}\to\al{A}/\vartheta$.
Here $a/\vartheta$ denotes
the equivalence class of $\vartheta$ containing $a$.
The algebra $\al{A}/\vartheta$ is called a
\emph{quotient algebra} of $\al{A}$.
By the first
isomorphism theorem (\cite[Theorem II.6.12]{BS:CUA}), 
if $\phi\colon\al{A}\to\al{B}$
is an onto homomorphism with kernel $\vartheta$, then $\al{B}$ is isomorphic
to $\al{A}/\vartheta$.

For any algebra $\al A$, the congruences of $\al A$ form a lattice
with respect to inclusion, which is called  the
\emph{congruence lattice} of $\al A$, and
is denoted by $\Con(\al A)$. The \emph{meet} operation
$\wedge$ of
$\Con(\al A)$ is simply the set-theoretic intersection of 
congruences, and the \emph{join} operation $\vee$ is the transitive closure of 
the union of congruences.
The top and bottom elements of $\Con(\al A)$ are 
denoted $1$ and $0$, respectively.
For $\alpha,\beta\in\Con(\al{A})$ with $\alpha\le\beta$ we define the
  \emph{interval} $\interval{\alpha}{\beta}$ by
  $\interval{\alpha}{\beta}:=\{\gamma\in\Con(\al{A}):\alpha\le\gamma\le\beta\}$.

A congruence $\vartheta\in\Con(\al A)$ 
is said to be \emph{completely meet irreducible} if
for any family $\{\vartheta_j:j\in J\}$
of congruences of $\al{A}$,
$\vartheta=\bigwedge_{j\in J}\vartheta_j$
implies that $\vartheta=\vartheta_j$ for some $j\in J$.
If $\vartheta$ is completely meet irreducible
and $\vartheta\not=1$, then $\vartheta$
has a unique \emph{upper cover}
$\vartheta^+$ in $\Con(\al{A})$, that is, there is a unique
congruence $\vartheta^+\in\Con(\al{A})$
such that $\vartheta<\vartheta^+$ and
$\interval{\vartheta}{\vartheta^+}=\{\vartheta,\vartheta^+\}$.
We will use the notation $\Irr(\al A)$ for the
set of all completely
meet irreducible congruences of $\al A$, excluding $1$.
An algebra $\al A$ is said to be
\emph{subdirectly irreducible} if
$0\,(\not=1)$ is completely meet irreducible in $\Con(\al A)$.
In this case
the unique minimal non-trivial congruence $0^+$ is
called the \emph{monolith} of $\al A$.  
It follows from the second isomorphism theorem
(\cite[Theorem II.6.15]{BS:CUA})
that
the subdirectly irreducible quotients of $\al A$
are exactly the algebras $\al A/\sigma$, $\sigma\in\Irr(\al A)$.

Let $\vartheta$ be a congruence of an algebra $\al A$.
We will often write $a\equiv_{\vartheta} b$ instead of $(a,b)\in\vartheta$.
If $\al B$ is a subalgebra of $\al A$,
we will say that $\al B$ is \emph{saturated with respect
to $\vartheta$}, or $\al B$ is a \emph{$\vartheta$-saturated subalgebra}
of $\al A$, if $b\in B$ and $b\equiv_{\vartheta}a$ imply 
$a\in B$ for all $a\in A$.
In other words, $\al B$ is $\vartheta$-saturated if and only if
its universe is a union of $\vartheta$-classes of $\al A$.
For an arbitrary subalgebra $\al B$ of $\al A$ there exists a smallest
$\vartheta$-saturated subalgebra of $\al A$ that contains $\al B$, which we
denote by $\al B[\vartheta]$; the universe of $\al B[\vartheta]$
is $B[\vartheta]:=\bigcup_{b\in B}b/\vartheta$. 
Denoting the restrictions of $\vartheta$ to $\al B$ and $\al B[\vartheta]$ by
$\vartheta_{\al B}$ and $\vartheta_{\al B[\vartheta]}$, respectively,
we have by
the third isomorphism theorem (\cite[Theorem II.6.18]{BS:CUA}) 
that the map 
$\al B/\vartheta_{\al B}\to\al B[\vartheta]/\vartheta_{\al B[\vartheta]}$,
$b/\vartheta_{\al B}\mapsto b/\vartheta_{\al B[\vartheta]}(=b/\vartheta)$
is an isomorphism.

For a product $\al A_1\times\dots\times\al A_n$ of algebras and for
any set $I\subseteq[n]$, the projection homomorphism
\[
\al A_1\times\dots\times\al A_n=\prod_{i\in[n]}\al A_i\to\prod_{i\in I}\al A_i,
\qquad
(a_i)_{i\in[n]}\mapsto(a_i)_{i\in I}
\]
will be denoted by $\proj_I$. For a subalgebra $\al B$ or for an element $b$
of $\prod_{i\in[n]}\al A_i$, we will write $\al B|_I$ or $b|_I$
for $\proj_I(\al B)$ or $\proj_I(b)$, respectively.
If $I=\{j_1,\dots,j_k\}$, then 
the notation $|_{\{j_1,\dots,j_k\}}$ will be simplified
to $|_{j_1,\dots,j_k}$.

For any class $\class K$ of algebras in the same language,
$\HH\class K$, $\SSS\class K$, and $\PPP\class K$
denote the classes of all homomorphic images, subalgebras, and direct products
of members of $\class K$, respectively.

A \emph{variety} is a class of algebras in the same language that is
closed under taking direct products, subalgebras, and homomorphic 
images. By Tarski's Theorem
(\cite[Theorem II.9.5]{BS:CUA}),
the smallest variety containing a class
$\class K$ of algebras in the same 
language equals $\HH\SSS\PPP\class K$.
Birkhoff's Theorem
(\cite[Theorem 11.9]{BS:CUA}) characterizes varieties as 
classes of algebras defined by identities.
In more detail, an \emph{identity} in a
language $F$ is an expression
of the form
$s(x_1,\dots,x_n)=t(x_1,\dots,x_n)$ where $s(x_1,\dots,x_n)$ and
$t(x_1,\dots,x_n)$ are terms in
the language $F$.
An algebra $\al A$ in the
language $F$ \emph{satisfies an identity}
$s(x_1,\dots,x_n)=t(x_1,\dots,x_n)$
if the $n$-ary term operations $s^{\al{A}}$ and $t^{\al{A}}$ are equal.
Birkhoff's Theorem is the statement that for a fixed language,
a class $\class{V}$ of algebras
is a variety if and only if there exists a set $\Sigma$ of identities
such that an algebra lies in $\class{V}$ if and only if
it satisfies the identities in $\Sigma$.

Occasionally, when we want to describe sets of identities where
the identities have a similar form,
it will be convenient to represent the identities in
matrix form as follows. For any class $\class{K}$ of algebras 
in the same language, and for any term $t=t(x_1,\dots,x_n)$,
if $M$ is an $m\times n$ matrix of variables
and $\vec{v}$ is an $m\times 1$ matrix of variables,
\begin{equation}
\label{eq-matrix-id}
\class{K}\models t(M)=\vec{v}
\end{equation}
will denote that the $m$ 
identities represented by the rows in \eqref{eq-matrix-id}
are true in $\class{K}$. For example, 
\begin{equation}
\label{eq-maltsev}
\class{K}\models t
\left(
\begin{matrix}
x & x & y\\
y & x & x
\end{matrix}
\right)
=
\left(
\begin{matrix}
y\\
y
\end{matrix}
\right)
\end{equation}
expresses that 
the identities $t(x,x,y)=y$ and $t(y,x,x)=y$ hold in $\class K$,
that is, $t$
is a \emph{Mal'tsev} term for $\class{K}$.

\subsection{Cube Terms and Parallelogram Terms} \label{sec-cube}

Let us fix an integer $\ddd\ (>1)$.
A \emph{$\ddd$-cube term} for $\class{K}$ is a 
term $t$ satisfying a set of identities 
of the form \eqref{eq-matrix-id} in two variables $x,y$,
where $M$ is a matrix with $\ddd$ rows such that every
column of $M$ contains at least one $x$, and $\vec{v}$
consists of $y$'s only. As \eqref{eq-maltsev} shows,
a Mal'tsev term is a $2$-cube term.

As we mentioned in the introduction, cube
terms were introduced in \cite{BI:VFSP} to show that
a finite algebra $\al A$ has few subpowers
if and only if $\al A$
has a cube term.
More manageable terms that are equivalent to cube terms (e.g., edge terms,
star terms) were also found in \cite{BI:VFSP}.
In this paper we will use another
family of equivalent terms, called parallelogram terms,
which were introduced
in \cite{KS:CAPT}.

Let $m$ and $n$ be positive integers
and let $\ddd= m+n$. 
An {\em $(m,n)$-parallelogram term} 
for $\class{K}$ is
a $(\ddd +3)$-ary term 
$P_{m,n}$
such that 
\begin{equation}
\label{p}
\class{K}\models
P_{m,n}
\left(
\begin{array}{ccc|}
x&x&y\\
x&x&y\\
&\vdots&\\
x&x&y\\
\hline
y&x&x\\
&\vdots&\\
y&x&x\\
y&x&x\\
\end{array}
\begin{array}{cccccccc}
z&y&\cdots&y&y&\cdots&y&y\\
y&z&&y&y&&y&y\\
\vdots&&\ddots& &&&&\vdots\\
y&y&&z&y&&y&y\\
y&y&&y&z&&y&y\\
\vdots&&&&&\ddots&&\vdots\\
y&y&&y&y&&z&y\\
y&y&\cdots&y&y&\cdots&y&z\\
\end{array}
\right) = 
\left(\begin{matrix}
y\\
y\\
\vdots\\
y\\
y\\
\vdots\\
y\\
y
\end{matrix}
\right).
\end{equation}
Here the rightmost block
of variables is a $\ddd\times \ddd$ array, the upper left
block is $m\times 3$ and the lower left block is $n\times 3$.

It is easy to see from these definitions that an $(m,n)$-parallelogram term
that is independent of its last $\ddd$ variables is a Mal'tsev term, and
an $(m,n)$-parallelogram term that is independent of its first $3$ variables 
is a $\ddd$-ary near unanimity term.

The theorem below summarizes the facts we will need later on about cube terms
and parallelogram terms. 

\begin{thm}[{See \cite{BI:VFSP,KS:CAPT}}]
\label{thm-cube-par}
Let $\var{V}$ be a variety, and let $\ddd\,(>1)$ be an integer.
\begin{enumerate}
\item \label{it:cpp} 
The following conditions are equivalent:
\begin{enumerate}
\item
$\var{V}$ has a $\ddd$-cube term,
\item
$\var{V}$ has an $(m,n)$-parallelogram term for all
$m,n\ge1$ with $m+n=\ddd$,
\item
$\var{V}$ has an $(m,n)$-parallelogram term for some
$m,n\ge1$ with $m+n=\ddd$.
\end{enumerate}
\item \label{it:cubecm} 
If $\var{V}$ has a cube term, then $\var{V}$ is congruence modular.
\end{enumerate}
\end{thm}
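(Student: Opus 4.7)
The plan is to treat the two parts separately. For part (1) the implication (b) $\Rightarrow$ (c) is immediate, so the work concentrates on (c) $\Rightarrow$ (a) and (a) $\Rightarrow$ (b). For part (2) congruence modularity will follow once any parallelogram term is available.

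For (c) $\Rightarrow$ (a) my approach is to directly specialize: given an $(m,n)$-parallelogram term $P_{m,n}$, I would identify the variable $z$ with $x$ in \eqref{p}. The resulting matrix $M'$ over $\{x,y\}$ still has dimensions $\ddd \times (\ddd+3)$, and I would check that every column of $M'$ contains at least one $x$. The first column has $x$'s in the first $m$ rows (fine since $m\ge 1$); the second column is constantly $x$; the third column has $x$'s in the last $n$ rows (fine since $n\ge 1$); and each of the last $\ddd$ columns acquires a single $x$ at the diagonal position previously occupied by $z$. Since the right-hand side is unchanged, $P_{m,n}$ with $z$ renamed to $x$ is a $\ddd$-cube term of arity $\ddd+3$.

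For the harder direction (a) $\Rightarrow$ (b) I would route through the equivalent notion of a $\ddd$-edge term, which is available in any variety with a $\ddd$-cube term by the composition argument in \cite{BI:VFSP}. From an edge term, the construction in \cite{KS:CAPT} directly produces an $(m,n)$-parallelogram term for each admissible pair $(m,n)$ with $m+n=\ddd$. The main obstacle here is the combinatorial bookkeeping: a single composed term must simultaneously witness all $\ddd = m+n$ rows of the identity \eqref{p}, so I would process the rows one at a time, at each step using a fresh copy of the edge term to turn the current pattern into the next desired one while preserving what has already been established.

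For part (2), once a parallelogram term is in hand I would derive congruence modularity by exhibiting a Day chain. Conceptually, a parallelogram term collapses to a Mal'tsev-like operation when its last $\ddd$ arguments are all equal and to a near-unanimity-like operation when its first three are all equal; each of these special cases alone already forces congruence modularity, and the combined identities give an explicit Day-term derivation. Rather than reproduce the calculation, I would cite the one worked out in \cite{KS:CAPT}.
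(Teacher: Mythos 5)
Your proposal matches the paper's treatment: the paper gives no proof of its own, citing \cite[Theorem~4.4]{BI:VFSP} together with \cite[Theorem~3.5]{KS:CAPT} for the equivalence in part (1) and \cite[Theorem~2.7]{BI:VFSP}, \cite[Theorem~3.2]{DKS:ETCM} for part (2), and you likewise delegate the substantive direction (a)$\Rightarrow$(b) and the modularity statement to that literature, while your direct verification of (c)$\Rightarrow$(a) — substitute $z:=x$ in \eqref{p} and check, using $m\ge1$ and $n\ge1$, that every column of the resulting $\ddd\times(\ddd+3)$ matrix contains an $x$ while the right-hand side stays all $y$'s — is correct and is exactly the easy observation hidden in the cited equivalence. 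One caution on part (2): your heuristic that the specializations obtained by equating the last $\ddd$ arguments (resp.\ the first three) each "already force congruence modularity" is not right as stated, since these yield genuine Mal'tsev or near-unanimity terms only when $P_{m,n}$ is independent of the remaining variables (as the paper notes); because you ultimately defer to the literature this does not damage the argument, though the paper's actual sources for (2) are \cite{BI:VFSP} and \cite{DKS:ETCM} rather than \cite{KS:CAPT}.
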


In statement~\eqref{it:cubecm}
the phrase $\var{V}$ is \emph{congruence modular}
means that the congruence lattice of every algebra
in $\var{V}$
is \emph{modular}; see, \cite{FM:CTC}. 
The equivalence of the conditions (a)--(c)
 in statement~\eqref{it:cpp} follows by 
combining results from \cite[Theorem~4.4]{BI:VFSP} and 
\cite[Theorem~3.5]{KS:CAPT}.
Proofs for statement~\eqref{it:cubecm} can be found in
\cite[Theorem~2.7]{BI:VFSP}
and \cite[Theorem~3.2]{DKS:ETCM}.

In view of the equivalence of conditions (a) and (c) in 
Theorem~\ref{thm-cube-par}, when we consider classes of algebras
with a $\ddd$-cube term, we will work with a
$(1,\ddd-1)$-parallelogram term $P=P_{1,\ddd-1}$, and we will also use the
following terms derived from $P$:
\begin{align}\label{eq-ps}
s(x_1,\ldots,x_\ddd) &{}:=P(x_1,x_2,x_2,x_1,\ldots,x_\ddd),\\ \notag
p(x,u,y) &{}:=P(x,u,y,x,y,\ldots,y). \notag
\end{align}
For any class $\class{K}$ of algebras with a $(1,\ddd-1)$-parallelogram term $P$
one can easily deduce 
from the $(1,\ddd-1)$-parallelogram identities that
\begin{align}\label{eq-Pps}
\class{K}\ \models\ \ 
& y=p(x,x,y),\notag\\
& \phantom{y={}}p(x,y,y)=s(x,y,y,\ldots,y),\notag\\
& \phantom{y=p(x,y,y)={}}s(y,x,y,\ldots,y)=y,\\
& \phantom{y=p(x,y,y)=s(y,x,y,\ldots,y)}\,\,\,\vdots \notag\\ 
& \phantom{y=p(x,y,y)={}}s(y,y,y,\ldots,x)=y. \notag 
\end{align}
To simplify notation, we also define
\begin{equation}
  \label{eq-xy}
x^y :=p(x,y,y)\ (=s(x,y,\dots,y)),
\end{equation}
and
\begin{equation}
  \label{eq-sl}
s^\ell(x_1,\ldots,x_\ddd) := s(s^{\ell-1}(x_1,x_2\ldots,x_\ddd),x_2,\ldots,x_\ddd)
\quad
\text{for all $\ell\ge1$},
\end{equation}
where $s^0:=x_1$. So, $s^1=s$ and $s^{\ell}$ is the $\ell$-th iterate of $s$
in the first variable.

\subsection{Congruence Modular Varieties: 
the Commutator and Residual Smallness} \label{sec:cm}

Let $\var{V}$ be an arbitrary congruence modular variety.
There is a well-behaved commutator operation
$[\phantom{n},\phantom{n}]$ on the congruence lattices of algebras in
any such variety $\var{V}$, which extends
--- and shares many important properties of ---
the group theoretic commutator for normal subgroups of groups.
For the definition and basic properties of the commutator operation
in $\var{V}$ the reader is referred to \cite{FM:CTC}.
A congruence $\alpha\in\Con(\al A)$ of an algebra $\al A\in\var{V}$ is called
\emph{abelian} if $[\alpha,\alpha]=0$, and the \emph{centralizer}
of a congruence $\alpha\in\Con(\al A)$, denoted $(0:\alpha)$, 
is the largest congruence $\gamma\in\Con(\al A)$ such that 
$[\alpha,\gamma]=0$.

Recall that $\var{V}$ has a \emph{difference term} 
(see \cite[Theorem~5.5]{FM:CTC}), which we will denote by $d$.
In the last two sections of this paper we will need 
some properties of abelian congruences, which can be summarized informally
as follows: the difference term $d$ induces abelian groups
on the blocks of all abelian congruences $\alpha$ of all algebras 
$\al A\in\var{V}$; moreover, 
the term operations of $\al A$
are `linear between the blocks' of $\alpha$ with respect to these
abelian groups. 
The theorem below gives a more precise formulation of these facts.

\begin{thmC}[{\cite[Section~9]{FM:CTC}}]
\label{thm-abelian-congr}
Let $\var{V}$ be a congruence modular variety with a difference term $d$,
let $\al A\in\var{V}$, and let $\alpha$ be an abelian congruence of $\al A$.
\begin{enumerate}
\item \label{it:abelian1}
For every $o\in A$, the $\alpha$-class containing $o$ forms 
an abelian group $(o/\alpha;+_o,-_o,o)$
with zero element $o$ for the operations
$+_o$ and $-_o$ defined by
\[
\qquad
x +_o y := d(x,o,y)
\quad\text{and}\quad
-_o x := d(o,x,o)
\quad\text{for all $x,y\in o/\alpha$.}
\]
\item \label{it:abelian2} 
For every term $g(x_1,\dots,x_k)$ in the language of $\var{V}$,
for arbitrary elements $o_1,\dots,o_k,o\in A$
such that $g(o_1,\dots,o_k)\equiv_\alpha o$, and for any tuple
$(a_1,\dots,a_k)\in(o_1/\alpha)\times\dots\times(o_k/\alpha)$,
\begin{multline*}
\qquad\quad
g(a_1,a_2,\dots,a_k)=g(a_1,o_2,\dots,o_k)+_o g(o_1,a_2,o_3,\dots,o_k)+_o\dots\\
+_o g(o_1,\dots,o_{k-1},a_k)-_o (k-1)g(o_1,o_2,\dots,o_k).
\quad
\end{multline*}
\end{enumerate}
\end{thmC}  

We will refer to the abelian groups described 
in statement (1) as the \emph{induced
abelian groups} on the $\alpha$-classes of $\al A$. 

For a cardinal $c$, 
a variety $\var{V}$ is called \emph{residually less than $c$} 
if every subdirectly irreducible algebra in $\var{V}$ has cardinality
$<c$; $\var{V}$ is called \emph{residually small} if it is residually 
less than some cardinal.

\begin{thmC}[{{}\cite{FM:CTC}}]
\label{thm-rs}
If $\al A$ is
a finite algebra that generates a congruence modular variety
$\var{V}(\al A)$, then
the following conditions are equivalent:
\begin{enumerate}
\item \label{it:rs}
 $\var{V}(\al A)$ is residually small.
\item \label{it:rs2}
$\var{V}(\al A)$ is residuall less than some natural number.
\item \label{it:si}
For every subdirectly irreducible algebra $\al S\in\HH\SSS(\al A)$
with abelian monolith $\mu$, the centralizer $(0:\mu)$ of $\mu$
is an abelian congruence of $\al S$.
\end{enumerate}
\end{thmC}

\begin{proof}
By~\cite[Theorem~10.15]{FM:CTC}, \eqref{it:rs} and \eqref{it:rs2} are both equivalent to
the condition that the commutator
identity
  $[x\wedge y,y]= x\wedge[y,y]$
holds in the congruence lattice of every subalgebra of $\al A$.
The latter is equivalent to~\eqref{it:si} by~\cite[p. 422]{FM:RSV}.
\end{proof}

Now let $\class{K}=\{\al A_1,\dots,\al A_n\}$ be a finite set of finite
algebras in a congruence modular variety,
and let $\var{V}(\class{K})$ be the variety generated by $\class{K}$.
Then $\var{V}(\class{K})$ is the join of the varieties $\var{V}(\al A_i)$
for $i\leq n$.
Hence, by~\cite[Theorem 11.1]{FM:CTC},
$\var{V}(\class{K})$ is residually small iff $\var{V}(\al A_i)$
is residually small for all $i\leq n$.
Thus we get the following corollary.
 
\begin{cor}
\label{cor-rs}
Let $\class{K}$ be a finite set of finite algebras in a congruence
modular variety. 
The variety generated by $\class{K}$ is residually small if and only if
for every subdirectly irreducible algebra $\al S\in\HH\SSS(\class{K})$
with abelian monolith $\mu$, the centralizer $(0:\mu)$ of $\mu$
is an abelian congruence of $\al S$.
\end{cor}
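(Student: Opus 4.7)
The plan is to derive this corollary directly from Theorem~\ref{thm-rs}, which handles the single-algebra case, by using the fact that residual smallness is well-behaved under finite joins of congruence modular varieties. Writing $\class{K}=\{\al A_1,\dots,\al A_n\}$, I would view $\var{V}(\class{K})$ as the join $\var{V}(\al A_1)\vee\cdots\vee\var{V}(\al A_n)$ in the lattice of subvarieties of the ambient congruence modular variety, because every identity satisfied by all $\al A_i$ is satisfied by $\class{K}$ and vice versa.

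By \cite[Theorem 11.1]{FM:CTC}, a finite join of congruence modular varieties is residually small if and only if each summand is residually small. Hence $\var{V}(\class{K})$ is residually small iff $\var{V}(\al A_i)$ is residually small for every $i\in[n]$. Applying Theorem~\ref{thm-rs} to each $\al A_i$ (which is a finite algebra in a congruence modular variety), this in turn is equivalent to the condition that for every $i\in[n]$ and every subdirectly irreducible $\al S\in\HH\SSS(\al A_i)$ with abelian monolith $\mu$, the centralizer $(0:\mu)$ is an abelian congruence of $\al S$.

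Finally I would observe that $\HH$ and $\SSS$ distribute over finite set unions, so
\[
\HH\SSS(\class{K})=\bigcup_{i=1}^{n}\HH\SSS(\al A_i).
\]
Indeed, any subalgebra of some member of $\class{K}$ is a subalgebra of a single $\al A_i$, and any homomorphic image thereof lies in $\HH\SSS(\al A_i)$. Consequently, quantifying the centralizer condition over $\HH\SSS(\class{K})$ is exactly the conjunction of the per-$\al A_i$ conditions, finishing the equivalence.

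There is no real obstacle here beyond correctly invoking the two Freese–McKenzie ingredients (Theorem 11.1 on finite joins and the equivalences packaged in Theorem~\ref{thm-rs}); the rest is formal manipulation of the class operators $\HH$ and $\SSS$, which commute with finite unions of algebras.
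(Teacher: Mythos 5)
Your proposal is correct and follows essentially the same route as the paper: the paper likewise writes $\var{V}(\class{K})$ as the join of the varieties $\var{V}(\al A_i)$, invokes \cite[Theorem 11.1]{FM:CTC} to reduce residual smallness of the join to that of each $\var{V}(\al A_i)$, and then applies Theorem~\ref{thm-rs} algebra by algebra, with $\HH\SSS(\class{K})=\bigcup_i\HH\SSS(\al A_i)$ implicit. No gaps.
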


In conclusion of this section,
we introduce a slightly technical notion, \emph{similarity},
which is an important equivalence relation on the class of subdirectly
irreducible algebras in a congruence modular variety and will
play a role in Sections~\ref{sec-subpowers}--\ref{sec-str-alg}.
However, a reader of this paper may safely regard the definition
of similarity as a `black box', since similarity only occurs as a result
of using the structure theorem from \cite{KS:CAPT}
(see Theorem~\ref{thm-paralg}), 
and in the algorithms in Section~\ref{sec-str-alg},
similarity is applied only to
a finite set of finite subdirectly irreducible algebras, therefore
checking similarity between them does not affect the complexity of
the algorithms. 

To define similarity, let $\var{V}$ be a congruence modular variety.
\emph{Similarity} is a binary relation defined on the class of subdirectly
irreducible algebras in $\var{V}$.
The definition may be found in 
\cite[Definition~10.7]{FM:CTC}, and shows that similarity is
an equivalence relation.
To decide whether two subdirectly irreducible algebras in $\var{V}$
are similar it is more convenient to use the following
characterization given in \cite[Theorem~10.8]{FM:CTC}:
two subdirectly irreducible algebras 
$\al B, \al C\in \var{V}$ are similar if and only if
\label{p-similar-alg}
there exists an algebra $\al E\in \mathcal V$ (which can be taken
to be a subdirect subalgebra of $\al B\times\al C$) and there exist
congruences $\beta, \gamma, \delta, \epsilon\in\Con(\al E)$ such that 
$\al E/\beta\cong \al B$, $\al E/\gamma\cong \al C$ and there
is a projectivity
$\interval{\beta}{\beta^+}\searrow
  \interval{\delta}{\epsilon}\nearrow\interval{\gamma}{\gamma^+}$
in $\Con(\al E)$, 
where $\beta^+$ and $\gamma^+$ are the unique upper covers
of $\beta$ and $\gamma$, respectively.
Here $\interval{\beta}{\beta^+}\searrow
  \interval{\delta}{\epsilon}\nearrow\interval{\gamma}{\gamma^+}$
  denotes that $\beta\wedge\epsilon = \delta, \beta\vee\epsilon = \beta^+$ and
  $\epsilon\wedge\gamma = \delta, \epsilon\vee\gamma = \gamma^+$.

\section{Compact Representations}
\label{sec-cRep}

Throughout this section
we will use the following global assumptions:

\begin{asm} \hfill
  \begin{itemize}
  \item
    $\var{V}$ is a variety with a $\ddd$-cube term ($\ddd>1$),
  \item
    $P$ is a $(1,\ddd-1)$-parallelogram term in $\var{V}$ (the existence
    of such a term is ensured by Theorem~\ref{thm-cube-par}), and
  \item
    $s(x_1,\ldots,x_\ddd)$, $p(x,u,y)$, $x^y$, and $s^\ell(x_1,\ldots,x_\ddd)$
    are the terms defined in \eqref{eq-ps} and \eqref{eq-xy}--\eqref{eq-sl}.
  \end{itemize}  
\end{asm}

We will not assume that the algebras we are considering are finite,
or have a finite language.
Therefore, unless finiteness is
assumed explicitly, all statements and
results hold for arbitrary algebras.

Terms in the language of $\var{V}$ which can be expressed 
using $P$ only, will be referred to as \emph{$P$-terms}.
For example, the terms $s(x_1,\ldots,x_\ddd)$, $p(x,u,y)$,
$x^y$, and $s^\ell(x_1,\ldots,x_\ddd)$ are $P$-terms. 
By a \emph{$P$-subalgebra} of an algebra $\al A\in\var{V}$
we mean a subalgebra of the reduct of $\al A$ to the language $\{P\}$.
We will say that an algebra $\al A\in\var{V}$ is \emph{$P$-generated}
by $R\subseteq A$ if $R$ is a generating set for the reduct of $\al A$ 
to the language $\{P\}$; or equivalently, if
every element of $\al A$ is of the form
$t(r_1,\dots,r_m)$ for some $m\ge0$, some elements 
$r_1,\dots,r_m\in R$, and some $m$-ary $P$-term $t$.
The $P$-subalgebra of an algebra $\al A\in\var{V}$ generated by a set
$S\,(\subseteq A)$ will be denoted by $\langle S\rangle_P$.
Recall that circuit representations of $P$-terms are called \emph{$P$-circuits},
as they use only gates of type $P$.

Now we will introduce a variant of the concept of `compact 
representation' from \cite{BI:VFSP}. One difference is that
we will use a less restrictive notion 
of `fork' than `minority fork', because we want to avoid assuming 
finiteness of the algebras considered unless finiteness is necessary for the
conclusions. 
Another difference is that we
will consider subalgebras of products of algebras,
rather than subalgebras of powers of a single algebra.
Let $\al A_1,\dots,\al A_n\in\var{V}$, and let 
$B\subseteq A_1\times\dots\times A_n$.
For $i\in[n]$ and $\gamma,\delta\in A_i$
we will say that
$(\gamma,\delta)$ is a \emph{fork in the $i$-th coordinate of $B$} if
there exist $b,\hat{b}\in B$ such that
\begin{equation}
\label{eq-fork}
b|_{[i-1]}=\hat{b}|_{[i-1]}
\quad\text{and}\quad
b|_i=\gamma,\ \ \hat{b}|_i=\delta.
\end{equation}
The set of all forks in the $i$-th
coordinate of $B$ will be denoted by $\fork_i(B)$.
Tuples $b,\hat{b}\in B$ satisfying \eqref{eq-fork}
will be referred to as
\emph{witnesses} for the fork $(\gamma,\delta)\in\fork_i(B)$. 
For each $i$ and $B$ as above and for every positive integer $e$, we define
\[
\fork^e_i(B):=\{(\gamma,\delta^{\gamma^e}):(\gamma,\delta)\in\fork_i(B)\}
\]
where $\delta^{\gamma^e}$ is a short notation for
$(\dots((\delta^\gamma)^\gamma)\dots)^\gamma$ with
$e$ occurrences of $\gamma$.
The elements of $\fork^e_i(B)$ will be called \emph{$e$-derived forks
in the $i$-th coordinate of $B$}. In the case when $e=1$ we will use the 
notation $\fork'_i(B)$ instead of $\fork^1_i(B)$, and will call the
elements of $\fork'_i(B)$ \emph{derived forks
in the $i$-th coordinate of $B$}.
The next lemma shows
that derived forks are indeed forks, and they are `transferable',
which does not hold for forks in general.

\begin{lem}
\label{lm-der-forks}
If $\al A_1,\ldots,\al A_n\in \var{V}$ and
$\al B$ is a $P$-subalgebra of 
$\al A_1\times\dots\times\al A_n$, then
\begin{enumerate}
\item \label{it:fe}
$\fork_i(B)\supseteq\fork'_i(B)\supseteq 
\dots \supseteq\fork^e_i(B)\supseteq\fork^{e+1}_i(B)\supseteq\dots$ 
for all $i\in[n]$ and $e\ge1$; moreover,
\item  \label{it:df} 
for every $(\gamma,\delta)\in\fork'_i(B)$ and for every
$b\in B$ with $b|_i=\gamma$,
there is an element $\hat{b}\in B$ such that \eqref{eq-fork} holds,
that is, $b$ and $\hat{b}$ witness that $(\gamma,\delta)\in\fork_i(B)$.
\end{enumerate} 
\end{lem}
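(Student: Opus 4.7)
The plan is to prove statement (2) first, since it supplies both the containment $\fork'_i(B)\subseteq\fork_i(B)$ and the inductive step needed for statement (1). The key algebraic input is the pair of identities $p(x,x,y)=y$ and $p(x,y,y)=x^y$ from \eqref{eq-Pps}--\eqref{eq-xy}: the first lets $p$ act like a Mal'tsev operation in its first two slots (so it can ``erase'' an agreement between two tuples), while the second manufactures a derived fork at the desired coordinate. Since $p$ is a $P$-term, any $P$-subalgebra $\al B$ is closed under $p$.

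For statement (2), I would take $(\gamma,\delta)\in\fork_i(B)$ with witnesses $c,\hat{c}\in B$, so $c|_{[i-1]}=\hat{c}|_{[i-1]}$, $c|_i=\gamma$, $\hat{c}|_i=\delta$, and a given $b\in B$ with $b|_i=\gamma$. Set
\[
\hat{b}\;:=\;p(\hat{c},c,b)\in B.
\]
Coordinatewise computation yields, for $j<i$,
\[
\hat{b}|_j\;=\;p(\hat{c}|_j,c|_j,b|_j)\;=\;p(c|_j,c|_j,b|_j)\;=\;b|_j
\]
by $p(x,x,y)=y$; and at coordinate $i$,
\[
\hat{b}|_i\;=\;p(\delta,\gamma,\gamma)\;=\;\delta^{\gamma},
\]
by the definition $x^y=p(x,y,y)$. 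Thus $b,\hat{b}$ witness $(\gamma,\delta^\gamma)\in\fork_i(B)$, which is precisely what (2) asserts for an arbitrary derived fork $(\gamma,\delta^\gamma)\in\fork'_i(B)$.

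For statement (1), applying the construction above with $b:=c$ already shows $\fork'_i(B)\subseteq\fork_i(B)$. The remaining containments $\fork^{e}_i(B)\supseteq\fork^{e+1}_i(B)$ follow by a single-step reindexing: any element of $\fork^{e+1}_i(B)$ has the form $(\gamma,\delta^{\gamma^{e+1}})$ with $(\gamma,\delta)\in\fork_i(B)$, and setting $\delta':=\delta^{\gamma}$ we have $(\gamma,\delta')\in\fork_i(B)$ by what was just shown, hence
\[
(\gamma,\delta^{\gamma^{e+1}})\;=\;(\gamma,(\delta')^{\gamma^{e}})\;\in\;\fork^{e}_i(B).
\]

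I do not anticipate a genuine obstacle here; the only nonobvious step is spotting the right combinator. The mild subtlety is that ordinary forks are not transferable (one cannot in general replace the $(i-1)$-prefix of a witnessing pair by an arbitrary compatible prefix), whereas for derived forks transferability comes essentially for free once $p$ is available, because $p(x,x,y)=y$ lets the prefix of $b$ override the common prefix of $\hat{c},c$ while $p(x,y,y)=x^y$ produces precisely the ``decorated'' second component $\delta^\gamma$ at coordinate $i$.
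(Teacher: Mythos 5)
Your proof is correct and follows essentially the same route as the paper: the same element $\hat{b}=p(\hat{c},c,b)$ with the identities $p(x,x,y)=y$ and $p(x,y,y)=x^y$ proves statement (2) and hence $\fork'_i(B)\subseteq\fork_i(B)$, and the remaining inclusions $\fork^{e}_i(B)\supseteq\fork^{e+1}_i(B)$ are handled by the same one-step reindexing $(\gamma,\delta^{\gamma^{e+1}})=(\gamma,(\delta^\gamma)^{\gamma^e})$ that the paper invokes.
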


\begin{proof}
For~\eqref{it:df} let $(\gamma,\delta)\in\fork'_i(B)$ and let
$b\in B$ with $b|_i=\gamma$. Then there exists 
$(\gamma,\beta)\in\fork_i(B)$ such that $\delta=\beta^\gamma$.
Let $c,\hat{c}\in B$ be witnesses for $(\gamma,\beta)\in\fork_i(B)$; thus,
$c|_{[i-1]}=\hat{c}|_{[i-1]}$ and $c|_i=\gamma$, 
$\hat{c}|_i=\beta$. It follows from the identities in \eqref{eq-Pps}
that for the element $\hat{b}:=p(\hat{c},c,b)\in B$ 
we have
\[
\hat{b}|_{[i-1]}=
p(\hat{c}|_{[i-1]},c|_{[i-1]},b|_{[i-1]})=b|_{[i-1]},
\] 
and
\[
\hat{b}|_i=
p(\hat{c}|_i,c|_i,b|_i)=
p(\beta,\gamma,\gamma)=\beta^\gamma=\delta.
\]
This proves~\eqref{it:df}, and also the inclusion
$\fork_i(B)\supseteq\fork'_i(B)$
in~\eqref{it:fe}. The inclusion $\fork^e_i(B)\supseteq\fork^{e+1}_i(B)$
for any $e\ge1$
follows by the same argument, using 
$\delta=\beta^{\gamma^{e+1}}=(\beta^{\gamma^e})^\gamma$ and $\beta^{\gamma^e}$
in place of $\delta=\beta^\gamma$ and $\beta$.
\end{proof}

The following `weak transitivity rule' for forks will also be
useful.

\begin{lem}
\label{lm-newfork}
Let $\al A_1,\dots,\al A_n\in\var{V}$, and let $\al B$
be a $P$-subalgebra of $\al A_1\times\dots\times\al A_n$.
If $(\gamma,\delta)$ and 
$(\beta,\delta)$ are forks in $\fork_i(B)$
witnessed by the pairs
$(v,\hat{v})$ and $(u,\hat{u})$ in $\al B$,
respectively, 
then the pair
\begin{equation}
\label{eq-newfork}
\Bigl( p(p(v,\hat{v},\hat{u}),p(v,\hat{v},\hat{v}),v),\ \ p(u,v,v)\Bigr)
\end{equation}
in $\al B$
is a witness for the fork $(\gamma,\beta^\gamma)\in\fork_i(B)$.
\end{lem}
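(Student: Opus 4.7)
The plan is a direct coordinate-wise verification. Write $L := p(p(v,\hat{v},\hat{u}),p(v,\hat{v},\hat{v}),v)$ and $R := p(u,v,v)$ for the two tuples in~\eqref{eq-newfork}. Since $p$ is a $P$-term and $\al{B}$ is a $P$-subalgebra of $\al{A}_1\times\dots\times\al{A}_n$, both $L$ and $R$ lie in $B$. It remains to check that $L|_{[i-1]} = R|_{[i-1]}$ and that $L|_i = \gamma$, $R|_i = \beta^\gamma$, which is exactly the fork condition~\eqref{eq-fork} for $(\gamma,\beta^\gamma)\in\fork_i(B)$.

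For coordinates $[i-1]$, I would set $a := v|_{[i-1]} = \hat{v}|_{[i-1]}$ and $c := u|_{[i-1]} = \hat{u}|_{[i-1]}$ and then repeatedly apply the identity $p(x,x,y) = y$ from~\eqref{eq-Pps}: the inner terms evaluate as $p(v,\hat{v},\hat{u})|_{[i-1]} = p(a,a,c) = c$ and $p(v,\hat{v},\hat{v})|_{[i-1]} = p(a,a,a) = a$, so $L|_{[i-1]} = p(c,a,a) = R|_{[i-1]}$. For coordinate $i$, the second tuple is immediate: $R|_i = p(\beta,\gamma,\gamma) = \beta^\gamma$ by the definition of $x^y$ in~\eqref{eq-xy}. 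For $L$, both inner applications collapse to the same value at $i$, namely $p(\gamma,\delta,\delta) = \gamma^\delta$; a further application of $p(x,x,y) = y$ then yields $L|_i = p(\gamma^\delta,\gamma^\delta,\gamma) = \gamma$, as required.

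There is no real obstacle here; the content of the lemma is in the cleverness of the expression~\eqref{eq-newfork} rather than in its verification. The only thing one must recognize is that the two inner arguments of the outer $p$ in $L$ are designed to agree on coordinate $i$ (both producing $\gamma^\delta$), so that $p(x,x,y) = y$ can strip them away and leave $\gamma$, while on coordinates $[i-1]$ the same inner $p$'s automatically reduce to $u|_{[i-1]}$ and $v|_{[i-1]}$, making $L|_{[i-1]}$ equal to $R|_{[i-1]}$. Neither the existence of derived forks from Lemma~\ref{lm-der-forks} nor any further identity from~\eqref{eq-Pps} is needed.
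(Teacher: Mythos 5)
Your proof is correct and follows essentially the same route as the paper's: both are direct coordinate-wise verifications using only the identity $p(x,x,y)=y$ from \eqref{eq-Pps} (plus the definition $x^y=p(x,y,y)$), with the same collapse of the two inner arguments to the common value $p(\gamma,\delta,\delta)$ in coordinate $i$ and the same reduction to $p(u,v,v)|_{[i-1]}$ on the first $i-1$ coordinates. Nothing is missing.
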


\begin{proof}
The choice of $u,\hat{u},v,\hat{v}$ implies that $u|_{[i-1]}=\hat{u}|_{[i-1]}$,
$v|_{[i-1]}=\hat{v}|_{[i-1]}$, and $u|_i=\beta$, $\hat{u}|_i=\delta=\hat{v}|_i$,
$v|_i=\gamma$. Hence,
\begin{multline*}
p(p(v,\hat{v},\hat{u}),p(v,\hat{v},\hat{v}),v)|_{[i-1]}\\
=p(p(v|_{[i-1]},\hat{v}|_{[i-1]},\hat{u}|_{[i-1]}),p(v|_{[i-1]},\hat{v}|_{[i-1]},\hat{v}|_{[i-1]}),
                                         v|_{[i-1]})\\
=p(u|_{[i-1]},v|_{[i-1]},v|_{[i-1]})=p(u,v,v)|_{[i-1]}
\end{multline*}
and
\begin{align*}
p(p(v,\hat{v},\hat{u}),p(v,\hat{v},\hat{v}),v)|_i
&{}=p(p(\gamma,\delta,\delta),p(\gamma,\delta,\delta),\gamma)=\gamma,\\
p(u,v,v)|_i
&{}=p(\beta,\gamma,\gamma)=\beta^\gamma.
\end{align*}
Clearly, $u,\hat{u},v,\hat{v}\in\al B$
implies that the pair \eqref{eq-newfork}
also lies in $\al{B}$,
so the proof of the lemma is complete.
\end{proof}

\begin{defi}
\label{df-rep}
For two sets $B,R\subseteq A_1\times\dots\times A_n$,
we will say that $R$ is a \emph{$(\ddd,e)$-representation for $B$}
if the following three conditions are met:
\begin{enumerate}
\item \label{it:B} 
$R\subseteq B$;
\item \label{it:I} 
$R|_I=B|_I$ for all $I\subseteq[n]$ with $|I|<\ddd$;
\item \label{it:Fi} 
$\fork_i(R)\supseteq\fork^e_i(B)$ for all $i\in[n]$.
\end{enumerate}
If $e=1$ and the parameter $\ddd$ of the cube term of $\var{V}$
is clear from the context, then reference to $(\ddd,e)$
will be omitted.
\end{defi}

In the special case when $\al A_1,\dots,\al A_n$ are members of
a fixed finite set $\class{K}$ of finite algebras
in $\var{V}$, and the maximum size of an algebra in $\class{K}$
is $\aaa_{\class{K}}$, then
it is easy to see that every set $B\subseteq A_1\times\dots\times A_n$
has a $(\ddd,e)$-representation $R$ of size
\[
|R|\le\sum_{I\subseteq[n],|I|<d}\bigl|B|_I\bigr|+\sum_{i\in[n]}\fork^e_i(B)
\le\binom{n}{\ddd-1}\aaa_{\class{K}}^{\ddd-1}+2n\aaa_{\class{K}}^2.
\]
A $(\ddd,e)$-representation $R$ for $B$ of size
$|R|\le \binom{n}{\ddd-1}\aaa_{\class{K}}^{\ddd-1}+2n\aaa_{\class{K}}^2$
is called a 
\emph{compact $(\ddd,e)$-representation} for $B$.

\begin{rem}
  \label{rm-minfork}
If $\class{K}$ is a finite set of finite algebras in $\var{V}$, then
there exists an $e\ge1$ such that $\class{K}\models (x^{y^e})^{y^e}=x^{y^e}$.
For such an $e$, the terms $d(x,y)=y^{x^e}$, 
$p(z,y,x)^{x^{e-1}}$ and $s^e(x_1,\dots,x_\ddd)$ satisfy the identities
in \cite[Lemma~2.13]{BI:VFSP}. 
Consequently, for this $e$,
our $e$-derived forks and
(compact) $(\ddd,e)$-representations 
are exactly the minority forks (called minority indices), 
and the (compact) representations with minority forks defined in 
\cite[Definitions~3.1--3.2]{BI:VFSP}.
\end{rem}

It was proved in \cite[Lemma 3.4, Theorem~3.6]{BI:VFSP} that for any
subalgebra $\al B$ of a finite power of a finite algebra $\al A\in\var{V}$,
a compact representation (with minority forks) generates $\al B$.  
The next theorem
is essentially the same result, with several significant 
differences, which we will discuss 
in Remark~\ref{rm-differences} below.

\begin{thm}
\label{thm-rep-generates}
Let $e$ be a fixed positive integer.
For every integer $n\ge\ddd$
there exists a $P$-term $T_n$ satisfying
the following conditions:
\begin{enumerate}
\item \label{it:repgen1}
  If $\al B$ is a $P$-subalgebra of a product 
$\al A_1\times\dots\times\al A_n$ of finitely many algebras
 $\al A_1,\dots,\al A_n\in\var{V}$,
 and $R$ is a $(\ddd,e)$-representation for $\al B$, then
 every element of $\al B$ is produced by a single application 
 of $T_n$ to some elements of $R$.
 Consequently, $\al B$ is $P$-generated by $R$.
\item \label{it:repgen2}
  The size of the $P$-circuit
  $\crc{T_n}$ is $O(en^{\ddd+1})$,
 and there is an algorithm that runs in time $O(en^{\ddd+1})$,
 which outputs $\crc{T_n}$ for any given $e$ and $n$.
\end{enumerate}
\end{thm}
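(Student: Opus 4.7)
I would prove the theorem by constructing $T_n$ and its minimal $P$-circuit $\crc{T_n}$ recursively on $n$, while showing in parallel that for every $b\in\al B$ one can assign elements of $R$ to the variables of $T_n$ in a way that makes $T_n$ evaluate to $b$. At the bottom of the recursion, $n=\ddd$, condition~\eqref{it:I} of Definition~\ref{df-rep} supplies some $r\in R$ with $r|_{[\ddd-1]}=b|_{[\ddd-1]}$, and a single coordinate-$\ddd$ correction turns $r$ into $b$. For the inductive step $n-1\to n$, I would first apply $T_{n-1}$ to suitable elements of $R$ viewed as full $n$-tuples, obtaining $c\in\langle R\rangle_P$ with $c|_{[n-1]}=b|_{[n-1]}$, and then invoke a coordinate-$n$ correction to hit $b|_n$ as well.

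The correction at coordinate $n$ relies on condition~\eqref{it:Fi} of Definition~\ref{df-rep}: if $c|_n\neq b|_n$, then $(c|_n,b|_n)\in\fork_n(B)$, so the $e$-derived fork $(c|_n,(b|_n)^{(c|_n)^e})\in\fork_n^e(B)\subseteq\fork_n(R)$ is witnessed by some $u,\hat u\in R$. Then the $P$-term $p(\hat u,u,c)$ preserves coordinates $[n-1]$ because $u|_{[n-1]}=\hat u|_{[n-1]}$ and $p(z,z,y)=y$ by~\eqref{eq-Pps}, while it changes only coordinate $n$. Composing a constant (in $n$) number of such $p$-applications---or, equivalently, a single $s^\ell$-gate from~\eqref{eq-sl} with $\ell=O(e)$---lands exactly on $b|_n$; the weak transitivity of forks from Lemma~\ref{lm-newfork} is what legitimates recombining several witnessed $e$-derived forks into the fork $(c|_n,b|_n)$ we actually need.

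The main obstacle is to certify that the coordinate-$n$ correction reaches $b|_n$ exactly, and not merely some iterate of $b|_n$ under the unary map $x\mapsto x^{c|_n}$. I would overcome this by leveraging the choice of the fixed integer $e$ in the style of Remark~\ref{rm-minfork}: once $x\mapsto x^{y^e}$ is idempotent on all algebras in sight, each $e$-derived fork witnessed by $R$ is ``stable,'' so one more $p$-step reaches $b|_n$ precisely. For the complexity estimate in~\eqref{it:repgen2}, $\crc{T_n}$ is assembled by appending to $\crc{T_{n-1}}$ a correction gadget of size $O(e)$ and feeding it the $O(n^{\ddd-1})$ auxiliary inputs required for the base-case match on all $(\ddd-1)$-subsets; a careful accounting of this unrolled recursion produces the stated $O(en^{\ddd+1})$ bound both for the size of $\crc{T_n}$ and for the running time of the algorithm that emits it.
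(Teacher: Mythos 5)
Your overall skeleton — build $b$ coordinate by coordinate, composing a per-coordinate term with the previously built approximation — is the same as the paper's (its $T_m=t_m(T_{m-1},\dots)$ recursion), but the per-coordinate correction you propose does not work in a general cube-term variety, and this is exactly the hard point of the theorem. From witnesses $u,\hat u\in R$ of an $e$-derived fork, an application of $p$ (or $s^\ell$) to $c$ with $c|_{[n-1]}=b|_{[n-1]}$, $c|_n=\beta$, $b|_n=\gamma$ only produces last-coordinate values of the form $\gamma^{\beta^{e+1}}$, $\beta^{\gamma^{e+1}}$, or further such iterates — never $\gamma$ itself; and Lemma~\ref{lm-newfork} only ever outputs forks of the shape $(\gamma,\beta^\gamma)$, so ``weak transitivity'' cannot manufacture the exact fork $(c|_n,b|_n)$ you say you need. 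Your proposed repair via Remark~\ref{rm-minfork} is not available under the theorem's hypotheses: the statement is for arbitrary (possibly infinite) algebras of $\var{V}$ and an arbitrary \emph{fixed} $e$ (the paper later invokes it with $e=1$ and $e=2$), and idempotency of $x\mapsto x^{y^e}$ would not help anyway, since it gives $(\gamma^{\beta^e})^{\beta^e}=\gamma^{\beta^e}$, not $\gamma^{\beta^e}=\gamma$. A concrete failure: if the $\ddd$-cube term comes from a near-unanimity term, then $p(x,u,y)=y$ identically, so your gadget $p(\hat u,u,c)=c$ never changes coordinate $n$ at all (while $x\mapsto x^{y^e}$ is trivially idempotent there), yet the theorem is true in that case.

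What the actual proof has to do (the paper's Lemma~\ref{lm-terms}) is combine, at \emph{each} coordinate $m$, the fork-witness pair with $\ddd-1$ auxiliary approximations, each agreeing with $b$ on all but one coordinate of a sliding $(\ddd-1)$-window taken from the local witnesses guaranteed by $R|_I=B|_I$, deliberately overshoot to $\beta^{\gamma^{e+1}}$ via $s^{e+1}$ and $p$, and then cancel that overshoot in one application of the $(1,\ddd-1)$-parallelogram term $P$ using its defining identities. So condition~\eqref{it:I} of Definition~\ref{df-rep} is used at every level of the recursion, not only at the base case as in your accounting; correspondingly each per-coordinate term has circuit size $O(en^{\ddd})$, which is where the $O(en^{\ddd+1})$ bound for $\crc{T_n}$ comes from — your constant-size-per-coordinate gadget would give a strictly smaller bound, another sign that the single-coordinate $p$-correction is too weak outside the Mal'tsev ($\ddd=2$) case, where indeed $x^y=x$ and your argument would go through.
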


Note that the $P$-terms $T_n$ depend on the parameter $e$, but to simplify
notation, we decided to suppress this dependence in the notation
of $T_n$ (and in the notation of the terms
needed to build up $T_n$, see Lemma~\ref{lm-terms}).
In this paper we will use Theorem~\ref{thm-rep-generates} for
$e=1$ and $e=2$ only.

\begin{rem}
  \label{rm-differences}
The main differences between 
\cite[Lemma 3.4, Theorem~3.6]{BI:VFSP} and
Theorem~\ref{thm-rep-generates} are as follows:
\begin{itemize}
\item
As we explained at the beginning of this section, we consider
subalgebras of direct products of finitely many different algebras 
rather than just subalgebras of finite direct powers
of the same algebra.
In addition, we don't assume that the algebras are finite,
therefore we use a less restrictive notion of a fork than
the notion used in \cite{BI:VFSP}, cf.~Remark~\ref{rm-minfork}.
\item
Given $e$, we produce a $P$-term $T_n$ for every $n$,
which describes a uniform way for obtaining
each element of a $P$-subalgebra $\al B$ of
a product $\al A_1\times\dots\times\al A_n$ in one step
from members of an arbitrary $(\ddd,e)$-representation for $\al{B}$.  
Since we are interested in algorithmic applications, 
the advantage of finding these terms $T_n$ explicitly is that ---
via the minimal circuit representation of $T_n$ --- we can
get a useful upper bound on the length of computations needed to generate 
elements of $\al B$ from 
members of a $(\ddd,e)$-representation for $\al B$.
\end{itemize}
\end{rem}

First we will prove the following lemma.

\begin{lem}
\label{lm-terms}
Let $e$ be a fixed positive integer.
For every integer $n\ge \ddd$ there exists a $P$-term
$t_n=t_n\bigl(x,y,z,\bar{w_I}\bigr)$,
where
$\bar{w_I}:=(w_I)_{I\in\binom{[n]}{\ddd-1}}$
is a tuple of variables indexed by
all $(\ddd-1)$-element subsets of $[n]$,
such that the following hold:
\begin{enumerate}
\item\label{it:terms1}
For every subset $R$ of a product 
$\al A_1\times\dots\times\al A_n$
with $\al A_1,\dots,\al A_n\in\var{V}$, and for every
tuple
$b=(b_1,\dots,b_{n-1},\gamma)$ in $\al A_1\times\dots\times\al A_n$,
if
\begin{enumerate}
\item \label{itbI}
for each $I\in\binom{[n]}{\ddd-1}$ the set $R$ contains a tuple
$b^I$ 
satisfying
$b^I|_I=b|_I$, and
\item \label{itfm}
for some element $b'=(b_1,\dots,b_{n-1},\beta)$ of
the $P$-subalgebra
$\al R^*:=\langle R\rangle_P$
of $\al A_1\times\dots\times\al A_n$,
the set $R$ contains tuples
$u=(u_1,\dots,u_{n-1},\gamma)$ and $\hat{u}=(u_1,\dots,u_{n-1},\beta^{\gamma^e})$ 
which are witnesses for the fork
 $(\gamma,\beta^{\gamma^e})\in\fork_n(R)$,
\end{enumerate}
then 
\begin{equation}
\label{eq-paral}
b=t_n\bigl(b',\hat{u},u,\bar{b^I}\bigr)
\quad\text{where}\quad \bar{b^I}:=(b^I)_{I\in\binom{[n]}{\ddd-1}},
\end{equation}
and therefore $b$ is in $\al R^*$.
\item
  \label{it:terms2}
The size of the $P$-circuit $\crc {t_n}$ is $O(en^\ddd)$, and
there is an algorithm that runs in time $O(en^\ddd)$, which
outputs $\crc{t_n}$ for any given $e$ and $n$.
\end{enumerate}
\end{lem}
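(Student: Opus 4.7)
My plan is to construct $t_n$ by a carefully staged combination of the parallelogram term $P$ with its derived terms from \eqref{eq-ps} and \eqref{eq-sl}. The guiding principle is that the identities in \eqref{eq-Pps}, which specialize instances of the $(1,\ddd-1)$-parallelogram identity, force $P$ to reproduce one of its first three arguments whenever the right-hand block of its input is sufficiently near-unanimous at a given coordinate. Schematically, I would aim for a term of the form
\[
t_n(b',\hat u,u,\bar{b^I}) \;=\; P\bigl(X,\;\hat u,\;u,\;w_1,\ldots,w_\ddd\bigr),
\]
where each right-block tuple $w_k$ is a $P$-combination of the covers $b^I$ designed to appear "almost unanimous" at every coordinate $i<n$, and $X$ is built from $b'$ by a length-$e$ iteration using $s^e$ whose purpose is to handle the anomalous coordinate $n$.

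For coordinates $i<n$ the first three slots of $P$ present the pattern $(b_i,u_i,u_i)$, which matches the left pattern $(y,x,x)$ triggering rows $2,\ldots,\ddd$ of the parallelogram identity; that identity returns $b_i$ provided at least $\ddd-1$ of the values $w_k|_i$ equal $b_i$. I would assemble the $w_k$'s by walking through the subsets $I\in\binom{[n]}{\ddd-1}$ and combining the tuples $b^I$ via nested $P$-gates, each step using one $b^I$ to lock the coordinates in $I$ to $b|_I$. Because there are $\binom{n}{\ddd-1}=O(n^{\ddd-1})$ such subsets and the $\ddd$ right-block slots can share subcircuits, this contributes $O(n^{\ddd-1})$ $P$-gates; one additional factor of $n$ arises from coordinating the cover choices across all $n$ coordinates, and the $s^e$ wrapper multiplies the bound by $O(e)$, yielding the claimed total $O(en^\ddd)$ for the size of $\crc{t_n}$. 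The algorithm producing $\crc{t_n}$ for given $e$ and $n$ is the straightforward recursive walk through the subsets $I$ and the $e$ outer $s$-layers, and runs within the same bound.

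The principal obstacle lies at coordinate $n$, where the first three slots of $P$ evaluate to $(\beta,\beta^{\gamma^e},\gamma)$, matching neither $(x,x,y)$ nor $(y,x,x)$; the naked parallelogram identity therefore fails to output $\gamma$ there. This is precisely what the $s^e$-iteration in the first variable is designed to fix: by the identity $s(x,y,y,\ldots,y)=x^y$ from \eqref{eq-Pps}, each successive $s$-layer advances the first argument at coordinate $n$ by one step along the chain $\beta\mapsto\beta^\gamma\mapsto\cdots\mapsto\beta^{\gamma^e}$, so after $e$ iterations the first two slots coincide at coordinate $n$ and row~$1$ of the parallelogram identity finally delivers $\gamma$. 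Crucially, since $s$ is near-unanimous in its last $\ddd-1$ arguments (again by \eqref{eq-Pps}), the outer $s$-layers leave the values already obtained at coordinates $i<n$ unchanged, so the two analyses do not interfere. Lemma~\ref{lm-newfork} provides the combinatorial flexibility to repackage fork witnesses as the construction unfolds, bridging the residual edge cases; verifying that this synchronization is consistent across all $n$ coordinates simultaneously is the technical heart of the proof.
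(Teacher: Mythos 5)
Your proposal assembles the right raw ingredients (the parallelogram identities, the fact that $s$ returns its first argument when the remaining arguments are unanimous around it, and the idea of using an $s$-iteration to push the value at coordinate $n$ along the chain $\beta\mapsto\beta^\gamma\mapsto\dots\mapsto\beta^{\gamma^{e}}$), but the actual construction of $t_n$ --- which is the entire content of the lemma --- is not carried out, and the single-application shape $P\bigl(X,\hat u,u,w_1,\dots,w_\ddd\bigr)$ begs the question. For the outer $P$ to return $b_i$ at every coordinate $i<n$ via the rows with left pattern $(y,x,x)$, you need tuples $w_1,\dots,w_\ddd\in\langle R\rangle_P$ such that at \emph{every} coordinate $i<n$ at most one $w_k$ disagrees with $b$; such tuples agree with $b$ on pairwise disjoint-complement sets covering almost all of $[n-1]$, so producing them from elements $b^I$ that agree with $b$ only on $(\ddd-1)$-element sets is essentially the same problem as producing $b$ itself. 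Your claim that one can ``walk through the subsets $I$ and combine the $b^I$ via nested $P$-gates, each step using one $b^I$ to lock the coordinates in $I$'' is exactly the step that needs proof: a $P$-gate only reproduces a desired value at a coordinate when its eight (i.e.\ $\ddd+3$) arguments realize one of the rows of the identity \eqref{p} there, and when you merge covers for different subsets you have no control at the coordinates outside their union and no argument for why the required patterns hold at the others. Likewise, the $s^e$ wrapper cannot by itself perform the coordinate-$n$ correction: by \eqref{eq-Pps}, each $s$-step advances $X|_n$ from $\beta^{\gamma^k}$ to $\beta^{\gamma^{k+1}}$ only if the remaining $\ddd-1$ arguments all have value $\gamma$ at coordinate $n$ \emph{and} simultaneously behave correctly at all coordinates $i<n$; you never say what these companion arguments are, and Lemma~\ref{lm-newfork} (which produces new fork witnesses inside $B$, not elements built from the single designated pair $u,\hat u$ and the $b^I$) does not supply them.

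The paper resolves exactly this difficulty by a downward recursion that your proposal is missing: it builds terms $t_{\ell,V'}$, indexed by $\ell\in\{\ddd-1,\dots,n\}$ and $V'\in\binom{[\ell]}{\ddd-1}$, which agree with $b$ on $V'\cup([n]\setminus[\ell])$, starting from $t_{n,V'}:=w_{V'}$ and, at each level, combining the $\ddd-1$ terms $t_{\ell+1,V'_j}$ (all of whose index sets contain the new coordinate $\ell+1$) by one application each of $s^{e+1}$, $p$ and $P$. The coordinate-$n$ correction via the fork witnesses $u,\hat u$ is interleaved at \emph{every} level of this recursion (both the $s^{e+1}$- and the $p$-argument are arranged to take the value $\beta^{\gamma^{e+1}}$ at coordinate $n$ so that row $1$ of \eqref{p} applies there), and the companion arguments needed for the $s$-identities are precisely the previously built level-$(\ell+1)$ terms, which already agree with $b$ at coordinate $n$. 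The count of $\sum_{\ell=\ddd-1}^{n-1}\binom{\ell}{\ddd-1}=O(n^\ddd)$ distinct subterms, each costing $e+3$ applications of $P$, is what actually yields the bound $O(en^\ddd)$; your accounting (``$O(n^{\ddd-1})$ gates times an extra factor $n$ times $O(e)$'') is reverse-engineered from the target rather than derived from a construction. As written, the proposal's ``technical heart'' --- the simultaneous synchronization across all $n$ coordinates --- is exactly the part left unproved, so there is a genuine gap.
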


\begin{proof}
We use a dynamic programming approach to build the tuple
$b$ and keep a record of the sequence of operations we perform so that we 
obtain the term $t_n$ in the end. For every $\ell\in[n]\setminus[\ddd-2]$ and
for every set $V'\in\binom{[\ell]}{\ddd-1}$ we construct
a $P$-term $t_{\ell,V'}$ 
which `approximates'
$t_n$ in the sense that for $t_{\ell,V'}$ in place of $t_n$
the equality \eqref{eq-paral} holds in all coordinates in 
$V'\cup([n]\setminus[\ell])$ (but may fail in other coordinates).

\begin{clm}
\label{clm-terms}
For $\ell=n,n-1,\dots,\ddd-1$ and for every set $V:=V'\cup([n]\setminus[\ell])$
with $V'\in\binom{[\ell]}{\ddd-1}$ 
there exists a $P$-term $t_{\ell,V'}=t_{\ell,V'}(x,y,z,\bar{w_I}^V)$ with
$\bar{w_I}^V:=(w_I)_{I\in\binom{V}{\ddd-1}}$
such that whenever the algebras $\al A_1,\dots,\al A_n$, 
$\al R^*$ and the elements
$b$, $b^I$ $\bigl(I\in\binom{[n]}{\ddd-1}\bigr)$, $b'$, $u$, $\hat{u}$
  satisfy the assumptions of Lemma~\ref{lm-terms}~\eqref{it:terms1},
  we have that
  \[
  b|_V = t_{\ell,V'}\bigl(b',\hat{u},u,\bar{b^I}^V\bigr)|_V
  \quad
  \text{where}\quad \bar{b^I}^{V}:=(b^I)_{I\in\binom{V}{\ddd-1}}.
  \]
\end{clm}

\begin{proof}[Proof of Claim~\ref{clm-terms}.]
We proceed by induction on $n-\ell$.
For $n=\ell$ we have that $V=V'\in\binom{[n]}{\ddd-1}$.
So we can choose $t_{n,V'}(x,y,z,w_V):=w_V$, and our claim is
trivial.

Assume now that $\ell<n$ and that our
claim is true for $\ell+1$.
To prove the claim for $\ell$, let
$V=V'\cup([n]\setminus[\ell])$ with
$V'=\{i_1,\dots,i_{\ddd-1}\}\in\binom{[\ell]}{\ddd-1}$,
$i_1<\dots<i_{\ddd-1}$.
For $j\in [\ddd-1]$ let
$V'_j=\{i_1,\dots,i_{j-1},i_{j+1},\dots,i_{\ddd-1},\ell+1\}$ and
$V_j := V\setminus\{i_j\} = V'_j\cup ([n]\setminus[\ell+1])$.
We will prove that the $P$-term
\begin{multline}
   \label{eq-t-V}
t_{\ell,V'}\bigl(x,y,z,\bar{w_I}^V\bigr) :=
P\Bigl(s^{e+1}\bigl(x,(t_{\ell+1,V'_j}(x,y,z,\bar{w_I}^{V_j}))_{j\in[\ddd-1]}\bigr), \\
p\bigl(y,z,t_{\ell+1,V'_1}(x,y,z,\bar{w_I}^{V_1})\bigr),
t_{\ell+1,V'_1}(x,y,z,\bar{w_I}^{V_1}), 
x,
\bigl(t_{\ell+1,V'_j}(x,y,z,\bar{w_I}^{V_j})\bigr)_{j\in[\ddd-1]}\Bigr)
\end{multline}
has the desired properties,
where the $P$-terms $t_{\ell+1,V_j'}$ ($j\in[\ddd-1]$) are supplied
by the induction hypothesis.
To prove this, let
$\al A_1,\dots,\al A_n$, $\al R^*$, and 
$b$, $b^I$ $\bigl(I\in\binom{[n]}{\ddd-1}\bigr)$, $b'$, $u$, $\hat{u}$
  satisfy the assumptions of Lemma~\ref{lm-terms}~\eqref{it:terms1}.
By the induction hypothesis,
the elements
$b^{\ell+1,V'_j}:=t_{\ell+1,V'_j}\bigl(b',\hat{u},u,\bar{b^I}^{V_j}\bigr)$ 
satisfy the condition
$b^{\ell+1,V'_j}|_{V_j}=b|_{V_j}$ for all $j\in[\ddd-1]$;
that is, $b^{\ell+1,V'_j}|_V$ has the form 
\[
b^{\ell+1,V'_j}|_V=(b_{i_1},\dots,b_{i_{j-1}},\zeta_j,b_{i_{j+1}},\dots,b_{i_{\ddd-1}},
b_{\ell+1},\dots,b_{n-1},\gamma)
\]
for some element $\zeta_j\in A_{i_j}$ in the $j$-th coordinate.
To compute 
$t_{\ell,V'}(b',\hat{u},u,\bar{b^I}^V)|_V$,
we start with
evaluating the first two arguments of $P$
in the expression for $t_{\ell,V'}$ in \eqref{eq-t-V}.

Using the identities for $s$ in \eqref{eq-Pps} we get that
\begin{multline*}
s\bigl(b',\bigl(t_{\ell+1,V'_j}(b',\hat{u},u,\bar{b^I}^{V_j})\bigr)_{j\in[\ddd-1]}\bigr)\big|_V
=s(b'|_V,b^{\ell+1,V'_1}|_V,\ldots,b^{\ell+1,V'_{\ddd-1}}|_V) \\
=s\left(
\begin{matrix}
b_{i_1} &  \zeta_1  & b_{i_1} & \dots & b_{i_1}\\
b_{i_2} & b_{i_2} &  \zeta_2  & \dots & b_{i_2}\\
\vdots & \vdots & \vdots & \ddots & \vdots\\
b_{i_{\ddd-1}} & b_{i_{\ddd-1}} & b_{i_{\ddd-1}} &  \dots & \zeta_{\ddd-1} \\ 
b_{\ell+1} & b_{\ell+1} & b_{\ell+1}  &   \dots & b_{\ell+1}\\
\vdots & \vdots & \vdots &   &  \vdots\\
b_{n-1} & b_{n-1} & b_{n-1} & \dots & b_{n-1}\\ 
\beta  & \gamma  &  \gamma  &  \dots &  \gamma 
\end{matrix}
\right)
=\left(
\begin{matrix}
b_{i_1}\\
b_{i_2}\\
\vdots\\
b_{i_{\ddd-1}}\\
b_{\ell+1}\\
\vdots\\
b_{n-1}\\
\beta^\gamma
\end{matrix}
\right).
\end{multline*}
By repeating the same computation $e$ more times so that every time
the tuple just obtained is placed in the first argument of $s$ in the next 
computation, we obtain that
\begin{multline}
\label{eq-s2-calc}
s^{e+1}\bigl(b',\bigl(t_{\ell+1,V'_j}(b',\hat{u},u,\bar{b^I}^{V_j})\bigr)_{j\in[\ddd-1]}\bigr)\big|_V
=s^{e+1}(b'|_V,b^{\ell+1,V'_1}|_V,\ldots,b^{\ell+1,V'_{\ddd-1}}|_V) \\
=\left(
\begin{matrix}
b_{i_1}\\
b_{i_2}\\
\vdots\\
b_{i_{\ddd-1}}\\
b_{\ell+1}\\
\vdots\\
b_{n-1}\\
\beta^{\gamma^{e+1}}
\end{matrix}
\right).
\end{multline}
The identities for $p$ in \eqref{eq-Pps} yield that
\begin{multline}
\label{eq-p-calc}
p\bigl(\hat{u},u,t_{\ell+1,V'_1}(b',\hat{u},u,\bar{b^I}^{V_1})\bigr)|_V
=p\bigl(\hat{u}|_V,u|_V,b^{\ell+1,V'_1}|_V\bigr)\\
=p\left(
\begin{matrix}
u_{i_1} &  u_{i_1}  & \zeta_1 \\
u_{i_2} &  u_{i_2} &  b_{i_2}\\
\vdots & \vdots & \vdots\\
u_{i_{\ddd-1}} & u_{i_{\ddd-1}} & b_{i_{\ddd-1}}\\ 
u_{\ell+1} & u_{\ell+1} & b_{\ell+1} \\
\vdots & \vdots & \vdots \\
u_{n-1} & u_{n-1} & b_{n-1} \\ 
\beta^{\gamma^e}  & \gamma  &  \gamma 
\end{matrix}
\right)
=\left(
\begin{matrix}
\zeta_1\\
b_{i_2}\\
\vdots\\
b_{i_{\ddd-1}}\\
b_{\ell+1}\\
\vdots\\
b_{n-1}\\
\beta^{\gamma^{e+1}}
\end{matrix}
\right).
\end{multline}
Combining~\eqref{eq-s2-calc} and~\eqref{eq-p-calc} with the definition of $t_{\ell,V'}$,
and using the $(1,\ddd-1)$-parallelogram identities, we obtain that
\begin{multline*}
t_{\ell,V'}(b',\hat{u},u,\bar{b^I}^V)|_V\\
=P\left(
\begin{matrix}
b_{i_1} & \zeta_1 & \zeta_1 & b_{i_1} &  \zeta_1  & b_{i_1} & \dots & b_{i_1}\\
b_{i_2} & b_{i_2} & b_{i_2} & b_{i_2} & b_{i_2} &  \zeta_2  & \dots & b_{i_2}\\
\vdots & \vdots & \vdots & \vdots & \vdots & \vdots & \ddots &
\vdots\\
b_{i_{\ddd-1}} & b_{i_{\ddd-1}} & b_{i_{\ddd-1}} & b_{i_{\ddd-1}} & b_{i_{\ddd-1}} & b_{i_{\ddd-1}} & 
                                                        \dots & \zeta_{\ddd-1} \\ 
b_{\ell+1} & b_{\ell+1} & b_{\ell+1} & b_{\ell+1} & b_{\ell+1} & b_{\ell+1} & 
                                                        \dots & b_{\ell+1}\\
\vdots & \vdots & \vdots & \vdots & \vdots & \vdots &   &
                                                            \vdots\\
b_{n-1} & b_{n-1} & b_{n-1} & b_{n-1} & b_{n-1} & b_{n-1} & \dots & b_{n-1}\\ 
\beta^{\gamma^{e+1}}  & \beta^{\gamma^{e+1}}  & \gamma  &  \beta  &  \gamma  &  \gamma  
                                                        & \dots &  \gamma 
\end{matrix}
\right)=
\left(
\begin{matrix}
b_{i_1}\\
b_{i_2}\\
\vdots\\
b_{i_{\ddd-1}}\\
b_{\ell+1}\\
\vdots\\
b_{n-1}\\
\gamma
\end{matrix}
\right)=b|_V.
\end{multline*}
This completes the proof of
Claim~\ref{clm-terms}.
 \renewcommand{\qedsymbol}{$\diamond$}
\end{proof}

The term $t_n:=t_{\ddd-1,[\ddd-1]}$ constructed in Claim~\ref{clm-terms}
for 
\[
V=[n]=\{1,\ldots,\ddd-1\}\cup([n]\setminus[\ddd-1])
\] 
clearly satisfies
the requirement in statement~\eqref{it:terms1} of Lemma~\ref{lm-terms}.
 
To estimate
the size of $\crc{t_n}$,
note that the term $t_n$ has $\sum_{\ell=\ddd-1}^{n-1}\binom{\ell}{\ddd-1}=O(n^\ddd)$
  distinct subterms
of the form
$t_{\ell,V'}$ where $\ell\in\{\ddd-1,\dots,n-1\}$
and $V'\in\binom{[\ell]}{\ddd-1}$.
By \eqref{eq-t-V}, obtaining each
one of these terms
$t_{\ell,V'}$ from variables and subterms
of the form
$t_{\ell+1,W}$ requires one
application of each of $s^{e+1}$, $p$, and $P$,
that is, altogether $e+3$ applications of $P$.
Hence $\crc{t_n}$ has size
$O(en^{\ddd})$. This proves the first part of statement~\eqref{it:terms2}
of Lemma~\ref{lm-terms}. The second part is an immediate consequence
of this estimate and the definition of $t_n$.
\end{proof}

Now let $n\ge\ddd$, and let $t_n=t_n(x,y,z,\bar{w_I})$ be the $P$-term
from Lemma~\ref{lm-terms}.
The analogous $P$-terms for
$\ddd\le m\le n$ are $t_m=t_m(x,y,z,\bar{w_I}^{[m]})$ with
$\bar{w_I}^{[m]}:=(w_I)_{I\in\binom{[m]}{\ddd-1}}$.
We will use these terms to define new $P$-terms
\[
T_m(z^{(\ddd)},\hat{z}^{(\ddd)},\dots,z^{(m)},\hat{z}^{(m)},\bar{w_I}^{[m]})
\]
for each $m=\ddd-1,\ddd,\dots,n$ by recursion as follows:
$T_{\ddd-1}(w_{[\ddd-1]}):=w_{[\ddd-1]}$, and
for all $m$ with $\ddd\le m\le n$,
\begin{multline*}
T_{m}(z^{(\ddd)},\hat{z}^{(\ddd)},\dots,z^{(m)},\hat{z}^{(m)},\bar{w_I}^{[m]})\\
{}:= 
t_{m}\bigl(T_{m-1}(z^{(\ddd)},\hat{z}^{(\ddd)},\dots,z^{(m-1)},\hat{z}^{(m-1)},
\bar{w_I}^{[m-1]}), z^{(m)}, \hat{z}^{(m)}, \bar{w_I}^{[m]}\bigr).
\end{multline*}
In particular, for $m=n$, the term is 
$T_n(z^{(\ddd)},\hat{z}^{(\ddd)},\dots,z^{(n)},\hat{z}^{(n)},\bar{w_I})$, because
$\bar{w_I}^{[n]}=\bar{w_I}$.

Note that by this
definition, $T_n$ has
$n-\ddd+1$ subterms
of the form $T_m$ ($m=\ddd,\dots,n$) other than the variable
$T_{\ddd-1}$.
Furthermore, each $T_m$
is obtained from $T_{m-1}$ by applying $t_m$
to $T_{m-1}$ and some variables.
Since by Lemma~\ref{lm-terms}, the
$P$-circuit $\crc{t_m}$ has size $O(em^\ddd)$,
we get that
\begin{equation} \label{eq:cTn}
\text{the $P$-circuit }  \crc{T_n} \text{ has size }
O(en^{\ddd+1}).
\end{equation}

\def\itbI{{\rm (a)}}
\def\itfm{{\rm (b)}}

\begin{lem}
\label{lm-Ts}
Let $e$ be a fixed positive integer.
Assume that $\al A_1,\dots,\al A_n\in\var{V}$ $(n\ge\ddd)$,
and let $b$ be
an element and $R$ a subset of
$\al A_1\times\dots\times\al A_n$ such that
\begin{enumerate}
\item[\itbI]
for each $I\in\binom{[n]}{\ddd-1}$ the set $R$
contains a tuple
$b^I$ satisfying $b^I|_I=b|_I$, 
and
\item[\itfm]
for every $m$ with $\ddd\le m\le n$ the set $R$ contains
tuples
$u^{(m)}, \hat{u}^{(m)}$ which are witnesses for the fork
$(\gamma,\beta^{\gamma^e})\in\fork_m(R)$ 
where
\begin{equation}
\label{eq-betagamma}
\gamma:=b|_m\quad\text{and}\quad 
\beta:=
T_{m-1}(u^{(\ddd)},\hat{u}^{(\ddd)},\dots,u^{(m-1)},\hat{u}^{(m-1)},
\bar{b^I}^{[m-1]})|_m.
\end{equation}
\end{enumerate}
Then the following equalities hold:
\begin{equation}
\label{eq-rep-bm}
b|_{[m]}=
T_m(u^{(\ddd)},\hat{u}^{(\ddd)},\dots,u^{(m)},\hat{u}^{(m)},
\bar{b^I}^{[m]})|_{[m]}
\quad
\text{for all $m=\ddd-1,\ddd,\dots,n$;}
\end{equation}
in particular,
\begin{equation}
\label{eq-rep-b}
b=
T_n(u^{(\ddd)},\hat{u}^{(\ddd)},\dots,u^{(n)},\hat{u}^{(n)},
\bar{b^I}).
\end{equation}
\end{lem}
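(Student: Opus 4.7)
The natural strategy is induction on $m$, moving from $m=\ddd-1$ up to $m=n$, where at each step we invoke Lemma~\ref{lm-terms} (with $n$ replaced by $m$) to pass from coordinates $[m-1]$ to coordinate $m$.

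For the base case $m=\ddd-1$, the definition $T_{\ddd-1}(w_{[\ddd-1]}):=w_{[\ddd-1]}$ together with hypothesis \itbI\ applied to $I=[\ddd-1]$ gives
$T_{\ddd-1}(\bar{b^I}^{[\ddd-1]})|_{[\ddd-1]}=b^{[\ddd-1]}|_{[\ddd-1]}=b|_{[\ddd-1]}$.

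For the inductive step, suppose \eqref{eq-rep-bm} holds at $m-1$, i.e.\ that
$c:=T_{m-1}(u^{(\ddd)},\hat{u}^{(\ddd)},\dots,u^{(m-1)},\hat{u}^{(m-1)},\bar{b^I}^{[m-1]})$
satisfies $c|_{[m-1]}=b|_{[m-1]}$. By the definition of $\beta$ in \eqref{eq-betagamma}, $c|_m=\beta$, so $c|_{[m]}=(b_1,\dots,b_{m-1},\beta)$. Moreover, $c\in\al R^*=\langle R\rangle_P$, so $c|_{[m]}\in\langle R|_{[m]}\rangle_P$. To apply Lemma~\ref{lm-terms} with $n$ replaced by $m$, we work inside $\al A_1\times\dots\times\al A_m$ with the set $R|_{[m]}$: hypothesis \itbI\ of Lemma~\ref{lm-terms} is supplied by projecting our tuples $b^I$ (for $I\in\binom{[m]}{\ddd-1}\subseteq\binom{[n]}{\ddd-1}$) to coordinates $[m]$, and hypothesis \itfm\ is supplied by the witnesses $u^{(m)}|_{[m]}$, $\hat{u}^{(m)}|_{[m]}$ together with the element $c|_{[m]}$ just described; the required fork data matches \eqref{eq-betagamma}. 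Lemma~\ref{lm-terms} then yields
\[
b|_{[m]}=t_m\bigl(c|_{[m]},\hat{u}^{(m)}|_{[m]},u^{(m)}|_{[m]},(b^I|_{[m]})_{I\in\binom{[m]}{\ddd-1}}\bigr).
\]
Since $t_m$ is a $P$-term and the operations of $\al A_1\times\dots\times\al A_n$ act coordinatewise, projection to $[m]$ commutes with $t_m$; pulling the projection outside and using the recursive definition of $T_m$ gives
\[
b|_{[m]}=t_m\bigl(c,\hat{u}^{(m)},u^{(m)},\bar{b^I}^{[m]}\bigr)\big|_{[m]}
=T_m\bigl(u^{(\ddd)},\hat{u}^{(\ddd)},\dots,u^{(m)},\hat{u}^{(m)},\bar{b^I}^{[m]}\bigr)\big|_{[m]},
\]
which is exactly \eqref{eq-rep-bm} at stage $m$. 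Taking $m=n$ and noting $b|_{[n]}=b$ gives \eqref{eq-rep-b}.

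There is no real obstacle here; the argument is essentially a bookkeeping exercise on top of Lemma~\ref{lm-terms}. The only care needed is (i) verifying that the induction hypothesis together with \eqref{eq-betagamma} supplies an element of $\al R^*$ of the form required by hypothesis~\itfm\ of Lemma~\ref{lm-terms}, and (ii) observing that, because $t_m$ is a $P$-term, projecting its inputs to $[m]$ yields the same result as projecting its output to $[m]$, which is what lets us freely move between working in $\al A_1\times\dots\times\al A_n$ and working in $\al A_1\times\dots\times\al A_m$.
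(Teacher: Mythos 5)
Your proposal is correct and follows essentially the same route as the paper: induction on $m$ with base case $T_{\ddd-1}(w_{[\ddd-1]})=w_{[\ddd-1]}$, then at each step verifying that $b|_{[m]}$, $R|_{[m]}$, the element $b^{(m-1)}|_{[m]}\in\langle R|_{[m]}\rangle_P$, and the projected witnesses satisfy hypotheses (a)–(b) of Lemma~\ref{lm-terms}, and finally commuting the projection $|_{[m]}$ with the $P$-term $t_m$ to recover the recursive definition of $T_m$. No gaps.
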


\begin{proof}
Let $\al R^*$ denote the $P$-subalgebra of 
$\al A_1\times\dots\times\al A_n$
generated by $R$.
For $m=\ddd-1,\ddd,\dots,n$ let 
\[
b^{(m)}:=
T_m(u^{(\ddd)},\hat{u}^{(\ddd)},\dots,u^{(m)},\hat{u}^{(m)},
\bar{b^I}^{[m]}).
\]
We will proceed by induction to prove that \eqref{eq-rep-bm} holds,
that is, $b|_{[m]}=b^{(m)}|_{[m]}$ for all $m=\ddd-1,\ddd,\dots,n$.
Then, for the case when $m=n$, \eqref{eq-rep-bm} yields the equality 
\eqref{eq-rep-b}.

To start the induction, let $m:=\ddd-1$.
Since $T_{\ddd-1}(w_{[\ddd-1]})=w_{[\ddd-1]}$, we have that
$b^{(\ddd-1)}=b^{[\ddd-1]}$, so 
$b|_{[\ddd-1]}=b^{(\ddd-1)}|_{[\ddd-1]}$ is clearly true by assumption \itbI.

Now assume that $m\ge\ddd$ and that the equality 
$b|_{[m-1]}=b^{(m-1)}|_{[m-1]}$ holds.
Then
the tuples $b|_{[m]}\in\al A_1\times\dots\times\al A_m$ and
$b^{(m-1)}|_{[m]}\in\al R^*|_{[m]}$ have the form $(b_1,\dots,b_{m-1},\gamma)$
and $(b_1,\dots,b_{m-1},\beta)$, respectively, 
where $\gamma$ and $\beta$ are defined by \eqref{eq-betagamma}.
Assumption~\itfm\  implies that $R$ contains tuples
$u^{(m)},\hat{u}^{(m)}\in R$ which are witnesses for the fork
$(\gamma,\beta^{\gamma^e})\in\fork_m(R)$.
Then $u^{(m)}|_{[m]}$ and $\hat{u}^{(m)}|_{[m]}$ are in $R|_{[m]}$,
and they have the form
$u^{(m)}|_{[m]}=(u_1,\dots,u_{m-1},\gamma)$ and 
$\hat{u}^{(m)}|_{[m]}=(u_1,\dots,u_{m-1},\beta^{\gamma^e})$ 
for some $u_i\in\al A_i$ ($i\in[m-1]$). 
The fact that $R$ satisfies assumption \itbI\ also implies that
the elements $b^I|_{[m]}\in R|_{[m]}$
have the property  
$(b^I|_{[m]})|_I=(b|_{[m]})|_I$ for all $I\in\binom{[m]}{\ddd-1}$.

This shows that the assumptions
\itbI--\itfm\ of Lemma~\ref{lm-terms}~\eqref{it:terms1}
hold for
\begin{itemize}
\item
the subset
$R|_{[m]}$ and element
$b|_{[m]}=(b_1,\dots,b_{m-1},\gamma)$ of $\al A_1\times\dots\times\al A_m$, 
\item
the element
$b^{(m-1)}|_{[m]}
=(b_1,\dots,b_{m-1},\beta)$ in $\al R^*|_{[m]}$, and
\item
the elements
$u^{(m)}|_{[m]}=(u_1,\dots,u_{m-1},\gamma)$, 
$\hat{u}^{(m)}|_{[m]}=(u_1,\dots,u_{m-1},\beta^{\gamma^e})$, and
$b^I|_{[m]}$~$\bigl(I\in\binom{[m]}{\ddd-1}\bigr)$ 
in $R|_{[m]}$.
\end{itemize}
Thus, Lemma~\ref{lm-terms}~\eqref{it:terms1}
--- combined with the definitions of $b^{(m-1)}$ and $T_m$
--- implies that
\begin{align*}
b|_{[m]}
&{}=t_m(b^{(m-1)}|_{[m]},\hat{u}^{(m)}|_{[m]},u^{(m)}|_{[m]},\bar{b^I|_{[m]}}^{[m]})\\
&{}=t_m(b^{(m-1)},\hat{u}^{(m)},u^{(m)},\bar{b^I}^{[m]})|_{[m]}\\
&{}=t_m\bigl(T_{m-1}(u^{(\ddd)},\hat{u}^{(\ddd)},\dots,u^{(m-1)},\hat{u}^{(m-1)},
\bar{b^I}^{[m-1]}),\hat{u}^{(m)},u^{(m)},\bar{b^I}^{[m]}\bigr)|_{[m]}\\
&{}=T_m(u^{(\ddd)},\hat{u}^{(\ddd)},\dots,u^{(m)},\hat{u}^{(m)},
\bar{b^I}^{[m]})|_{[m]},
\end{align*}
which is what we wanted to prove.
\end{proof}

\begin{lem}\label{lm-rep-generates}
Let $e$ be a fixed positive integer, and let
$\al B$ be a $P$-subalgebra of a product 
$\al A_1\times\dots\times\al A_n$ of finitely many algebras
$\al A_1,\dots,\al A_n\in\var{V}$.
If $R$ is a $(\ddd,e)$-representation for $\al B$, then
for every element $b$ of $\al{B}$,
\begin{enumerate}
  \item
there exist elements
$b^I$ $(I\in\binom{[n]}{\ddd-1})$ and
$u^{(m)},\hat{u}^{(m)}$ $(m=\ddd,\dots,n)$ in $R$ such that
conditions~{\itbI--\itfm} of Lemma~\ref{lm-Ts} are satisfied,
\item
  hence the equality
  $b=T_n(u^{(\ddd)},\hat{u}^{(\ddd)},\dots,u^{(n)},\hat{u}^{(n)},\bar{b^I})$
  in \eqref{eq-rep-b} holds for $b$.
\end{enumerate}
\end{lem}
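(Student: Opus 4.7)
The plan is to build the required elements by recursion on the coordinate index, then read off equation~\eqref{eq-rep-b} via Lemma~\ref{lm-Ts}. For the $b^I$'s, condition~\eqref{it:I} of Definition~\ref{df-rep} gives $R|_I = B|_I$ for every $I \in \binom{[n]}{\ddd-1}$, so since $b \in \al{B}$, one may pick $b^I \in R$ with $b^I|_I = b|_I$. This establishes assumption~{\itbI} of Lemma~\ref{lm-Ts}.

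The main step is to construct $u^{(m)}, \hat{u}^{(m)} \in R$ for $m = \ddd, \ldots, n$ by recursion on $m$, proving along the way the auxiliary equality
\[
(\ast)_m\colon\quad b|_{[m]} = T_m(u^{(\ddd)}, \hat{u}^{(\ddd)}, \ldots, u^{(m)}, \hat{u}^{(m)}, \bar{b^I}^{[m]})|_{[m]}.
\]
The base case $m = \ddd - 1$ is immediate, since $T_{\ddd-1}(\bar{b^I}^{[\ddd-1]}) = b^{[\ddd-1]}$ and $b^{[\ddd-1]}|_{[\ddd-1]} = b|_{[\ddd-1]}$. For the step $m \geq \ddd$, set $c := T_{m-1}(u^{(\ddd)}, \hat{u}^{(\ddd)}, \ldots, u^{(m-1)}, \hat{u}^{(m-1)}, \bar{b^I}^{[m-1]})$. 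Then $c \in \al{B}$ because $\al{B}$ is a $P$-subalgebra containing $R$, and $c|_{[m-1]} = b|_{[m-1]}$ by $(\ast)_{m-1}$. With $\gamma := b|_m$ and $\beta := c|_m$ as in~\eqref{eq-betagamma}, the tuples $b$ and $c$ witness $(\gamma, \beta) \in \fork_m(B)$, so $(\gamma, \beta^{\gamma^e}) \in \fork_m^e(B) \subseteq \fork_m(R)$ by condition~\eqref{it:Fi} of Definition~\ref{df-rep}. Picking any witnesses $u^{(m)}, \hat{u}^{(m)} \in R$ for this fork verifies assumption~{\itfm} of Lemma~\ref{lm-Ts} at index~$m$, and the equality $(\ast)_m$ then follows from Lemma~\ref{lm-terms}\eqref{it:terms1} applied to the restrictions to coordinates $[m]$.

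Statement~(1) is exactly what this construction produces, and statement~(2) is $(\ast)_n$, which is also an immediate consequence of Lemma~\ref{lm-Ts} once conditions~{\itbI} and~{\itfm} have been verified. The only mildly delicate point is organizing the recursion so that $(\ast)_{m-1}$ is available at the step that chooses $u^{(m)}, \hat{u}^{(m)}$; this is handled by carrying out the construction and the inductive proof of $(\ast)_m$ in lockstep, using the fact that restriction to $[m]$ commutes with $P$-terms such as $T_{m-1}$.
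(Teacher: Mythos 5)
Your proof is correct and follows essentially the same route as the paper: condition (a) from $R|_I=B|_I$, then an induction on $m$ in which the partial product $T_{m-1}(\dots)$ lies in $\al B$, witnesses with $b$ a fork $(\gamma,\beta)\in\fork_m(B)$, whose $e$-derived fork lies in $\fork_m(R)$ by Definition~\ref{df-rep}\eqref{it:Fi}, yielding the needed $u^{(m)},\hat{u}^{(m)}$. The only cosmetic difference is that you carry the equality $(\ast)_m$ explicitly via Lemma~\ref{lm-terms}, whereas the paper obtains it by citing Lemma~\ref{lm-Ts} with the partial list of witnesses; the content is the same.
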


\begin{proof}
  Let $R$ be a $(\ddd,e)$-representation for $\al B$, and let $b\in B$.
As before, let $\al{R}^*$ denote the $P$-subalgebra of
    $\al{A}_1\times\dots\times\al{A_n}$ generated by $R$.
    Our statements
  will follow from
  Lemma~\ref{lm-Ts}
  if we show that
  $b$ and $R$ satisfy the assumptions~\itbI--\itfm\ 
  of that lemma.  
Condition~\itbI\  clearly follows from our assumptions that 
$b\in B$ and $R$ is a $(\ddd,e)$-representation for $\al B$.
To verify condition~\itfm\  we proceed by induction on $m$
to show that for every $m$ ($\ddd\le m\le n$) 
\begin{enumerate}[labelindent=4mm]
\item[\itfm$_m$]
$R$ contains tuples
$u^{(m)}, \hat{u}^{(m)}$ witnessing the fork
$(\gamma,\beta^{\gamma^e})\in\fork_m(R)$ 
where $\gamma$ and $\beta$ are defined by \eqref{eq-betagamma}.
\end{enumerate}

Let $\ddd\le m\le n$, and assume that 
condition \itfm$_{i}$ holds for $i=\ddd,\dots,m-1$;
note that this assumption is vacuously true for the base case
$m=\ddd$. 
Our goal is to show that~\itfm$_m$ also holds.
Let
\[
b^{(m-1)}:=T_{m-1}(u^{(\ddd)},\hat{u}^{(\ddd)},\dots,u^{(m-1)},\hat{u}^{(m-1)},
\bar{b^I}^{[m-1]}),
\]
and let $\gamma$ and $\beta$ be defined by \eqref{eq-betagamma};
that is, $\gamma=b|_m$ and $\beta=b^{(m-1)}|_m$.
Since $b^{(m-1)}$ involves only the elements 
$b^I$ $\bigl(I\in\binom{[m-1]}{\ddd-1}\bigr)$ and
$u^{(\ddd)},\hat{u}^{(\ddd)},\dots,u^{(m-1)},\hat{u}^{(m-1)}$ of $R$,
and since our induction hypothesis ensures that these elements
satisfy the assumptions of Lemma~\ref{lm-Ts},
we get from Lemma~\ref{lm-Ts}
that $b|_{[m-1]}=b^{(m-1)}|_{[m-1]}$.
Here $b\in B$ and $b^{(m-1)}\in R^*\subseteq B$, therefore
$b$ and $b^{(m-1)}$ are witnesses in $\al B$ for the fork 
$(\gamma,\beta)\in\fork_m(B)$.
Hence, by Lemma~\ref{lm-der-forks},
$(\gamma,\beta^{\gamma^e})\in\fork^e_m(B)$. 
So, the fact that
$R$ is a $(\ddd,e)$-representation for $\al B$ implies that
$(\gamma,\beta^{\gamma^e})\in\fork_m(R)$.
Thus, \itfm$_m$ holds, as we wanted to show.
\end{proof}

\begin{proof}[Proof of Theorem~\ref{thm-rep-generates}]
Statement~\eqref{it:repgen1} of Theorem~\ref{thm-rep-generates} 
is an immediate consequence of \linebreak Lemma~\ref{lm-rep-generates}.
Statement~\eqref{it:repgen2} of Theorem~\ref{thm-rep-generates}
follows directly from \eqref{eq:cTn} and the definition of $T_n$.
\end{proof}

We close this section by discussing the following question:
given a generating set for a subalgebra $\al B$
of a product $\al B_1\times\dots\times\al B_n$ (in a variety with a cube term)
and a product congruence
$\theta=\theta_1\times\dots\times\theta_n$
of $\al B_1\times\dots\times\al B_n$
(where $\theta_i\in\Con(\al{B}_i)$ for all $i\in[n]$),
how can one construct a generating set for the $\theta$-saturation
$\al B[\theta]$ of $\al B$?
This is a crucial step for giving a positive answer to Question~2
in the introduction, as we explain now.

Suppose that we are given
a subalgebra $\al{D}$ of 
$\al{B}_1/\theta_1\times\dots\times\al{B}_n/\theta_n$
by a generating set
$\{c_1,\dots,c_k\}$, along with another element $v$ of
$\al{B}_1/\theta_1\times\dots\times\al{B}_n/\theta_n$,
and our task is
to decide whether or not $v\in\al{D}$, but we are only allowed to
do computations in $\al{B}_1\times\dots\times\al{B}_n$.
For the product congruence $\theta=\theta_1\times\dots\times\theta_n$
on $\al{B}_1\times\dots\times\al{B}_n$, we can use the natural isomorphism
\[
\al{B}_1/\theta_1\times\dots\times\al{B}_n/\theta_n
\cong(\al{B}_1\times\dots\times\al{B}_n)/\theta,
\quad(b_1/\theta_1,\dots,b_n/\theta_n)\mapsto(b_1,\dots,b_n)/\theta
\]
to view the elements $c_1,\dots,c_k,v$ and the subalgebra $\al{D}$ of
$\al{B}_1/\theta_1\times\dots\times\al{B}_n/\theta_n$
as those of
$(\al{B}_1\times\dots\times\al{B}_n)/\theta$.
Now choose representatives $a_1,\dots,a_k,u\,(\in B_1\times\dots\times B_n)$
from the $\theta$-classes $c_1,\dots,c_k,v$, respectively, and let
$\al{B}$ be the subalgebra of $\al{B}_1\times\dots\times\al{B}_n$
generated by the set $\{a_1,\dots,a_k\}$.
Since the definition
of $\al B[\theta]$ in Section~\ref{sec-prelim} implies
that $\al B[\theta]$ is the full inverse image of
$\al{D}$ under the natural homomorphism 
\[
\al B_1\times\dots\times\al B_n\twoheadrightarrow
(\al B_1\times\dots\times\al B_n)/\theta\cong
\al B_1/\theta_1\times\dots\times\al B_n/\theta_n,
\]
we get that $v\in\al{D}$
if and only if $u\in\al{B}[\theta]$.
Hence, we can decide whether $v$ is in $\al{D}$
by a subpower membership calculation in $\al B_1\times\dots\times\al B_n$,
provided we can
construct a generating set for $\al{B}[\theta]$ from the
generating set $\{a_1,\dots,a_k\}$ of $\al{B}$ .
We now give such a generating set explicitly.
  
\begin{thm}
\label{thm-gen-sat}
Let $\al B$ be a subalgebra of $\al B_1\times\dots\times\al B_n$
with $\al B_1,\dots,\al B_n\in\var{V}$ $(n\ge\ddd)$,
let $\theta_i\in\Con(\al B_i)$ for all $i\in[n]$,
and let 
$\theta:=\theta_1\times\dots\times\theta_n\in
\Con(\al B_1\times\dots\times\al B_n)$.
Assume that $G$ is a generating set for $\al B$, 
and $\LLL,\FFF\subseteq B[\theta]$ satisfy the following conditions:
\begin{enumerate}
\item \label{it:L} 
  For every $I\in\binom{[n]}{\ddd-1}$ and for every $\bar{b}\in B|_I$
  and $\bar{c}\in B[\theta]|_I$ with
  $\bar{b} \equiv_{\theta|_I} \bar{c}$ there exist
  $b_{I,\bar{b}}\in B$ and $c_{I,\bar{c}}\in \LLL$ such that
\begin{equation*}
  b_{I,\bar{b}}|_I=\bar{b},
\quad  
c_{I,\bar{c}}|_I=\bar{c},
\quad\text{and}\quad 
b_{I,\bar{b}}
\equiv_{\theta}
c_{I,\bar{c}}.
\end{equation*}
\item \label{it:F} 
For every $i\in[n]$ and $(\beta,\gamma)\in\theta_i$ with $\beta\in B|_i$,
there exist $b_{i,\beta}\in B$ and $c_{i,\gamma}\in \FFF$ such that
\begin{equation*}
b_{i,\beta}|_i=\beta,\quad
c_{i,\gamma}|_i=\gamma,\quad 
b_{i,\beta}|_{[i-1]} = c_{i,\gamma}|_{[i-1]},
\quad\text{and}\quad 
b_{i,\beta}\equiv_{\theta} c_{i,\gamma}.
\end{equation*}
\end{enumerate}
Then $G\cup \LLL\cup \FFF$ is a generating set for $\al B[\theta]$.
\end{thm}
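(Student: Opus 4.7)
The plan is to set $\al B':=\langle G\cup\LLL\cup\FFF\rangle$ and to prove $\al B[\theta]=\al B'$; the inclusion $\al B'\subseteq\al B[\theta]$ is immediate because every generator lies in the subalgebra $\al B[\theta]$. For the reverse inclusion, I will apply Theorem~\ref{thm-rep-generates}\eqref{it:repgen1} to $\al B[\theta]$, viewed as a $P$-subalgebra of $\al B_1\times\dots\times\al B_n$, by producing a $(\ddd,e)$-representation $R\subseteq B'$ of $\al B[\theta]$ for some fixed $e\ge 1$ (for instance, the value from Remark~\ref{rm-minfork}). Once such an $R$ is exhibited, Theorem~\ref{thm-rep-generates} gives $\al B[\theta]\subseteq\langle R\rangle_P\subseteq\al B'$, completing the proof.

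I take $R:=B\cup\LLL\cup\FFF$, which lies in $B'$ since $G$ generates $B$. Conditions~\eqref{it:B} and~\eqref{it:I} of Definition~\ref{df-rep} are routine: $R\subseteq B[\theta]$ is clear, and the easy identity $B[\theta]|_I=B|_I[\theta|_I]$ (valid because $\theta$ is a product congruence) shows that every $\bar c\in B[\theta]|_I$ has a compatible $\bar b\in B|_I$. For $|I|=\ddd-1$ hypothesis~\eqref{it:L} then supplies $c_{I,\bar c}\in\LLL\subseteq R$ with $c_{I,\bar c}|_I=\bar c$, so $R|_I=B[\theta]|_I$ for $|I|=\ddd-1$, and by projection for all $|I|<\ddd$.

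The main step is condition~\eqref{it:Fi} of Definition~\ref{df-rep}, that $\fork_i(R)\supseteq\fork^e_i(B[\theta])$ for every $i\in[n]$. Given $(\gamma,\delta)\in\fork_i(B[\theta])$ with witnesses $b,\hat b\in B[\theta]$, I pick $b^*,\hat b^*\in B$ with $b\equiv_\theta b^*$ and $\hat b\equiv_\theta\hat b^*$, and set $\beta:=b^*|_i$, $\epsilon:=\hat b^*|_i$, both in $B|_i$; then $\gamma\equiv_{\theta_i}\beta$ and $\delta\equiv_{\theta_i}\epsilon$. When $\gamma\equiv_{\theta_i}\delta$, I may take $\beta=\epsilon$, and hypothesis~\eqref{it:F} applied to $(\beta,\gamma)$ and $(\beta,\delta)$ yields $c_{i,\gamma},c_{i,\delta}\in\FFF$ agreeing with $b_{i,\beta}$ on the coordinates $[i-1]$, so the pair $(c_{i,\gamma},c_{i,\delta})\in R^2$ witnesses the fork $(\gamma,\delta)\in\fork_i(R)$, and Lemma~\ref{lm-der-forks} then yields the derived fork $(\gamma,\delta^{\gamma^e})$ as well. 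The main obstacle is the case $\gamma\not\equiv_{\theta_i}\delta$, in which $b^*$ and $\hat b^*$ are $\theta$-related on $[i-1]$ only modulo $\theta|_{[i-1]}$; my plan is to apply hypothesis~\eqref{it:L} to an index set $I\in\binom{[n]}{\ddd-1}$ accessing $[i-1]$, together with the common $\theta|_I$-class of $b^*|_I$ and $\hat b^*|_I$, to extract an element of $\LLL$ jointly $\theta$-equivalent to the representatives $b_{I,b^*|_I}$ and $b_{I,\hat b^*|_I}$, and then to combine this element with the fork witnesses from hypothesis~\eqref{it:F} via the $P$-term $p$, iterating the resulting pairs through Lemma~\ref{lm-newfork} to manufacture witnesses in $R$ for $(\gamma,\delta^{\gamma^e})$ after $e$ applications of the $\cdot^\gamma$ construction.

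With conditions~\eqref{it:B}--\eqref{it:Fi} established, Theorem~\ref{thm-rep-generates}\eqref{it:repgen1} yields $\al B[\theta]\subseteq\langle R\rangle_P\subseteq\al B'$, and hence $\al B[\theta]=\al B'$, as claimed.
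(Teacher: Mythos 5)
Your top-level strategy---exhibit a $(\ddd,e)$-representation of $\al B[\theta]$ inside $\al B'=\langle G\cup\LLL\cup\FFF\rangle$ and invoke Theorem~\ref{thm-rep-generates}---is sound in outline, and conditions~\eqref{it:B}--\eqref{it:I} of Definition~\ref{df-rep} are verified correctly for $R=B\cup\LLL\cup\FFF$. But the fork condition~\eqref{it:Fi} is the entire content of the theorem, and your proposal does not establish it. Concretely: (i)~condition~\eqref{it:Fi} demands witnesses \emph{inside the set $R$} for every $e$-derived fork of $B[\theta]$, whereas your plan for the hard case $\gamma\not\equiv_{\theta_i}\delta$ manufactures witnesses by applying $p$ and Lemma~\ref{lm-newfork}; such elements lie in $\langle R\rangle_P$ but in general not in $R=B\cup\LLL\cup\FFF$, so they cannot be used to verify the condition, and for the same reason the appeal to Lemma~\ref{lm-der-forks} is illegitimate, since that lemma concerns $P$-subalgebras, not arbitrary subsets. (ii)~Even in the easy case $\gamma\equiv_{\theta_i}\delta$, hypothesis~\eqref{it:F} is quantified per pair: the element $b_{i,\beta}$ it provides for $(\beta,\gamma)$ need not coincide with the one it provides for $(\beta,\delta)$, so $c_{i,\gamma}$ and $c_{i,\delta}$ need not agree on $[i-1]$ and need not witness any fork; and even granting that, you would only have witnessed $(\gamma,\delta)$, while~\eqref{it:Fi} requires witnesses in $R$ for the derived fork $(\gamma,\delta^{\gamma^e})$. (iii)~The hard case, which is where all the difficulty lies, is left as an unexecuted sketch. (A minor point: the exponent $e$ of Remark~\ref{rm-minfork} presupposes a finite set of finite algebras, which is not assumed here; the theorem holds for arbitrary algebras in $\var{V}$.)

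For comparison, the paper does not attempt to show that any fixed set is a $(\ddd,e)$-representation for $\al B[\theta]$. Instead, for each $d\in B[\theta]$ it picks $b\in B$ with $b\equiv_\theta d$ and runs a coordinate-by-coordinate induction that constructs, in parallel, fork witnesses $u^{(m)},\hat{u}^{(m)}\in B$ for $b$ and $\theta$-equivalent fork witnesses $v^{(m)},\hat{v}^{(m)}\in C:=\langle G\cup\LLL\cup\FFF\rangle$ for $d$, using hypothesis~\eqref{it:F} at coordinate $m$ together with two applications of the weak transitivity rule of Lemma~\ref{lm-newfork} (this is why $e=2$ appears), and then Lemma~\ref{lm-Ts} yields $d=T_n(v^{(\ddd)},\hat{v}^{(\ddd)},\dots,v^{(n)},\hat{v}^{(n)},\bar{d^I})\in C$. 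This $\theta$-linked double construction is precisely what overcomes the obstacle you flag: representatives in $B$ of witnesses for a fork of $B[\theta]$ agree on the initial coordinates only modulo $\theta$, so forks of $B[\theta]$ do not descend directly to forks of $B$ or of any fixed generating set. Any repair of your argument (for instance taking $R:=C$ and proving $\fork^2_i(B[\theta])\subseteq\fork_i(C)$ before knowing $C=B[\theta]$) would have to reproduce essentially this induction, so the citation of Theorem~\ref{thm-rep-generates} does not by itself shortcut the work.
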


\begin{proof}
 Let $\al C$ denote the subalgebra of 
$\al B_1\times\dots\times\al B_n$ generated by $G\cup \LLL\cup \FFF$.
Clearly, $\al B\le\al C\le\al B[\theta]$.
To show that $\al C=\al B[\theta]$, choose an arbitrary
element $d$
in $\al B[\theta]$. 
Then there exists
$b$
in $\al B$ such that 
$b\equiv_{\theta}d$.
Using that $\LLL$ satisfies condition~\eqref{it:L}, choose and fix elements 
$b^I\in B$ and $d^I\in \LLL$ such that 
\[
b^I|_I=b|_I,\quad
d^I|_I=d|_I,\quad \text{and}\quad
b^I \equiv_{\theta} d^I\quad
\text{for each}\quad I\in\textstyle{\binom{[n]}{\ddd-1}}.
\]
We will use the $P$-terms in Lemma~\ref{lm-Ts} with $e=2$ to show that
\begin{equation}
\label{eq-d}
d=
T_n(v^{(\ddd)},\hat{v}^{(\ddd)},\dots,v^{(n)},\hat{v}^{(n)},
\bar{d^I}^{[n]})
\end{equation}
holds for appropriately chosen 
elements $v^{(\ddd)},\hat{v}^{(\ddd)},\dots,v^{(n)},\hat{v}^{(n)}$
in $\al C$. Since all $d^I$ are in $\LLL\ (\subseteq\al C)$, this will 
show that $d\in\al C$, and hence will complete the proof of the theorem.

By the definition of the term $T_{\ddd-1}:=w^{[\ddd-1]}$
and by the choice of $b^{[\ddd-1]}$ and $d^{[\ddd-1]}$ we have that
\[
b|_{[\ddd-1]}=b^{[\ddd-1]}|_{[\ddd-1]},\quad 
d|_{[\ddd-1]}=d^{[\ddd-1]}|_{[\ddd-1]}, 
\]
and
\[
T_{\ddd-1}(b^{[\ddd-1]})=b^{[\ddd-1]}\equiv_{\theta} d^{[\ddd-1]}
=T_{\ddd-1}(d^{[\ddd-1]}).
\]
Now we proceed by induction to show that for every $m=\ddd,\dots,n$
there exist tuples $u^{(\ddd)},\hat{u}^{(\ddd)},\dots,u^{(m)},\hat{u}^{(m)}$
in $\al B$ 
and tuples $v^{(\ddd)},\hat{v}^{(\ddd)},\dots,v^{(m)},\hat{v}^{(m)}$ in $\al C$
such that
\begin{align}
b|_{[m]}
&{}=T_m(u^{(\ddd)},\hat{u}^{(\ddd)},\dots,u^{(m)},\hat{u}^{(m)},
\bar{b^I}^{[m]})|_{[m]},\label{eq-bm} \\ 
d|_{[m]}
&{}=T_m(v^{(\ddd)},\hat{v}^{(\ddd)},\dots,v^{(m)},\hat{v}^{(m)},
\bar{d^I}^{[m]})|_{[m]},\label{eq-dm}  
\end{align}
and
\begin{equation}
\label{eq-theta-rel-forks}
u^{(j)}\equiv_{\theta}v^{(j)},
\quad
\hat{u}^{(j)}\equiv_{\theta}\hat{v}^{(j)}
\qquad\text{for all $j=\ddd,\dots,m$,}
\end{equation}
hence also
\begin{multline}
\label{eq-theta-related}
\qquad
(B\ni)\ T_m(u^{(\ddd)},\hat{u}^{(\ddd)},\dots,u^{(m)},\hat{u}^{(m)},
\bar{b^I}^{[m]})\\
\equiv_{\theta}
T_m(v^{(\ddd)},\hat{v}^{(\ddd)},\dots,v^{(m)},\hat{v}^{(m)},
\bar{d^I}^{[m]})\ (\in C).
\qquad
\end{multline}
Then, equality \eqref{eq-dm} for $m=n$ yields the  desired equality
\eqref{eq-d}.

Our induction hypothesis is that the statement in the preceding paragraph
is true for $m-1$,
that is, there exist tuples 
$u^{(\ddd)},\hat{u}^{(\ddd)},\dots,u^{(m-1)},\hat{u}^{(m-1)}$ in $\al B$ and
tuples
$v^{(\ddd)},\hat{v}^{(\ddd)},\dots,v^{(m-1)}$, $\hat{v}^{(m-1)}$ in $\al C$
such that \eqref{eq-bm}--\eqref{eq-theta-related} hold for $m-1$ in place of 
$m$.
To simplify notation, let 
\begin{align*}
b^{(m-1)}
&{}:=T_{m-1}(u^{(\ddd)},\hat{u}^{(\ddd)},\dots,u^{(m-1)},\hat{u}^{(m-1)},
\bar{b^I}^{[m-1]}),\\
d^{(m-1)}
&{}:=T_{m-1}(v^{(\ddd)},\hat{v}^{(\ddd)},\dots,v^{(m-1)},\hat{v}^{(m-1)},
\bar{d^I}^{[m-1]}).
\end{align*}
Then we have that
$b^{(m-1)}|_{[m-1]}=b|_{[m-1]}$ and $d^{(m-1)}|_{[m-1]}=d|_{[m-1]}$. 
Let $\gamma:=b|_m$, $\beta:=b^{(m-1)}|_m$,
$\tau:=d|_m$, and $\sigma:=d^{(m-1)}|_m$.
If we can show the existence of a pair of witnesses
$u^{(m)},\hat{u}^{(m)}$  in $\al B$ 
for the fork
$(\gamma,\beta^{\gamma^2})\in\fork_m(B)$
and a pair of witnesses
$v^{(m)},\hat{v}^{(m)}$  in $\al C$ 
for the fork
$(\tau,\sigma^{\tau^2})\in\fork_m(C)$
such that $u^{(m)}\equiv_{\theta} v^{(m)}$ and
$\hat{u}^{(m)}\equiv_{\theta} \hat{v}^{(m)}$, then
\eqref{eq-theta-rel-forks}--\eqref{eq-theta-related}
will follow for $m$, and by Lemma~\ref{lm-Ts},
\eqref{eq-bm}--\eqref{eq-dm} will also hold for $m$. 

To prove the existence of such
$u^{(m)},\hat{u}^{(m)}$
and
$v^{(m)},\hat{v}^{(m)}$,
notice first that
our induction hypothesis that \eqref{eq-theta-related} holds
for $m-1$ in place of $m$ has the effect that 
$(B\ni)\ b^{(m-1)}\equiv_{\theta} d^{(m-1)}\ (\in C)$,
so in particular, $(B|_m\ni)\ \beta\equiv_{\theta_m}\sigma$.
By our choice of $b$ and $d$, we also have that 
$(B|_m\ni)\ \gamma=b|_m\equiv_{\theta_m} d|_m=\tau$.
Thus, by our assumption \eqref{it:F}, there exist witnesses
$b_\beta:=b_{m,\beta}$
and
$c_\sigma:=c_{m,\sigma}\in \FFF$
for the fork $(\beta,\sigma)\in\fork_m(C)$ such that
\begin{equation}
\label{eq-bc1}
b_{\beta}|_m=\beta,\quad
c_{\sigma}|_m=\sigma,\quad
b_{\beta}|_{[m-1]} = c_{\sigma}|_{[m-1]},\quad\text{and}\quad
b_{\beta}\equiv_{\theta} c_{\sigma}.
\end{equation}
There also exist witnesses 
$b_\gamma:=b_{m,\gamma}\in B$ and
$c_\tau:=c_{m,\tau}\in \FFF$
for the fork $(\gamma,\tau)\in\fork_m(C)$ such that
\begin{equation}
\label{eq-bc2}
b_{\gamma}|_m=\gamma,\quad
c_{\tau}|_m=\tau,\quad
b_{\gamma}|_{[m-1]} = c_{\tau}|_{[m-1]},\quad\text{and}\quad
b_{\gamma}\equiv_{\theta} c_{\tau}.
\end{equation}
Let $b_{\beta^\gamma}:=p(b_{\beta},b_{\gamma},b_{\gamma})$,
$c_{\beta^\tau}:=p(b_{\beta},c_{\tau},c_{\tau})$, and
$c_{\sigma^\tau}:=p(c_{\sigma},c_{\tau},c_{\tau})$.
Clearly, $b_{\beta^\gamma}\in B$ and
$c_{\beta^\tau}, c_{\sigma^\tau}\in C$.
Furthermore, the relations in 
\eqref{eq-bc1}--\eqref{eq-bc2}
imply that
\begin{multline}
\label{eq-combined-bc}
\quad
b_{\beta^\gamma}|_m=\beta^\gamma,\quad
c_{\beta^\tau}|_m=\beta^\tau,\quad
c_{\sigma^\tau}|_m=\sigma^\tau,\\ 
b_{\beta^\gamma}|_{[m-1]} = c_{\beta^\tau}|_{[m-1]} 
= c_{\sigma^\tau}|_{[m-1]},\quad\text{and}\quad
b_{\beta^\gamma} \equiv_{\theta} c_{\beta^\tau} 
\equiv_{\theta} c_{\sigma^\tau}.
\qquad
\end{multline}
It follows that $c_{\beta^\tau}$ and $c_{\sigma^\tau}$
are witnesses in $\al C$ for the fork 
$(\beta^{\tau},\sigma^{\tau})\in\fork_m(C)$.

Since $b$ and $b^{(m-1)}$ are witnesses in $B$ for the fork 
$(\gamma,\beta)\in\fork_m(B)\subseteq\fork_m(C)$, we can apply 
Lemma~\ref{lm-newfork}
first to $(\tau,\gamma)$ and $(\beta,\gamma)$ to obtain witnesses
\begin{align*}
z&{}:=p(p(c_\tau,b_\gamma,b),p(c_\tau,b_\gamma,b_\gamma),c_\tau)\ (\in C)\\
\hat{z}&{}:=p(b^{(m-1)},c_\tau,c_\tau)\ (\in C)
\end{align*}
for the fork $(\tau,\beta^\tau)\in\fork_m(C)$. 
The analogous construction for the forks $(\gamma,\gamma)$
and $(\beta,\gamma)$ yields witnesses
\begin{align*}
w&{}:=p(p(b_\gamma,b_\gamma,b),p(b_\gamma,b_\gamma,b_\gamma),b_\gamma)=
p(b,b_\gamma,b_\gamma)\ (\in B)\\
\hat{w}&{}:=p(b^{(m-1)},b_\gamma,b_\gamma)\ (\in B)
\end{align*}
for the fork $(\gamma,\beta^\gamma)\in\fork_m(B)$. 
Since $w,\hat{w}$ are obtained from $z,\hat{z}$ by replacing $c_\tau$ with
$b_\gamma$,
the relation $b_\gamma \equiv_{\theta} c_\tau$ 
in \eqref{eq-bc2}
implies that
$z\equiv_{\theta} w$ and $\hat{z}\equiv_{\theta} \hat{w}$.
Applying Lemma~\ref{lm-newfork}
again, now to the forks $(\tau,\beta^\tau)$ and
$(\sigma^\tau,\beta^\tau)$, we obtain witnesses
\begin{align*}
v^{(m)}&{}:=p(p(z,\hat{z},c_{\beta^\tau}),p(z,\hat{z},\hat{z}),z)\ (\in C)\\
\hat{v}^{(m)}&{}:=p(c_{\sigma^\tau},z,z)\ (\in C)
\end{align*}
for the fork $(\tau,\sigma^{\tau^2})=(\tau,(\sigma^\tau)^\tau)\in\fork_m(C)$.
Similarly, for the corresponding forks 
$(\gamma,\beta^\gamma)$ and
$(\beta^\gamma,\beta^\gamma)$, we get witnesses
\begin{align*}
u^{(m)}&{}:=p(p(w,\hat{w},b_{\beta^\gamma}),p(w,\hat{w},\hat{w}),w)\ (\in B)\\
\hat{u}^{(m)}&{}:=p(b_{\beta^\gamma},w,w)\ (\in B)
\end{align*}
for the fork $(\gamma,\beta^{\gamma^2})=(\gamma,(\beta^\gamma)^\gamma)\in\fork_m(B)$.
Since $u^{(m)}$ and $\hat{u}^{(m)}$ are obtained from 
$v^{(m)}$ and $\hat{v}^{(m)}$ by replacing $z$ with $w$, $\hat{z}$ with $\hat{w}$, and
$c_{\sigma^\tau}, c_{\beta^\tau}$ with $b_{\beta^\gamma}$, the relations
$z\equiv_{\theta} w$, $\hat{z}\equiv_{\theta} \hat{w}$ proved earlier, and
the relations 
$b_{\beta^\tau}\equiv_{\theta} c_{\beta^\tau}\equiv_{\theta}c_{\sigma^\tau}$ 
in \eqref{eq-combined-bc} 
imply that
$v^{(m)}\equiv_{\theta} u^{(m)}$ and
$\hat{v}^{(m)}\equiv_{\theta} \hat{u}^{(m)}$.
This completes the induction for~\eqref{eq-bm}--\eqref{eq-theta-related} and the proof of the theorem.
\end{proof}

\section{Algorithms Involving Compact Representations}
\label{sec-cr-algorithms}

Throughout this section we will work under the following global
assumptions.

\begin{asm}\quad
  \begin{itemize}
  \item
    $\var{V}$ is a variety in a finite language $F$
    with a $\ddd$-cube term ($\ddd>1$),
  \item
    $P$ is a $(1,\ddd-1)$-parallelogram term in $\var{V}$ (the existence
    of such a term is ensured by Theorem~\ref{thm-cube-par}), and
  \item
    $\class{K}$ is a finite set of finite algebras in $\var{V}$.
  \end{itemize}  
\end{asm}    

Our aim in this section is to use the results of Section~\ref{sec-cRep}
to show that 
\begin{enumerate}
  \item[$\blacktriangleright$]
the problems $\SMP(\class{K})$ and $\SMP(\HH\SSS\class{K})$
are polynomial time equivalent (Theorem~\ref{thm-smp-hom}),
  \item[$\blacktriangleright$]
$\SMP(\class{K})$ is in $\NP$ (Theorem~\ref{thm-short-rep}), 
and
\item[$\blacktriangleright$]
$\SMP(\class{K})$
and the problem
of finding compact representations
for algebras in $\SSS\PPP_{\rm fin}\class{K}$ given by their generators
are polynomial time reducible to each other
(Theorem~\ref{thm-smp-cr-equiv}).
\end{enumerate}

Recall that if $\al{B}$ is a subalgebra of \mbox{$\al A_1\times\dots\times\al A_n$}
with $\al A_1,\dots,\al A_n\in\class{K}$, then a representation for $\al{B}$ is a subset $R$ of $B$ satisfying
  conditions~\eqref{it:I}--\eqref{it:Fi} of Definition~\ref{df-rep} (with $e=1$).
Condition~\eqref{it:I} in the definition
makes sure that for every 
$I\subseteq[n]$ with $|I|=\min\{n,\ddd-1\}$ and for every
tuple $\bar{b}\in B|_I$ there exists an element $r_{I,\bar{b}}$ in $R$ such that $r_{I,\bar{b}}|_I=\bar{b}$.
Similarly, condition~\eqref{it:Fi} makes sure that for every derived fork $(\gamma,\delta)\in\fork'_i(B)$ ($i\in[n]$)
there exist elements $u_{i,\gamma,\delta},\hat{u}_{i,\gamma,\delta}$ in $R$ which witness that $(\gamma,\delta)\in\fork_i(R)$.
For some of the algorithms we are going to discuss, it will be convenient to
fix these choices for witnesses in $R$.
This motivates our definition of a standardized representation below.
 
 Before the definition, let us note that if $n<\ddd$,
  condition~\eqref{it:I} forces $R=B$, so finding representations
  for $\al{B}$ in this case is trivial. Therefore, we will assume that
  $n\ge\ddd$. Then, by condition~\eqref{it:I}, we have
  $R|_{[\ddd-1]}=B|_{[\ddd-1]}$, hence one can easily find witnesses
  among the elements $r_{[\ddd-1],\bar{b}}\in R$
  for the forks in coordinates $i\in[\ddd-1]$.
  Therefore, witnesses for forks are of interest
  only in coordinates $i=\ddd,\ddd+1,\dots,n$.
  (See also Lemma~\ref{lm-rep-generates}.)

\begin{defi}
\label{df-strep}
 For a subalgebra $\al{B}$ of \mbox{$\al A_1\times\dots\times\al A_n$}
with $\al A_1,\dots,\al A_n\in\class{K}$ ($n\geq\ddd$),
we will call a representation $R$ for $\al{B}$
 a \emph{standardized representation for $\al{B}$}
if the following conditions are satisfied:
\begin{enumerate}
\item
  \label{it:strep1}
  For every $I\subseteq[n]$ with $|I|=\ddd-1$ and for every tuple $\bar{b}\in B|_I$, an element
$r_{I,\bar{b}}$ of $R$ is fixed so that $r_{I,\bar{b}}|_I=\bar{b}$;
this element will be referred to as \emph{the designated witness in $R$ for $\bar{b}\in R|_I$}.
\item
  \label{it:strep2}
  For every $m$ ($\ddd\leq m\leq n$) there is a set $F_m$ with 
$\fork'_m(B)\subseteq F_m\subseteq\fork_m(R)$ such that
for every $(\gamma,\delta)\in F_m$ a pair 
$(u_{m,\gamma,\delta},\hat{u}_{m,\gamma,\delta})$
in $R^2$ is fixed
which witnesses that $(\gamma,\delta)\in\fork_m(R)$;
these elements
$u_{m,\gamma,\delta},\hat{u}_{m,\gamma,\delta}$
will be referred to as 
\emph{the designated witnesses in $R$ for $(\gamma,\delta)\in\fork_m(R)$}.

\item
    \label{it:strep3}
Every element of $R$ has at least one designation.
\end{enumerate}
We will refer to the designated witnesses in \eqref{it:strep1}
and \eqref{it:strep2}
as \emph{local witnesses for $\al{B}$} and \emph{fork witnesses for $\al{B}$,
respectively.}
\end{defi}

  Formally, a standardized representation for $\al{B}$
  is the representation $R$ together with
    a \emph{designation function}
    \begin{multline}\label{eq-desig-fn}
      \Bigl(\bigl\{(I,\bar{b}):I\in{\textstyle\binom{[n]}{\ddd-1}},\,
      \bar{b}\in B|_I\bigr\}\sqcup
    \bigcup_{m=\ddd}^n \bigl(\{m\}\times F_m\times[2]\bigr)\Bigr)\to R,\\
          (I,\bar{b})\mapsto r_{I,\bar{b}},\
          (m,(\gamma,\delta),1)\mapsto u_{m,\gamma,\delta},\
          (m,(\gamma,\delta),2)\mapsto \hat{u}_{m,\gamma,\delta} 
    \end{multline}
    where the notation is the same as in
    Definition~\ref{df-strep}~\eqref{it:strep1}--\eqref{it:strep2},
    and $\sqcup$ denotes disjoint union.
    Condition~\eqref{it:strep3} in the definition requires
    the designation function to be onto.
    For our purposes in this paper it will be sufficient to treat the
    designation function of a standardized representation informally, as we
    did in Definition~\ref{df-strep}.

\begin{rem}
\label{rm-representable}
It is clear that under the assumptions of Definition~\ref{df-strep}
  a standardized representation for $\al{B}$
is compact.
Moreover, it follows from
Lemma~\ref{lm-rep-generates}
that if
$R$ is a standardized representation for $\al B$, then
the equality \eqref{eq-rep-b} in Lemma~\ref{lm-Ts} (with $e=1$)
holds provided
\begin{itemize}
\item
$\bar{b^I}$ is an enumeration of the elements $b^I$ 
of $R$ designated to witness $b^I|_I=b|_I$ for all
$I\in\binom{[n]}{\ddd-1}$; and
\item
for every $\ddd\le m\le n$, the pair $(u^{(m)}, \hat{u}^{(m)})$
is the designated witness in $R$ for the fork 
$(\gamma,\beta^\gamma)\in\fork_m(R)$ where
$\gamma$ and $\beta$ are determined by \eqref{eq-betagamma}.
\end{itemize}
\end{rem}

 Now we are ready to describe in more detail what we mean by
  the problem of finding compact representations
for algebras in $\SSS\PPP_{\rm fin}\class{K}$ given by their generators:

\medskip

\noindent $\CompactRep(\class{K})$:
\begin{itemize}
\item
INPUT:  
$a_1,\dots,a_k\in \al A_1\times\dots\times\al A_n$
with $\al A_1,\ldots,\al A_n\in\class{K}$ ($n\ge\ddd$).
\item
OUTPUT: A standardized representation for the subalgebra of 
$\al A_1\times\dots\times\al A_n$ 
generated by $\{a_1,\dots,a_k\}$.
\end{itemize}

\smallskip

In Subsections~\ref{ssec-finding-crs1} and \ref{ssec-finding-crs2} below
we will discuss two different algorithms for
solving $\CompactRep(\class{K})$.
The idea of the first algorithm (Algorithm~\ref{alg-cr-direct})
will be key to showing in Subsection~\ref{ssec-smp-np}
that $\SMP(\class{K})$ is in $\NP$.
The second algorithm
(Algorithm~\ref{alg-smp-red-cr}),
on the other hand, will use $\SMP(\class{K})$ to solve
$\CompactRep(\class{K})$.
Prior to our first algorithm for $\CompactRep(\class{K})$ we will
introduce two auxiliary algorithms (Subsection~\ref{ssec-aux}), and
then prove in Subsection~\ref{ssec-HS} that the problems
$\SMP(\class{K})$ and $\SMP(\HH\SSS\class{K})$ are polynomial time equivalent.

\subsection{Two auxiliary algorithms}
\label{ssec-aux}

When we construct standardized representations, the following concepts
will be useful.

\begin{defi}
\label{df-partial-strep}
Let $\al A_1,\ldots,\al A_n\in\class{K}$ with $n\ge\ddd$, and let 
$R\subseteq\al A_1\times\dots\times\al A_n$.
We will call $R$ a \emph{partial standardized representation}
if every element of $R$ is designated to witness
either $\bar{b}\in R|_I$ for some $I\in\binom{[n]}{\ddd-1}$ or a fork
in $\fork_m(R)$ for some $m$ ($\ddd\le m\le n$).

If $R$ is a partial standardized representation that is contained
in a subalgebra $\al{B}$ of $\al A_1\times\dots\times\al A_n$,
we may refer to $R$ as a \emph{partial standardized representation for
$\al{B}$}.
If, in addition, $R$ satisfies condition \eqref{it:strep1} of
Definition~\ref{df-strep}, we will say that \emph{$R$ contains a full set of
  designated local witnesses for $\al{B}$}.
\end{defi}

  Similarly to standardized representations, a partial standardized
  representation is formally defined to be a set $R$ satisfying
  the requirements of Definition~\ref{df-partial-strep}, together with
  a \emph{designation function}, which is a function mapping onto $R$,
  similar to
  \eqref{eq-desig-fn}, but the domain is a subset of the set
  \[
   \Bigl\{(I,\bar{b}):I\in{\textstyle\binom{[n]}{\ddd-1}},\,
      \bar{b}\in \prod_{i\in I}\al{A}_i\Bigr\}\sqcup
    \bigcup_{m=\ddd}^n \bigl(\{m\}\times A_m^2\times[2]\bigr).
    \]
 If $R$ is a partial standardized representation for
 a subalgebra $\al{B}$ of $\al A_1\times\dots\times\al A_n$, then
 the domain of its designation function is a subset of the
 domain indicated in \eqref{eq-desig-fn}. 
 Although we will usually treat the designation functions of partial
 standardized representations $R$ informally, we will consider the size of $R$
 to be the size of the domain of its designation function (not just the
 cardinality of the set $R$). In particular, if $R$ is a partial
 standardized representation for
 a subalgebra $\al{B}$ of $\al A_1\times\dots\times\al A_n$
 such that $R$ contains
 a full set of designated local witnesses for $\al{B}$, then
 its size is $\Theta(n^{\ddd-1})$.

\begin{defi}
\label{df-representable}
 Let $\al A_1,\ldots,\al A_n\in\class{K}$ with $n\ge\ddd$, let
  $R\subseteq\al A_1\times\dots\times\al A_n$ be a partial standardized
  representation, and let 
  $b\in\al A_1\times\dots\times\al A_n$.
We will say that \emph{$b$ is representable by $R$} if
the following conditions are met:
\begin{enumerate}
\item
  \label{it:representable1}
for each $I\in\binom{[n]}{\ddd-1}$,
$R$ contains elements $b^I$ designated to witness $b^I|_I=b|_I$, and 
\item
  \label{it:representable2}
for every $\ddd\le m\le n$, 
$R$ contains designated witnesses $u^{(m)},\hat{u}^{(m)}$ 
for the fork $(\gamma,\beta^\gamma)$
where $\beta,\gamma$ are as defined in
\eqref{eq-betagamma} (with $e=1$).
\end{enumerate}
We will say that \emph{$b$ is completely representable by $R$} if
  condition \eqref{it:representable1} and the following stronger version of
  \eqref{it:representable2} hold:
\begin{enumerate}
\item[\eqref{it:representable2}${}_{\rm c}$]
for every $\ddd\le m\le n$, 
$R$ contains designated witnesses $u^{(m)},\hat{u}^{(m)}$ 
and $v^{(m)},\hat{v}^{(m)}$ 
for the forks $(\gamma,\beta^\gamma)$
and $(\gamma,\beta)$ in $\fork_m(R)$ 
where $\beta,\gamma$ are as defined in
\eqref{eq-betagamma} (with $e=1$).
\end{enumerate}

\end{defi}

The following fact is an easy consequence of Lemma~\ref{lm-Ts}.

\begin{cor}
\label{cor-representable}
Let
$R\subseteq\al A_1\times\dots\times\al A_n$
$(\al A_1,\ldots,\al A_n\in\class{K},\ n\ge\ddd)$
be a partial standardized representation.
If an element $b\in\al A_1\times\dots\times\al A_n$ is representable by $R$,
then the equality in
\eqref{eq-rep-b} holds for $b$ $($with $e=1)$, and
hence $b$ is in the $P$-subalgebra of $\al A_1\times\dots\times\al A_n$
generated by $R$.
\end{cor}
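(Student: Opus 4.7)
The plan is to obtain this corollary as an essentially immediate specialization of Lemma~\ref{lm-Ts} to the setting $e=1$. Concretely, I will verify that the notion of ``$b$ is representable by $R$'' from Definition~\ref{df-representable} packages exactly the hypotheses (a)--(b) of Lemma~\ref{lm-Ts} with $e=1$, and then quote the lemma to produce the displayed equality \eqref{eq-rep-b}.

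First I would fix $b\in\al{A}_1\times\dots\times\al{A}_n$ representable by $R$, and unpack the definition. Condition~\eqref{it:representable1} provides, for each $I\in\binom{[n]}{\ddd-1}$, a designated element $b^I\in R$ with $b^I|_I=b|_I$; this is precisely hypothesis~(a) of Lemma~\ref{lm-Ts}. Condition~\eqref{it:representable2} provides, for each $m$ with $\ddd\le m\le n$, designated elements $u^{(m)},\hat{u}^{(m)}\in R$ witnessing the fork $(\gamma,\beta^{\gamma})\in\fork_m(R)$, where $\gamma$ and $\beta$ are defined by \eqref{eq-betagamma} with $e=1$ in terms of the previously chosen data $u^{(\ddd)},\hat{u}^{(\ddd)},\dots,u^{(m-1)},\hat{u}^{(m-1)}$ and $\bar{b^I}^{[m-1]}$. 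Since $e=1$ makes $\beta^{\gamma^e}=\beta^{\gamma}$, this is exactly hypothesis~(b) of Lemma~\ref{lm-Ts}. Note that in the definition of ``representable'', the inductive reference of $\beta$ to the earlier chosen witnesses matches verbatim the inductive setup of Lemma~\ref{lm-Ts}, so no reindexing is needed.

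Having matched hypotheses, Lemma~\ref{lm-Ts} applies directly and yields
\[
b \;=\; T_n\bigl(u^{(\ddd)},\hat{u}^{(\ddd)},\dots,u^{(n)},\hat{u}^{(n)},\bar{b^I}\bigr),
\]
which is the desired equality \eqref{eq-rep-b}. Since $T_n$ is a $P$-term (as constructed from the $P$-terms $t_m$ of Lemma~\ref{lm-terms}) and all of its arguments $u^{(m)},\hat{u}^{(m)},b^I$ lie in $R$, this expresses $b$ as the value of a single $P$-term on elements of $R$. Therefore $b$ belongs to the $P$-subalgebra $\langle R\rangle_P$ of $\al{A}_1\times\dots\times\al{A}_n$, completing the proof.

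There is essentially no obstacle here; the only point that requires a sentence of care is matching the indexing of the designated witnesses in Definition~\ref{df-representable} with the inductive specification of $\beta$ and $\gamma$ in \eqref{eq-betagamma}, and observing that ``representable'' (as opposed to ``completely representable'') supplies only the fork witnesses for $(\gamma,\beta^{\gamma})$, which is already sufficient for the $e=1$ case of Lemma~\ref{lm-Ts}.
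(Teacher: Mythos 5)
Your proof is correct and follows exactly the route the paper intends: the paper states this corollary without proof as "an easy consequence of Lemma~\ref{lm-Ts}", and your argument simply verifies that Definition~\ref{df-representable} (representability) supplies precisely hypotheses (a)--(b) of that lemma with $e=1$, then quotes it to get \eqref{eq-rep-b} and hence membership in $\langle R\rangle_P$. Nothing further is needed.
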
  

The proof of Lemma~\ref{lm-Ts} can easily be turned into a
polynomial time algorithm for solving the following problem.

\medskip

\begin{samepage}
\noindent $\IsRepresentable(\class{K})$: 
\begin{itemize}
\item
INPUT: $b\in \al A_1\times\dots\times\al A_n$ ($n\geq\ddd$)
and a partial standardized representation\\
$R\subseteq \al A_1\times\dots\times\al A_n$
($\al A_1,\ldots,\al A_n\in\class{K}$) such that
$R$ contains elements $b^I$ designated to witness $b^I|_I=b|_I$
for each $I\in\binom{[n]}{\ddd-1}$.
\item
OUTPUT: (YES, $\emptyset$, $S$) or (NO, $S'$, $S$) where
$S',S\,(\subseteq\langle R\cup\{b\}\rangle_P)$ are lists of 
designated witnesses 
(missing from $R$) for the derived forks and for the forks that 
are not derived forks,
respectively, so that $b$ becomes representable(completely representable) by 
the partial standardized representation $R\cup S'$ ($R\cup S'\cup S$, respectively).
\end{itemize}
\end{samepage}

\smallskip

Note that the designated witnesses (for forks that are not derived forks) 
collected in the set $S$ do not play a role in determining whether or not
$b$ is representable, but they will be useful in other algorithms that call
$\IsRepresentable(\class{K})$.

\def\Input{\text{\bf Input:}}
\def\Output{\text{\bf Output:}}
\def\Question{\text{\bf Question:}}
\def\for{\text{\bf for}}
\def\endfor{\text{\bf end for}}
\def\do{\text{\bf do}}
\def\ifif{\text{\bf if}}
\def\endif{\text{\bf end if}}
\def\then{\text{\bf then}}
\def\els{\text{\bf else}}
\def\while{\text{\bf while}}
\def\endwhile{\text{\bf end while}}
\def\return{\text{\bf return}}

\begin{lem}
\label{lm-isrep}
Algorithm~\ref{alg-ir} solves $\IsRepresentable(\class{K})$ in polynomial time.
\end{lem}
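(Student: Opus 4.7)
The plan is to translate the inductive construction in the proof of Lemma~\ref{lm-Ts} into an iterative algorithm. We initialize $b^{(\ddd-1)}:=b^{[\ddd-1]}$, which is the designated local witness supplied by the input, and then loop over $m=\ddd,\ddd+1,\ldots,n$, maintaining the invariant
\[
b^{(m)}|_{[m]} \;=\; b|_{[m]},\qquad
b^{(m)}\;:=\;T_m(u^{(\ddd)},\hat{u}^{(\ddd)},\ldots,u^{(m)},\hat{u}^{(m)},\bar{b^I}^{[m]}).
\]
Along the way we accumulate the output sets $S'$ and $S$ together with their designations, and a flag that records whether any witness for a derived fork had to be created.

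At iteration $m$ we compute $\gamma:=b|_m$ and $\beta:=b^{(m-1)}|_m$, and consult the designation function of $R$. If $R$ already contains designated witnesses $u^{(m)},\hat{u}^{(m)}$ for the derived fork $(\gamma,\beta^\gamma)\in\fork_m(R)$, we reuse them; otherwise we flip the flag to \emph{NO} and construct a pair in $\langle R\cup\{b\}\rangle_P$ as follows. By the invariant, the pair $(b^{(m-1)},b)$ witnesses $(\gamma,\beta)\in\fork_m(\langle R\cup\{b\}\rangle_P)$, so a single application of $p$ as in the proof of Lemma~\ref{lm-der-forks}\eqref{it:df} yields explicit witnesses for $(\gamma,\beta^\gamma)$, which are added to $S'$ with the designation required by Definition~\ref{df-strep}\eqref{it:strep2}. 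In parallel we perform the same lookup for the non-derived fork $(\gamma,\beta)$ and, if it is missing from $R$, adjoin the pair $(b^{(m-1)},b)$ to $S$ with the appropriate designation. On termination we output $(\text{YES},\emptyset,S)$ if the flag was never flipped, and $(\text{NO},S',S)$ otherwise.

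Correctness is immediate from Lemma~\ref{lm-Ts}: the invariant forces $b^{(n)}=b$, and by construction $R\cup S'$ carries all designated witnesses demanded by Definition~\ref{df-representable}\eqref{it:representable1}--\eqref{it:representable2}, while $R\cup S'\cup S$ additionally supplies the witnesses needed for the stronger variant \eqref{it:representable2} of that definition. For the time bound, we preprocess $R$ in $O(|R|)$ time so that every designation lookup runs in $O(1)$; each of the $n-\ddd+1$ iterations then performs only a constant number of lookups and a constant number of applications of $P$ to $n$-tuples, so the total running time is polynomial in the size of the input.

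The main obstacle is the bookkeeping of designations. Formally, a (partial) standardized representation consists of a set of tuples together with a designation function in the sense of~\eqref{eq-desig-fn}, so we must verify (i)~that each new element placed in $S'$ or $S$ is attached to a unique and consistent designation matching Definition~\ref{df-strep}, and (ii)~that the algorithm never needs to reassign a designation that $R$ already carries. Since new witnesses are introduced only for forks that previously lacked designated witnesses in $R$, no conflict can arise, but spelling this out cleanly is the part of the argument that requires care.
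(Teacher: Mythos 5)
Your proposal follows essentially the same route as the paper's proof: it traces the induction of Lemma~\ref{lm-Ts} (with $e=1$), reuses or creates the designated witnesses $(b,\,p(b^{(m-1)},b,b))$ for the derived fork $(\gamma,\beta^\gamma)$ and $(b,\,b^{(m-1)})$ for the fork $(\gamma,\beta)$ exactly as Algorithm~\ref{alg-ir} does, and decides YES/NO according to whether $S'$ stays empty. The only slip is in the timing accounting: each iteration must evaluate $t_m$, whose $P$-circuit has size $O(m^{\ddd})$ by Lemma~\ref{lm-terms}~\eqref{it:terms2}, so an iteration costs $O(n^{\ddd})$ applications of $P$ rather than a constant number; the corrected total is $O(n^{\ddd+2})$ as in the paper, so the polynomial-time conclusion still stands.
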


\begin{algorithm}
\caption{For $\IsRepresentable(\class{K})$}
\label{alg-ir} 
\begin{algorithmic}[1] 
\REQUIRE $b\in \al A_1\times\dots\times\al A_n$
and a partial standardized representation
$R\subseteq \al A_1\times\dots\times\al A_n$
($\al A_1,\ldots,\al A_n\in\class{K}$, $n\geq\ddd$) such that
$R$ contains elements $b^I$ designated to witness $b^I|_I=b|_I$
for each $I\in\binom{[n]}{\ddd-1}$
\ENSURE (YES, $\emptyset$, $S$) or (NO, $S'$, $S$) where
$S',S\,(\subseteq\langle R\cup\{b\}\rangle_P)$ are lists of 
designated witnesses 
(missing from $R$) for the derived forks and for the forks that 
are not derived forks,
respectively, so that $b$ becomes representable
(completely representable) by 
the partial standardized representation $R\cup S'$
($R\cup S'\cup S$, respectively)
\STATE $S:=\emptyset$, $S':=\emptyset$, 
\STATE $b^{(\ddd-1)}:=b^{[\ddd-1]}$
\FOR{$m=\ddd ,\dots,n$}
\STATE $\beta:=b^{(m-1)}|_m$, $\gamma:=b|_{m}$, $c:=p(b^{(m-1)},b,b)$
\IF{$R\cup S'$ has no designated witnesses for 
$(\gamma,\beta^\gamma)\in\fork_m(R\cup S')$} 
\STATE add $b,c$ to $S'$ as designated witnesses for 
$(\gamma,\beta^\gamma)\in\fork_m(R\cup S')$
\ENDIF
\IF{$R\cup S'\cup S$ has no designated witnesses for 
$(\gamma,\beta)\in\fork_m(R\cup S'\cup S)$}
\STATE add $b,b^{(m-1)}$ to $S$ as designated witnesses for 
$(\gamma,\beta)\in\fork_m(R\cup S'\cup S)$
\ENDIF
\STATE let $u,\hat{u}\in R\cup S'$ be the designated witnesses for
$(\gamma,\beta^\gamma)\in\fork_m(R\cup S')$
\STATE $b^{(m)}=t_m(b^{(m-1)},\hat{u},u,\bar{b^I}^{[m]})$
\ENDFOR 
\IF{$S'=\emptyset$} 
\STATE \return\ (YES, $S'$, $S$)
\ELSE 
\STATE \return\ (NO, $S'$, $S$)
\ENDIF
\end{algorithmic}
\end{algorithm}

\begin{proof}
First we prove that Algorithm~\ref{alg-ir} is correct.
In Step~2, $b^{[\ddd-1]}\in R$ is the designated witness for
$b^{[\ddd-1]}|_{[\ddd-1]}=b|_{[\ddd-1]}$.
The loop in Steps~3--13
follows the induction in the
proof of Lemma~\ref{lm-Ts} (with $e=1$).
For each $m=\ddd,\dots,n$, if $b^{(m-1)}$,
$S'$ and $S$ (with $S'=S=\emptyset$ for $m=\ddd$ by Step~1)
have
been constructed
such that
  $S',S\subseteq\langle R\cup\{b\}\rangle_P$,
$b^{(m-1)}\in\langle R\cup S'\rangle_P$,%
and 
$b^{(m-1)}|_{[m-1]}=b|_{[m-1]}$, then to construct 
$b^{(m)}=t_m(b^{(m-1)},\hat{u},u,\bar{b^I}^{[m]})$ we need 
designated witnesses
$u$, $\hat{u}$ for the fork $(\gamma,\beta^\gamma)\in\fork_m(R\cup S')$ 
where
$\gamma=b|_m$ and $\beta=b^{(m-1)}|_m$.
  Furthermore, for $b$ to be completely representable by $R\cup S'\cup S$,
  we need designated witnesses
$v$, $\hat{v}$ for the fork $(\gamma,\beta)\in\fork_m(R\cup S'\cup S)$.

Clearly, $b,b^{(m-1)}$ 
witness the fork
$(\gamma,\beta)\in\fork_m(\langle R\cup\{b\}\rangle_B)$.
It follows that $b$ and 
$c=p(b^{(m-1)},b,b)$ witness the derived fork
$(\gamma,\beta^\gamma)\in\fork'_m(\langle R\cup\{b\}\rangle_P)$, because
$c|_{[m-1]}=p(b^{(m-1)},b,b)|_{[m-1]}=p(b,b,b)|_{[m-1]}=b|_{[m-1]}$
and
$b|_m=\gamma$, 
$c|_m=p(b^{(m-1)},b,b)|_m=p(\beta,\gamma,\gamma)=\beta^\gamma$.
  Therefore, if the required designated witnesses $u,\hat{u}$ do not exist
  in $R\cup S'$, the algorithm (in Steps~5--7) correctly adds $b,c$
  to be the designated witnesses.
  Similarly, if the required designated witnesses $v,\hat{v}$ do not exist
  in $R\cup S'\cup S$, the algorithm (in Steps~8--10) correctly adds
  $b,b^{(m-1)}$
  to be the designated witnesses.
  Notice also that the updated versions $S'$ and $S$ maintain the property
  $S',S\subseteq\langle R\cup\{b\}\rangle_P$ when the loop starts over
  (if $m<n$) or ends (if $m=n$)
  after computing $b^{(m)}$ in Step~12.

  Thus, the standardized representations $R\cup S'$ and $R\cup S'\cup S$ 
  satisfy the requirements that $b$ is representable by $R\cup S'$ and
  completely representable by $R\cup S'\cup S$. Since new designated
  fork witnesses
  (at most one pair for each $m=\ddd,\dots,n$)
  are added to $S'$ or $S$ only if such witnesses do not exist, but
  are necessary for the (complete) representability of $b$, we see that
  $b$ is representable by $R$ if and only if $S'=\emptyset$.  
This proves the correctness of Algorithm~\ref{alg-ir}.

To estimate the running time of Algorithm~\ref{alg-ir}, notice that 
Step~1
requires constant time, while 
Steps~2 and 14--18
can be done in time $O(n|R|)$
and $O(n|S|+n|S'|)$, respectively, where $|R|+|S|+|S'|\le O(n^{\ddd-1})$.
Finally, by Lemma~\ref{lm-terms}~\eqref{it:terms2}
(with $e=1$),
if $b$ is representable by $R$,
the computations in the loop in 
Steps~3--13
require
$O(n^{\ddd+1})$
applications of $P$.
So Steps~3--16
(including the search for witnesses for forks in $R$) 
can be done in $O(n^{\ddd+2})$ time.
If $b$ is not representable by $R$, 
essentially the same computation is performed, 
so the bound $O(n^{\ddd+2})$ applies in this case as well.
This proves that Algorithm~\ref{alg-ir} runs in time $O(n^{\ddd+2})$.
\end{proof}

\begin{rem}
  \label{rem-alg1}
The definition of the $P$-term $T_n$ shows that
Algorithm~\ref{alg-ir} may be viewed as doing the
following two computations {\bf simultaneously},
when it is run on the inputs $b$ and $R$:
\begin{itemize}
\item
it evaluates the $P$-term $T_n$ (for $e=1$)
step-by-step on the tuple $\bar{b^I}$ of designated local witnesses in $R$ and 
on appropriately chosen designated fork witnesses
$u^{(\ddd)},\hat{u}^{(\ddd)},\dots,u^{(n)},\hat{u}^{(n)}$ to obtain
the elements
\[
b^{(m)}=T_m(u^{(\ddd)},\hat{u}^{(\ddd)},\dots,u^{(m)},\hat{u}^{(m)},\bar{b^I}^{[m]}),
\quad m=\ddd,\dots,n,
\]
which, for $m=n$, yields the desired representation
\[
b=T_n(u^{(\ddd)},\hat{u}^{(\ddd)},\dots,u^{(n)},\hat{u}^{(n)},\bar{b^I})
\]
of $b$;
\item
  at each step, to be able to evaluate $T_m$,
  it produces (namely, it finds in $R\cup S'$ or, if missing,
  adds to $S'$)
the designated fork witnesses $u^{(m)},\hat{u}^{(m)}$
for derived forks needed for the evaluation of $T_m$, and does the same with
the designated fork witnesses for the corresponding non-derived forks.
\end{itemize}
So, in addition to some book-keeping to ensure that the necessary
designated fork witnesses (and nothing else) get into the sets $S'$ and $S$,
the computations done by Algorithm~\ref{alg-ir}
in the algebra
$\al{A}_1\times\dots\times\al{A}_n$ can be performed by a $P$-circuit
$\crc{T_n}$ for $T_n$, which is expanded at each node
that produces an element $b^{(m-1)}$ ($m=\ddd,\dots,n$)
by a single $P$-gate that also computes $p(b^{(m-1)},b,b)$
(see the definition of $p$ in \eqref{eq-ps}).
We will denote this expanded circuit by $\crcplus{T_n}$.
Since $\crcplus{T_n}$ is obtained from $\crc{T_n}$ by
adding at most $n-\ddd+1$ $P$-gates,
it follows from \eqref{eq:cTn} (and $e=1$)
that $\crcplus{T_n}$ has
size $O(n^{\ddd+1})+O(n)=O(n^{\ddd+1})$.
\end{rem}

Now we define another problem which will be useful in constructing
  standardized representations for subalgebras of
  $\al{A}_1\times\dots\times\al{A}_n$ ($\al{A}_1\dots,\al{A}_n\in\class{K}$),
  and also for the polynomial time reduction of $\SMP(\HH\SSS\class{K})$  
  to $\SMP(\class{K})$: 

\smallskip

\noindent $\LocalRep(\class{K})$:
\begin{itemize}
\item
INPUT: $a_1,\dots,a_k\in\al{B}_1\times\dots\times\al{B}_n$ with
$\al{B}_1,\dots,\al{B}_n\in\SSS\class{K}$ ($n\ge\ddd$), and
$\theta=\theta_1\times\dots\times\theta_n$ with $\theta_i\in\Con(\al{B}_i)$
for all $i\in[n]$. (Let $\al{B}$ denote the subalgebra of
$\al{B}_1\times\dots\times\al{B}_n$ generated by $\{a_1,\dots,a_k\}$.)
\item
OUTPUT: Partial standardized representation $R$ for the
$\theta$-saturation  $\al{B}[\theta]$ of $\al{B}$ such that
$R$ is a full set of designated local witnesses
for $\al{B}[\theta]$, and
whenever $r_{I,\bar{b}},r_{I,\bar{d}}\in R$ are
designated to witness $r_{I,\bar{b}}|_I=\bar{b}$
and $r_{I,\bar{d}}|_I=\bar{d}$ where $\bar{b}\in\al{B}|_I$ and
$\bar{b}\equiv_{\theta|_I}\bar{d}$, then $r_{I,\bar{b}}\equiv_\theta r_{I,\bar{d}}$.
\end{itemize}

\smallskip

In particular, if $\theta_i=0$ for all $i\in[n]$, then
the latter condition holds vacuously for $R$, therefore
the output of $\LocalRep(\class{K})$ is the subset $R$ of
a standardized representation for $\al{B}$, which 
is a full set of designated local witnesses
for $\al{B}$.

\begin{lem}
\label{lm-locrep}
Algorithm~\ref{alg-locrep} solves $\LocalRep(\class{K})$ in polynomial time.
\end{lem}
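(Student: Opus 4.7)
The plan is to process each $(\ddd-1)$-element coordinate set $I\in\binom{[n]}{\ddd-1}$ independently, exploiting the fact that $\al{B}|_I$ has size bounded by the constant $\aaa_{\class{K}}^{\ddd-1}$.

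For a fixed $I$, I would first generate $\al{B}|_I$ together with a chosen lift $w(\bar{b})\in\al{B}$ for each $\bar{b}\in\al{B}|_I$ by a straightforward bottom-up closure: start from $a_1,\ldots,a_k$ and iteratively apply the basic operations of $F$ coordinatewise in $\al{B}_1\times\dots\times\al{B}_n$, recording a new tuple whenever its $I$-projection has not yet been seen. Since $|\al{B}|_I|$ is a constant and $F$ is finite with bounded arities, this terminates after $O(1)$ rounds, each costing $O(n)$ for a single evaluation in the ambient product.

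Next, using the direct check that $\al{B}[\theta]|_I$ coincides with the $\theta|_I$-saturation of $\al{B}|_I$ (one direction is obvious from $\theta$-saturation; the other is obtained by lifting $\bar{b}\in\al{B}|_I$ to some $b\in\al{B}$ and modifying $b$ on the coordinates in $I$), I would partition $\al{B}[\theta]|_I$ into $\theta|_I$-classes, which is immediate from the input congruences $\theta_i$ ($i\in I$). For each class $C$ I would pick a base element $\bar{b}^*\in C\cap\al{B}|_I$ and designate $r_{I,\bar{b}^*}:=w(\bar{b}^*)$; for every other $\bar{c}\in C$, designate $r_{I,\bar{c}}$ to be the tuple obtained from $w(\bar{b}^*)$ by overwriting the coordinates indexed by $I$ with $\bar{c}$.

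Correctness then reduces to three immediate checks. First, $r_{I,\bar{c}}|_I=\bar{c}$ by construction. Second, $r_{I,\bar{c}}\equiv_\theta w(\bar{b}^*)$, because on coordinates in $I$ we have $\bar{c}\equiv_{\theta|_I}\bar{b}^*$ while on the remaining coordinates the two tuples agree, and in particular $r_{I,\bar{c}}\in\al{B}[\theta]$. Third, any two designated witnesses $r_{I,\bar{b}},r_{I,\bar{d}}$ arising from the same $\theta|_I$-class are both $\theta$-equivalent to $w(\bar{b}^*)$, which yields the transfer condition (the hypothesis that $\bar{b}\in\al{B}|_I$ is automatic, since every $\theta|_I$-class of $\al{B}[\theta]|_I$ meets $\al{B}|_I$). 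Complexity-wise, there are $\binom{n}{\ddd-1}=O(n^{\ddd-1})$ sets $I$, each contributing $O(n)$ ambient-product work, for a total runtime of $O(n^{\ddd})$.

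The subtle point that shapes the algorithm is the transfer condition itself. Choosing the preimages $w(\bar{b})\in\al{B}$ independently will typically fail it, because two $\theta|_I$-related tuples in $\al{B}|_I$ need not admit any $\theta$-related preimages inside $\al{B}$. Anchoring every witness in a given $\theta|_I$-class to a single base preimage $w(\bar{b}^*)\in\al{B}$ and producing its siblings by coordinate overwriting, which may leave $\al{B}$ but is guaranteed to stay inside $\al{B}[\theta]$, is exactly what forces the required $\theta$-equivalence while keeping everything polynomial in $n$.
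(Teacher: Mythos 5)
Your proposal is correct and is essentially the paper's own construction: the paper's Algorithm~\ref{alg-locrep} likewise, for each $I\in\binom{[n]}{\ddd-1}$, lifts one element of each $\theta|_I$-class of $\al{B}|_I$ to a tuple in $\al{B}$ and obtains all other designated witnesses in that class by overwriting the coordinates in $I$, which is exactly your anchoring trick for the transfer condition, and the complexity count ($O(n^{\ddd-1})$ sets $I$, $O(n)$ work each, constant-size $\al{B}|_I$) matches the paper's $O(n^{\ddd})$ analysis. Your added remark that $\al{B}[\theta]|_I$ equals the $\theta|_I$-saturation of $\al{B}|_I$ just makes explicit the correctness check the paper calls straightforward.
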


\begin{algorithm}
\caption{For $\LocalRep(\class{K})$}
\label{alg-locrep} 
\begin{algorithmic}[1] 
\REQUIRE
$a_1,\dots,a_k\in\al{B}_1\times\dots\times\al{B}_n$ with
$\al{B}_1,\dots,\al{B}_n\in\SSS\class{K}$, and
$\theta=\theta_1\times\dots\times\theta_n$ with $\theta_i\in\Con(\al{B}_i)$
for all $i\in[n]$ ($n\ge\ddd$). (Let $\al{B}$ denote the subalgebra of
$\al{B}_1\times\dots\times\al{B}_n$ generated by $\{a_1,\dots,a_k\}$.)
\ENSURE
$R$ is a partial standardized representation for $\al{B}[\theta]$
such that
$R$ is a full set of designated local witnesses
for $\al{B}[\theta]$, and
whenever $r_{I,\bar{b}},r_{I,\bar{d}}\in R$ are
designated to witness $r_{I,\bar{b}}|_I=\bar{b}$
and $r_{I,\bar{d}}|_I=\bar{d}$ where $\bar{b}\in\al{B}|_I$ and
$\bar{b}\equiv_{\theta|_I}\bar{d}$, then $r_{I,\bar{b}}\equiv_\theta r_{I,\bar{d}}$.
\STATE $R:=\emptyset$
\FOR{$I\in\binom{[n]}{\ddd-1}$}
\STATE generate $\al B|_I$ by $\{a_1|_I,\dots,a_k|_I\}$, and simultaneously,
\FOR{each $\bar{b}\in\al B|_I \setminus R|_I$}
\STATE find an element $r_{I,\bar{b}}\in\al B$ 
such that $r_{I,\bar{b}}|_I=\bar{b}$, include it in $R$,
  and designate it to witness $\bar{b}\in R|_I$
\FOR{all $\bar{d}\in\prod_{i\in I}\al{B}_i$ such that $\bar{d}\not=\bar{b}$ and 
$\bar{d}|_j\equiv_{\theta_j}\bar{b}|_j$ for all $j\in I$}
\STATE include in $R$ the tuple $r_{I,\bar{d}}$ 
defined by $r_{I,\bar{d}}|_I=\bar{d}$  and
$r_{I,\bar{d}}|_{[n]\setminus I}=r_{I,\bar{b}}|_{[n]\setminus I}$
as a designated witness for $\bar{d}\in R|_I$
\ENDFOR 
\ENDFOR 
\ENDFOR

\end{algorithmic}
\end{algorithm}

\begin{proof}
The correctness of Algorithm~\ref{alg-locrep} is straightforward to check.
To estimate its running time, notice that
Step~1 requires constant time, and in
Steps~3--9, each subalgebra $\al B|_I$ can be generated in a constant
number of steps that depends on $\class{K}$ only (and is
independent of the size of the input).
Therefore, the running time of the algorithm
is determined by the number $\binom{n}{\ddd-1}=O(n^{\ddd-1})$ of
iterations of the outermost loop (Steps~2--10)
and the time needed for computing the designated witnesses 
that are added to $R$ in each iteration (Steps~3--9) of the loop, 
which is bounded above by $O(n)$.
Thus, Algorithm~\ref{alg-locrep} runs in $O(n^\ddd)$ time.
\end{proof}

\subsection{The equivalence of $\SMP(\class{K})$ and ~$\SMP(\HH\SSS\class{K})$}
\label{ssec-HS}
In this subsection we prove that the answer to Question~2 in the Introduction
is YES.

\begin{thm}
\label{thm-smp-hom}
The
decision problems $\SMP(\class{K})$ and $\SMP(\HH\SSS\class{K})$ are
polynomial time equivalent.
\end{thm}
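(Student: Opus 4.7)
The direction $\SMP(\class{K})$ reduces to $\SMP(\HH\SSS\class{K})$ is immediate, since $\class{K}\subseteq\HH\SSS\class{K}$ makes every $\SMP(\class{K})$-instance an $\SMP(\HH\SSS\class{K})$-instance. For the opposite direction I would follow the sketch in the paragraph preceding Theorem~\ref{thm-gen-sat}. Given an input $a_1,\dots,a_k,b\in\al{B}_1\times\dots\times\al{B}_n$ with each $\al{B}_i\in\HH\SSS\class{K}$, finiteness of $\HH\SSS\class{K}$ lets me precompute, once and for all, a fixed representation $\al{B}=\al{A}'/\theta$ with $\al{A}'\le\al{A}\in\class{K}$ and $\theta\in\Con(\al{A}')$ for each $\al{B}\in\HH\SSS\class{K}$. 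I identify each $\al{B}_i$ with $\al{A}'_i/\theta_i$, lift the input tuples coordinatewise to representatives $\hat{a}_1,\dots,\hat{a}_k,\hat{b}\in\al{A}'_1\times\dots\times\al{A}'_n$, and set $\theta:=\theta_1\times\dots\times\theta_n$, $\al{B}:=\langle\hat{a}_1,\dots,\hat{a}_k\rangle$. Via the natural isomorphism $\al{B}_1\times\dots\times\al{B}_n\cong(\al{A}'_1\times\dots\times\al{A}'_n)/\theta$ and the definition of $\al{B}[\theta]$, the original question reduces to deciding whether $\hat{b}\in\al{B}[\theta]$ --- an $\SMP(\SSS\class{K})$-instance, hence (as noted in the introduction) an $\SMP(\class{K})$-instance, provided I can produce a generating set for $\al{B}[\theta]$ in polynomial time.

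\textbf{Producing generators for $\al{B}[\theta]$.} For this I would invoke Theorem~\ref{thm-gen-sat}, which asks for auxiliary sets $\LLL, \FFF\subseteq B[\theta]$. I build $\LLL$ by iterating over $I\in\binom{[n]}{\ddd-1}$: generate $\al{B}|_I$ from the projections $\hat{a}_j|_I$ while tracking, for each $\bar{b}\in\al{B}|_I$, a preimage $b_{I,\bar{b}}\in\al{B}$; then, for every $\bar{c}\in\prod_{i\in I}\al{A}'_i$ with $\bar{c}\equiv_{\theta|_I}\bar{b}$, add to $\LLL$ the tuple obtained from $b_{I,\bar{b}}$ by overwriting its $I$-entries with $\bar{c}$. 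Analogously for $\FFF$: for each $i\in[n]$ and $\beta\in\al{B}|_i$ record a preimage $b_{i,\beta}\in\al{B}$, and for each $(\beta,\gamma)\in\theta_i$ add to $\FFF$ the tuple obtained from $b_{i,\beta}$ by overwriting its $i$-th entry with $\gamma$. Since algebras in $\class{K}$ have bounded size, $|\LLL|=O(n^{\ddd-1})$ and $|\FFF|=O(n)$, and each element is computable in time polynomial in $n$. One oracle call to $\SMP(\class{K})$ on generators $\{\hat{a}_1,\dots,\hat{a}_k\}\cup\LLL\cup\FFF$ with target $\hat{b}$ then returns the answer.

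\textbf{Main obstacle.} The key point to verify is that these $\LLL$ and $\FFF$ really satisfy the hypotheses of Theorem~\ref{thm-gen-sat}. This depends on reading those hypotheses existentially (for each pair $(\bar{b},\bar{c})$ or $(\beta,\gamma)$, \emph{some} witness pair is required, without insisting on a functional dependence of the witness on the subscripts); under that reading, the check is immediate, since the tuple $c\in\LLL$ constructed from $b_{I,\bar{b}}$ projects to $\bar{c}$ and differs from $b_{I,\bar{b}}$ only in $I$-coordinates where the entries are $\theta|_I$-related, hence $c\equiv_\theta b_{I,\bar{b}}$, and similarly for $\FFF$. An alternative is to feed $(\hat{a}_1,\dots,\hat{a}_k,\theta)$ to Algorithm~\ref{alg-locrep} and extract $\LLL$ from its output; but since the witnesses $r_{I,\bar{b}}$ returned by $\LocalRep$ need not lie in $\al{B}$, one would still have to supplement them with the recorded preimages $b_{I,\bar{b}}\in\al{B}$ to match Theorem~\ref{thm-gen-sat}'s requirement that $b_{I,\bar{b}}\in\al{B}$.
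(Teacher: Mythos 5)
Your proposal is correct and follows essentially the same route as the paper: lift the input to representatives in a product of members of $\SSS\class{K}$, reduce the question to membership of the lifted target in the saturation $\al{B}[\theta]$, and produce generators for $\al{B}[\theta]$ via Theorem~\ref{thm-gen-sat} with sets $\LLL$ and $\FFF$ built exactly as in the paper's Algorithm~\ref{alg-smp-red-smp} (your direct construction of $\LLL$ is precisely what $\LocalRep$, Algorithm~\ref{alg-locrep}, computes). Your closing caveat about $\LocalRep$ is unnecessary: Step~5 of Algorithm~\ref{alg-locrep} already chooses $r_{I,\bar{b}}\in\al{B}$ for each $\bar{b}\in\al{B}|_I$ (only the $\theta$-shifted witnesses $r_{I,\bar{d}}$ may leave $\al{B}$), and Theorem~\ref{thm-gen-sat} in any case only asks for the existence of some $b_{I,\bar{b}}\in B$, which your tracked preimages supply.
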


\begin{algorithm}
\caption{Reduction of $\SMP(\HH\SSS\class{K})$ to $\SMP(\class{K})$}
\label{alg-smp-red-smp} 
\begin{algorithmic}[1] 
\REQUIRE $c_1,\dots,c_k,c_{k+1}\in\al C_1\times\dots\times\al C_n$ 
with $\al C_1,\dots,\al C_n\in\HH\SSS\class{K}$, $n\geq\ddd$
\ENSURE Is $c_{k+1}$ in the subalgebra $\al D$ of
$\al C_1\times\dots\times\al C_n$ generated by $c_1,\dots,c_k$?
\FOR{$i=1,\dots,n$}
\STATE find $\al A_i\in\class{K}$, $\al B_i\le\al A_i$ 
and $\theta_i\in\Con(\al B_i)$
such that $\al C_i=\al B_i/\theta_i$
\ENDFOR
\FOR{$j=1,\dots,k+1$}
\STATE find $a_j\in B_1\times\dots\times B_n$ such that
$a_j/(\theta_1\times\dots\times\theta_n) = c_j$
\ENDFOR
\STATE $G := \{a_1,\dots,a_k\}$ (let $\al B$ denote the subalgebra of $\al{B}_1\times\dots\times \al{B}_n$ generated by $G$)
\STATE Run $\LocalRep(\class{K})$ with the input
  $a_1,\dots,a_k\in\al{B}_1\times\dots\times\al{B}_n$,
  $\theta_i\in\Con(\al{B}_i)$
  ($i\in[n]$) to get output $R$ (a set; the designations will not play a role)
\STATE $\LLL:=R$, $\FFF:=\emptyset$
\FOR{$i=1,\dots,n$}
\STATE generate $\al B|_i$ by $G|_i=\{a_1|_i,\dots,a_k|_i\}$,
and simultaneously,
\FOR{each new $\beta$ in $\al B|_i$}
\STATE find an element $b_{i,\beta}$
generated by $a_1,\dots,a_k$
satisfying $b_{i,\beta}|_i=\beta$ 
\FOR{all $\gamma\equiv_{\theta_i}\beta$ ($\gamma\in B_i$)}
\STATE add to $\FFF$ the tuple $c_{i,\gamma}$ defined by
$c_{i,\gamma}|_i=\gamma$ and
$c_{i,\gamma}|_{[n]\setminus\{i\}}= b_{i,\beta}|_{[n]\setminus\{i\}}$ 
\ENDFOR
\ENDFOR
\ENDFOR
\STATE run $\SMP(\class{K})$ with the input 
$G\cup \LLL\cup \FFF\subseteq\al A_1\times\dots\times\al A_n$ and 
$a_{k+1}\in\al A_1\times\dots\times\al A_n$ 
($\al A_1,\dots,\al A_n\in\class{K}$), to
get an answer $\textsf{A}=\textrm{YES}$ or $\textsf{A}=\textrm{NO}$
\STATE \return\  $\textsf{A}$
\end{algorithmic}
\end{algorithm}

\begin{proof}
$\SMP(\class{K})$ is a subproblem of $\SMP(\HH\SSS\class{K})$, so
$\SMP(\class{K})$ is clearly polynomial time reducible to 
$\SMP(\HH\SSS\class{K})$. For the converse we will show that
Algorithm~\ref{alg-smp-red-smp} reduces $\SMP(\HH\SSS\class{K})$
to $\SMP(\class{K})$ in polynomial time.

In Steps~1--3 Algorithm~\ref{alg-smp-red-smp} finds the algebras $\al A_i\in\class{K}$, 
their subalgebras $\al B_i$ and their congruences $\theta_i$ such that 
the algebras $\al C_i$
in the input are $\al C_i=\al B_i/\theta_i$ ($i\in[n]$). 
In Steps~4--6 tuples 
$a_1,\dots,a_{k+1}\in\al B_1\times\dots\times\al B_n$~%
$(\le\al A_1\times\dots\times\al A_n)$ 
are found such that 
for the product congruence 
$\theta:=\theta_1\times\dots\times\theta_n$ we have
$c_j=a_j/\theta$ for all $j\in[k+1]$. 
Thus, as we explained in the paragraph preceding Theorem~\ref{thm-gen-sat},
the set $G=\{a_1,\dots,a_k\}$ obtained in Step~7 is a generating set for a subalgebra
$\al B$ of $\al B_1\times\dots\times\al B_n$ 
with the following properties: $\al B[\theta]/\theta {}\cong{} \al D$,
and  $c_{k+1}\in\al D$ if and only if $a_{k+1}\in \al B[\theta]$.
Therefore, Algorithm~\ref{alg-smp-red-smp} gives the correct answer in 
Steps~19--20,
provided
the set $G\cup \LLL\cup \FFF$ produced earlier in the process is a generating set
for $\al B[\theta]$. 

By Theorem~\ref{thm-gen-sat} it suffices to check that the set 
$\LLL$ constructed in Steps~8--9
satisfies condition~\eqref{it:L}, 
while the set $\FFF$ constructed in Steps~9--18
satisfies condition~\eqref{it:F} 
in Theorem~\ref{thm-gen-sat}.
For $\LLL$ this is straightforward to check. For $\FFF$ note that, given
$i\in[n]$ and $\beta\in B|_i$ as in 
the loop~10--18,
the tuple $b_{i,\beta}$
obtained in line~13
belongs to $\al B$.
Hence, when the \for\ loop in 
Steps~14--16
is performed for $\gamma=\beta$,
we get the tuple
$c_{i,\beta}=b_{i,\beta}\in B$,
which is added to
$\FFF$. This tuple can serve as the tuple denoted
$b_{i,\beta}$
in condition~\eqref{it:F} for every
$c_{i,\gamma}$
added to $\FFF$ in 
Steps~14--16.
This shows that the set $\FFF$ constructed in 
Steps~9--18
satisfies condition~\eqref{it:F}
in Theorem~\ref{thm-gen-sat}, and hence finishes
the proof of the correctness of Algorithm~\ref{alg-smp-red-smp}.

Steps~1--7 run in $O(kn)$ time, 
Steps~8--9
in $O(n^\ddd)$ time, while
Steps~10--18
in $O(n^2)$ time. So, the reduction of 
$\SMP(\HH\SSS\class{K})$ to $\SMP(\class{K})$ takes $O(kn^\ddd)$ time.
For an input of size $O(kn)$ of $\SMP(\HH\SSS\class{K})$
we get an input of size $O(kn^\ddd)$ for $\SMP(\class{K})$.
\end{proof}


\subsection{Finding compact representations: A direct algorithm}
\label{ssec-finding-crs1}

Our aim in this subsection is to present an algorithm for 
$\CompactRep(\class{K})$, which does not rely on $\SMP(\class{K})$.
The idea of the algorithm will be used in the next subsection to prove
that $\SMP(\class{K})\in\NP$.
A different algorithm for $\CompactRep(\class{K})$, which does rely on
$\SMP(\class{K})$, will be presented in Subsection~\ref{ssec-finding-crs2}.

Recall from Definitions~\ref{df-rep} and \ref{df-strep} that if $\al{B}$
is a subalgebra of 
$\al{A}_1\times\dots\times\al{A}_n$
($\al{A}_1,\dots,\al{A}_n\in\class{K}$, $n\ge\ddd$),
then for a set $R\subseteq\al{A}_1\times\dots\times\al{A}_n$ to be
a standardized representation for $\al{B}$, it has to satisfy three
conditions:
(i)~$R\subseteq\al{B}$ and each element in $R$ has a designation,
(ii)~$R$ contains a full set of local witnesses for $\al{B}$,
and (iii)~$R$ contains witnesses for all derived forks (and possible some non-derived forks) of $\al{B}$
in coordinates~$\ge\ddd$.
Algorithm~\ref{alg-locrep} (with all $\theta_i=0$) yields, in polynomial time,
a partial standardized representation $R_0$ consisting of a full set
of designated
local witnesses for $\al{B}$. Therefore, the task we are left with is
to find a way to expand $R_0$ to $R$ so that both conditions (i) and (iii)
are met. To enforce (i) without relying on $\SMP(\class{K})$ we have
to make sure that every new element we are adding to $R$ is obtained by
applying operations of $\al{B}$ to elements of $R$ or the given
generators of $\al{B}$. To achieve (iii), we need to ensure that we add
sufficiently many elements to get designated witnesses for all derived forks
of $\al{B}$ in coordinates~$\ge\ddd$.
The next lemma shows how this can be done.

\begin{lem}
  \label{lm-char-cr}
  Let $\al{B}$ be a subalgebra of a product $\al{A}_1\times\dots\times\al{A}_n$
  with $\al{A}_1,\dots,\al{A}_n\in\class{K}$, $n\ge\ddd$, and assume that
  $R_1,R_2\subseteq\al{A}_1\times\dots\times\al{A}_n$ are partial standardized
  representations which
  satisfy conditions~\eqref{it:char-cr1}--\eqref{it:char-cr4}
  below: 
  \begin{enumerate}
  \item\label{it:char-cr1}
    $R_1\subseteq R_2\subseteq\al{B}$
    and the designation function for $R_2$ extends the designation function for $R_1$.
  \item\label{it:char-cr3}
    Every element of $\al{B}$ is completely representable by $R_1$
    (in particular, $R_1$ contains a full set of designated local witnesses for $\al{B}$).
  \item\label{it:char-cr4}  
    $R_2$ contains designated witnesses for the forks
    that are obtained by the
    `weak transitivity rule' (cf.~Lemma~\ref{lm-newfork})
    from forks witnessed in $R_1$;    
    that is, for every $i$ $(\ddd\le i\le n)$,
    if $R_1$ has designated witnesses for
    $(\gamma,\delta), (\beta,\delta)\in\fork_i(R_1)$, then
    $(\gamma,\beta^\gamma)\in\fork_i(R_2)$ and $R_2$ has
    designated witnesses for it.
  \end{enumerate}    
Then $R_2$ is a standardized representation for $\al{B}$.
\end{lem}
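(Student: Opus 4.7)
The plan is to verify each clause of Definition~\ref{df-strep} for $R_2$. The inclusion $R_2\subseteq\al{B}$ and the fact that every element of $R_2$ carries a designation are immediate from condition~\eqref{it:char-cr1}. Since being completely representable by $R_1$ requires, via Definition~\ref{df-representable}~\eqref{it:representable1}, that $R_1$ already contain designated local witnesses $r_{I,\bar{b}}$ for every $I\in\binom{[n]}{\ddd-1}$ and every $\bar{b}\in B|_I$, condition~\eqref{it:char-cr3} supplies a full set of designated local witnesses for $\al{B}$ inside $R_1\subseteq R_2$. So the only remaining task is to show that for every $i\in[n]$, $\fork'_i(B)\subseteq\fork_i(R_2)$, and that for $\ddd\le i\le n$ each such derived fork has designated witnesses in $R_2$. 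For $i<\ddd$ this is purely a local matter: given $(\gamma,\beta)\in\fork_i(B)$ witnessed by $b,\hat{b}\in B$, pick any $I\in\binom{[n]}{\ddd-1}$ with $[i]\subseteq I$; the designated local witnesses $r_{I,b|_I},r_{I,\hat{b}|_I}\in R_1$ then realize $(\gamma,\beta)\in\fork_i(R_1)\subseteq\fork_i(R_2)$, so in particular $\fork'_i(B)\subseteq\fork_i(B)\subseteq\fork_i(R_2)$.

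For the main case $\ddd\le i\le n$, take an arbitrary derived fork $(\gamma,\beta^\gamma)\in\fork'_i(B)$ arising from $(\gamma,\beta)\in\fork_i(B)$ witnessed by some $b,\hat{b}\in B$ with $b|_{[i-1]}=\hat{b}|_{[i-1]}$, $b|_i=\gamma$, $\hat{b}|_i=\beta$. By condition~\eqref{it:char-cr3}, both $b$ and $\hat{b}$ are completely representable by $R_1$, so we may form the elements $b^{(i-1)}$ and $\hat{b}^{(i-1)}$ prescribed by the recursion defining $T_{i-1}$. The crux of the argument is the claim that $b^{(i-1)}=\hat{b}^{(i-1)}$: for every $I\in\binom{[i-1]}{\ddd-1}$ we have $b|_I=\hat{b}|_I$, so the designated local witnesses coincide, $b^I=r_{I,b|_I}=r_{I,\hat{b}|_I}=\hat{b}^I$, and a straightforward induction on $m=\ddd-1,\ldots,i-1$ shows that at each step the parameters $b|_m=\hat{b}|_m$ and $b^{(m-1)}|_m=\hat{b}^{(m-1)}|_m$ --- which, by~\eqref{eq-betagamma}, determine the designated fork witnesses used to compute $T_m$ --- coincide, so the two parallel evaluations of $T_m$ produce the same tuple.

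Setting $\beta_0:=b^{(i-1)}|_i=\hat{b}^{(i-1)}|_i$, the complete representability of $b$ yields designated witnesses in $R_1$ for $(\gamma,\beta_0)\in\fork_i(R_1)$, and the complete representability of $\hat{b}$ yields designated witnesses in $R_1$ for $(\beta,\beta_0)\in\fork_i(R_1)$. These two forks share their second coordinate $\beta_0$, so condition~\eqref{it:char-cr4} applies and provides designated witnesses in $R_2$ for $(\gamma,\beta^\gamma)\in\fork_i(R_2)$, which is exactly what is required. The main obstacle I anticipate is verifying the identification $b^{(i-1)}=\hat{b}^{(i-1)}$: one must confirm that the designated fork witnesses produced at each intermediate step $m<i$ depend only on $b|_m$ and $b^{(m-1)}|_m$ --- and therefore coincide for $b$ and $\hat{b}$ --- so that the two computations truly merge and create the shared ``meeting point'' $\beta_0$ at which the weak transitivity rule of condition~\eqref{it:char-cr4} becomes applicable.
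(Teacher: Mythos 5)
Your proof is correct and follows essentially the same route as the paper's: both identify the common value $\delta=\beta_0$ as the $i$-th coordinate of the shared partial evaluation of $T_{i-1}$ for the two fork witnesses, use complete representability by $R_1$ to get designated witnesses for $(\gamma,\delta)$ and $(\beta,\delta)$, and then invoke the weak-transitivity condition to obtain designated witnesses for $(\gamma,\beta^\gamma)$ in $R_2$. Your explicit induction showing that the two evaluations merge (because designated local and fork witnesses are uniquely determined by the data $b|_{[m]}$) is exactly the point the paper compresses into its citation of Lemma~\ref{lm-Ts}, so your treatment is, if anything, slightly more detailed at that step.
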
  

\begin{proof}
Let $R_1,R_2\subseteq\al{A}_1\times\dots\times\al{A}_n$
be partial standardized representations, which satisfy
conditions~\eqref{it:char-cr1}--\eqref{it:char-cr4}.
Conditions~\eqref{it:char-cr1}--\eqref{it:char-cr3} imply that
$R_2$ is a partial standardized representation for $\al{B}$, which
contains a full set of
designated local witnesses for $\al{B}$.
Therefore, it remains to show that
$R_2$ contains designated witnesses for all derived forks of $\al{B}$ in
coordinates~$\ge\ddd$.

To prove this, let $m\ge\ddd$, and 
let $(\gamma,\sigma)\in\fork'_m(B)$.
Then there exists $(\gamma,\beta)\in\fork_m(B)$
such that $\sigma=\beta^\gamma$.
Hence, there exist $b,b'\in B$ such that 
$b|_{[m-1]}=b'|_{[m-1]}$ and $b|_m=\beta$, $b'|_m=\gamma$.
By assumption~\eqref{it:char-cr3},
both $b$ and $b'$ are completely
representable by $R_1$. Since $b|_{[m-1]}=b'|_{[m-1]}$,
it follows from Lemma~\ref{lm-Ts} that
\[
b^{(m-1)}=T_{m-1}(u^{(\ddd)},\hat{u}^{(\ddd)},\dots,u^{(m-1)},\hat{u}^{(m-1)},
\bar{b^I}^{[m-1]})=(b')^{(m-1)}.
\]
Let $\delta:=b^{(m-1)}|_m\,\bigl(=(b')^{(m-1)}|_m\bigr)$.
Since $b$ and $b'$ are completely representable  
by $R_1$, $R_1$ contains designated witnesses $(u,u')$ and $(v,v')$ 
for the forks $(\beta,\delta),(\gamma,\delta)\in\fork_m(R_1)$.
Now, assumption~\eqref{it:char-cr4}
makes sure that in this situation, $R_2$ contains designated 
witnesses for the fork $(\gamma,\beta^\gamma)=(\gamma,\sigma)$.
This completes the proof.
\end{proof}

  Recall that $\FNP$ denotes the function problem version of $\NP$. More precisely, a binary relation $R(x,y)$,
  where the size of $y$ is polynomial in the size of $x$, is in $\FNP$ if and only if there is a deterministic
  polynomial time algorithm that can determine whether $R(x,y)$ holds given both $x$ and $y$.

\begin{thm}
  \label{thm-cr-direct}
  Algorithm~\ref{alg-cr-direct} solves $\CompactRep(\class{K})$
  in polynomial time by making multiple calls to an oracle
  $\NeedForkWitnesses(\class{K})$ in $\FNP$.
\end{thm}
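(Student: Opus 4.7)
The plan is to apply Lemma~\ref{lm-char-cr} as the correctness criterion and to build the standardized representation in two phases. Given input $a_1,\dots,a_k$ generating a subalgebra $\al{B}\le\al{A}_1\times\dots\times\al{A}_n$ (with $n\ge\ddd$), I would first invoke Algorithm~\ref{alg-locrep} with all $\theta_i=0$ to obtain in polynomial time, by Lemma~\ref{lm-locrep}, a partial standardized representation $R_0\subseteq\al{B}$ consisting of a full set of designated local witnesses for $\al{B}$. Then, maintaining a growing partial standardized representation $R\supseteq R_0$, I would iteratively call the oracle $\NeedForkWitnesses(\class{K})$, whose task I would define as: given $(a_1,\dots,a_k)$ and the current $R$, either confirm that every element of $\al{B}$ is completely representable by $R$ in the sense of Definition~\ref{df-representable}, or return a coordinate $m\ge\ddd$, a fork $(\gamma,\delta)\in\fork_m(\al{B})$ not yet designated in $R$, and witnesses $u,\hat u\in\al{B}$ for it. The witnesses are adjoined to $R$ with the appropriate designations before the next iteration. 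Once the oracle reports completion, yielding $R_1$, a deterministic post-processing pass closes $R_1$ under Lemma~\ref{lm-newfork}: for every coordinate $m\ge\ddd$ and every pair $(\gamma,\delta),(\beta,\delta)\in\fork_m(R_1)$ of designated forks sharing their second component, the element pair in \eqref{eq-newfork} is adjoined as a designated witness for $(\gamma,\beta^\gamma)$, producing the final $R_2$. Correctness is then immediate from Lemma~\ref{lm-char-cr}, since $R_1$ and $R_2$ together satisfy its three hypotheses.

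For the polynomial-time bound, the main loop terminates within $n\,\aaa_{\class{K}}^2$ iterations, because each positive oracle reply permanently designates a previously undesignated fork and every coordinate admits at most $\aaa_{\class{K}}^2$ possible forks across the $n$ coordinates. Each iteration contributes polynomially to the running time: the oracle call is counted as a single step, and the bookkeeping between calls can reuse $\IsRepresentable(\class{K})$ (Lemma~\ref{lm-isrep}). The post-processing pass contributes at most $O(n\,\aaa_{\class{K}}^4)$ additional applications of $P$. Altogether the algorithm runs in polynomial time with oracle access to $\NeedForkWitnesses(\class{K})$.

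The main obstacle is verifying that $\NeedForkWitnesses(\class{K})$ really does lie in $\FNP$, that is, exhibiting polynomial-size nondeterministic certificates whenever new witnesses are needed. A certificate for a ``witnesses still needed'' reply consists of the pair $(u,\hat u)$ together with a polynomial-size $F$-circuit evaluating to $u$ and $\hat u$ on inputs $a_1,\dots,a_k$; verification is then a straightforward polynomial-time circuit evaluation followed by checking that the exhibited fork is not already designated in $R$ and that $u|_{[m-1]}=\hat u|_{[m-1]}$. The existence of short such certificates is secured by Theorem~\ref{thm-rep-generates}: relative to any eventual standardized representation $R_2$ of $\al{B}$, every element of $\al{B}$ is computed by the fixed $P$-circuit $\crc{T_n}$ of size $O(n^{\ddd+1})$ applied to designated elements of $R_2$, and the elements of $R_2$ themselves admit polynomial-size circuits from $\{a_1,\dots,a_k\}$ obtained by composing across the polynomially many iterations of the algorithm. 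Hence a polynomial-size witness derivation always exists, placing $\NeedForkWitnesses(\class{K})$ in $\FNP$ as claimed.
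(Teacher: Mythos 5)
Your overall skeleton (run $\LocalRep$ with $\theta_i=0$, grow $R$ by oracle calls that supply missing fork witnesses, close under the weak transitivity rule of Lemma~\ref{lm-newfork}, and conclude via Lemma~\ref{lm-char-cr}, with the iteration bound $O(n\aaa_{\class{K}}^2)$) matches the paper's Algorithm~\ref{alg-cr-direct}. The genuine gap is in your definition of the oracle and the claim that it lies in $\FNP$. You let the oracle quantify over all of $\al{B}$: its negative branch asserts ``every element of $\al{B}$ is completely representable by $R$'', and its positive branch returns an undesignated fork of $\al{B}$ certified by a polynomial-size $F$-circuit over $a_1,\dots,a_k$. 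For this to work, the relation must carry an a priori polynomial bound on certificate size, and the soundness of interpreting ``no certified fork exists'' as ``all of $\al{B}$ is completely representable'' requires that \emph{every} fork of $\al{B}$ (equivalently, every element of $\al{B}$) admit witnesses computable by $F$-circuits of that fixed polynomial size. That is exactly Theorem~\ref{thm-short-rep}, which the paper obtains only as a \emph{consequence} of Theorem~\ref{thm-cr-direct} via Lemma~\ref{lm-short-cr} (an analysis of the run of Algorithm~\ref{alg-cr-direct}). Your justification -- $T_n$ over ``any eventual standardized representation $R_2$'' whose elements get short circuits ``by composing across the polynomially many iterations of the algorithm'' -- is circular: it presupposes that the algorithm driven by this very oracle terminates correctly with $R_2$, which in turn presupposes that the oracle's final negative answer is sound, which is what the circuit bound was supposed to guarantee.

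The paper avoids this by defining $\NeedForkWitnesses(\class{K})$ purely in terms of the current $R$: the certificate is a single basic operation $f$ applied to elements $b_1,\dots,b_\ell$ that are explicitly computed as $T_n$ applied to \emph{designated} witnesses in $R$, such that $b=f(b_1,\dots,b_\ell)$ is not completely representable by $R$ (checked by $\IsRepresentable(\class{K})$). All parts of such an output are verifiable in deterministic polynomial time with no appeal to short circuits for arbitrary elements of $\al{B}$, so membership in $\FNP$ is immediate. Correctness after a NO answer then follows not from ``all forks of $\al{B}$ are designated'' but from an algebraic closure argument: the set $\bar{R}_1$ of elements completely representable by $R_1$ is closed under every basic operation (else the oracle would have answered YES) and contains the generators (ensured by the preliminary pass running $\IsRepresentable(\class{K})$ on each $a_i$, a step your plan omits but which is essential once the oracle no longer speaks about all of $\al{B}$), hence $\bar{R}_1=\al{B}$, giving condition~\eqref{it:char-cr3} of Lemma~\ref{lm-char-cr}. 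To repair your proposal you would either have to prove Theorem~\ref{thm-short-rep} independently first, or reformulate the oracle along these local, closure-based lines.
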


\begin{proof}
First we define the oracle.

\smallskip

\noindent $\NeedForkWitnesses(\class{K})$:
\begin{itemize}
\item
INPUT:  
Partial standardized representation
$R\subseteq \al A_1\times\dots\times\al A_n$
($\al A_1,\ldots,\al A_n\in\class{K}$, $n\ge\ddd$)
such that
$R$ contains designated local witnesses for the
subalgebra $\al{R}^*$ of
$\al A_1\times\dots\times\al A_n$ generated by $R$
\item
OUTPUT:\\
  $\bigl(\textrm{YES},f,
  \bigl((u_j^{(\ddd)},\hat{u}_j^{(\ddd)},\dots,u_j^{(n)},\hat{u}_j^{(n)},
  \bar{w_j^I})\bigr)_{j=1}^{\ell},b_1,\dots,b_\ell,b,\tilde{R}\bigr)$
  where
  \begin{enumerate}
  \item\label{it:needforks1}
    $f\in F$ is an $\ell$-ary operation symbol
    (for some $\ell$),
  \item\label{it:needforks2}  
    each pair $(u_j^{(m)},\hat{u}_j^{(m)})$ ($j\in[\ell]$, $\ddd\le m\le n$)
    is designated to witness a fork in $\fork_m(R)$,
  \item\label{it:needforks3}
    the tuples $\bar{w_j^I}=(w_j^I)_{I\in\binom{[n]}{I}}$ ($j\in[\ell]$)
    are such that 
    each $w_j^I$ is a designated local witness of $b_j|_I\in R|_I$,
  \item\label{it:needforks4}
    $b_j=T_n(u_j^{(\ddd)},\hat{u}_j^{(\ddd)},\dots,u_j^{(n)},\hat{u}_j^{(n)},
    \bar{w_j^I})$ (with $e=1$) for each $j\in[\ell]$,
  \item\label{it:needforks5}  
    $b=f(b_1,\dots,b_{\ell})$,
  \item\label{it:needforks6}
    $b$ is not completely representable by $R$,
  \item\label{it:needforks7}
    $\tilde{R}=R\cup S'\cup S$
    is the partial standardized representation 
    obtained from the output of $\IsRepresentable(\class{K})$ run on the
    input $b,R$ (hence, $b$ is completely representable by $\tilde{R}$).
  \end{enumerate}  
NO, if $f$,
$(u_j^{(\ddd)},\hat{u}_j^{(\ddd)},\dots,u_j^{(n)},\hat{u}_j^{(n)},\bar{w_j^I})$
($j\in[\ell]$), $b_1,\dots,b_\ell$, and $b$ with these properties
do not exist.
\end{itemize}

To show that $\NeedForkWitnesses(\class{K})$ is in $\FNP$, we have to verify
that the length of the output $(\textrm{YES},\dots,\tilde{R})$
is bounded by a polynomial of the length of the input $R$, and
there is a polynomial time algorithm which, when $R$ and
$(\textrm{YES},\dots,\tilde{R})$ are both given, determines whether or not
conditions
\eqref{it:needforks1}--\eqref{it:needforks7} hold.
The first requirement holds, because the length of the output is dominated by
the length of $\tilde{R}$, and 
both $R$ and $\tilde{R}$ have length
$\Theta(n^\ddd)$, as they are partial standardized representations for 
$\al{R}^*$ which contain a full set of designated local witnesses for
$\al{R}^*$.

To prove that the second requirement holds as well, 
suppose that we are given a partial standardized representation
$R\subseteq \al A_1\times\dots\times\al A_n$
($\al A_1,\ldots,\al A_n\in\class{K}$, $n\ge\ddd$)
such that
$R$ contains designated local witnesses for $\al{R}^*$, and we are
given a tuple 
\[
\bigl(f,
  \bigl((u_j^{(\ddd)},\hat{u}_j^{(\ddd)},\dots,u_j^{(n)},\hat{u}_j^{(n)},
  \bar{w_j^I})\bigr)_{j=1}^{\ell},b_1,\dots,b_\ell,b,\tilde{R}\bigr).
\]
Conditions~\eqref{it:needforks1}--\eqref{it:needforks3}
can be checked in $O(1)+O(n^2)+O(n^\ddd)=O(n^\ddd)$ time
  as $\ell$ is bounded by a constant that depends only on $\class{K}$.
By \eqref{eq:cTn}, the circuit complexity of $T_n$ (with $e=1$)
is $O(n^{\ddd+1})$, hence checking condition~\eqref{it:needforks4} requires
$\ell\cdot O(n^{\ddd+2})=O(n^{\ddd+2})$ time.
Condition~\eqref{it:needforks5} can be checked in $O(n)$ time.
Finally, conditions~\eqref{it:needforks6}--\eqref{it:needforks7}  
can be checked by running Algorithm~\ref{alg-ir} for
$\IsRepresentable(\class{K})$ with input $b$ and $R$,
which requires $O(n^{\ddd+2})$ time, and comparing $\tilde{R}$ with its output.
Note here that $b$ and $R$ form a correct input for
Algorithm~\ref{alg-ir} because,
by construction, $b\in\al{R}^*$,
so our requirement on $R$ guarantees that appropriate
designated local witnesses for $b$ exist in $R$.

\begin{algorithm}
\caption{Algorithm for $\CompactRep(\class{K})$ (direct)}
\label{alg-cr-direct}
\begin{algorithmic}[1] 
\REQUIRE $a_1,\dots,a_k\in \al A_1\times\dots\al A_n$ with 
$\al A_1,\dots,\al A_n\in\class{K}$ ($n\ge\ddd$)
\ENSURE Standardized representation $R$ for the subalgebra $\al B$
of $\al A_1\times\dots\times\al A_n$ generated by $a_1,\dots,a_k$
\STATE
Run Algorithm~\ref{alg-locrep} for
  $\LocalRep(\class{K})$ with input
  $a_1,\dots, a_k\in\al{A}_1\times\dots\times\al{A}_n$ and
  $\theta_i=0$ ($i\in[n]$) to get output $R$
\FOR{$b\in\{a_1,\dots,a_k\}$}
\STATE run Algorithm 1 for $\IsRepresentable(\class{K})$ with 
input $b$ and $R$ to get output (YES, $S'$, $S$) or (NO, $S'$, $S$)
\STATE $R:=R\cup S'\cup S$
\ENDFOR
\WHILE{$\NeedForkWitnesses(\class{K})$ returns $(\textrm{YES},\dots,\tilde{R})$}
\STATE $R:=\tilde{R}$  
\ENDWHILE
\FOR{$m=\ddd,\dots,n$}
\FOR{all $(\gamma,\delta), (\beta,\delta)\in \fork_m(R)$ 
  which have designated witnesses $(v,\hat{v})$, $(u,\hat{u})$, respectively,
  in $R$}
\IF{$R$ contains no designated witnesses for 
$(\gamma,\beta^\gamma)$}
\STATE
add $p(p(v,\hat{v},\hat{u}),p(v,\hat{v},\hat{v}),v)$ and $p(u,v,v)$
to $R$ and designate them to witness the fork 
$(\gamma,\beta^\gamma)\in\fork_m(R)$
\ENDIF
\ENDFOR
\ENDFOR
\STATE \return\ $R$
\end{algorithmic}
\end{algorithm}

Now we turn to the analysis of Algorithm~\ref{alg-cr-direct}.
To prove its correctness,
let 
$a_1,\ldots,a_k\in\al A_1\times\dots\times\al A_n$ 
($\al A_1,\dots,\al A_n\in\class{K}$) be an
input for $\CompactRep(\class{K})$ with $n\ge\ddd$, and let $\al B$ denote
the subalgebra of $\al A_1\times\dots\times\al A_n$ generated by
$a_1,\dots,a_k$.
We have to show that the set $R$ returned in Step~16 of
Algorithm~\ref{alg-cr-direct}
is a standardized representation for $\al B$.

Notice that during Algorithm~\ref{alg-cr-direct}
we only add elements to $R$,
and every element added to $R$ has a designation.
Therefore, at every point in the algorithm, the current version of $R$
is a partial standardized representation, which extends
all previous versions.
The description of $\LocalRep(\class{K})$ implies that
the version of $R$ produced in Step~1, which we will denote by $R_0$,
is a standardized representation for $\al{B}$ that is a full set
of designated local witnesses for $\al{B}$. In particular,
$R_0\subseteq\al{B}$.

Next we want to argue that
\begin{enumerate}
\item\label{it:forwhile1}
  throughout all iterations of the \for\ loop in Steps~2--5 and
  the \while\ loop in Steps~6--8,
  $R$ satisfies
  \begin{equation}\label{eq-RinB}
     R_0\subseteq R\subseteq\al{B};
  \end{equation}  
\item\label{it:forwhile2}
  in every iteration of these two loops,
  the algorithms for $\IsRepresentable(\class{K})$ and
  $\NeedForkWitnesses(\class{K})$, respectively, have a correct input; and
\item\label{it:forwhile3}
  Algorithm~\ref{alg-cr-direct}
  exits the \while\ loop, i.e., $\NeedForkWitnesses(\class{K})$ answers $\textrm{NO}$ for the first time,
  after polynomially many iterations.
\end{enumerate}
Statements \eqref{it:forwhile1}--\eqref{it:forwhile2} follow by induction
on the number of iterations by observing that
\eqref{eq-RinB} holds for $R=R_0$ at the start of the first iteration
of the \for\ loop; moreover, the descriptions of $\IsRepresentable(\class{K})$
and $\NeedForkWitnesses(\class{K})$ ensure that
whenever \eqref{eq-RinB} holds at the start of an iteration of 
the \for\ loop, $R$ together with $b\in\{a_1,\dots,a_k\}$ is a correct
input for $\IsRepresentable(\class{K})$, and the new version of $R$ produced
also satisfies \eqref{eq-RinB}; similarly,
whenever \eqref{eq-RinB} holds at the start of an iteration of 
the \while\ loop, $R$ is a correct
input for $\NeedForkWitnesses(\class{K})$, and the new version of $R$ produced
also satisfies \eqref{eq-RinB}.

For~\eqref{it:forwhile3}, notice that the \for\ loop is repeated exactly $k$
times, and the \while\ loop is iterated as long as new designated pairs of
fork witnesses can be added to $R$ by such an iteration. 
By statement~\eqref{it:forwhile1} above,
all these pairs of fork witnesses are witnessing forks in $\al{B}$
in coordinates $\ddd,\dots,n$,
therefore their number is at most $\aaa_\class{K}^2(n-\ddd)$, where
$\aaa_\class{K}:=\max\{|\al{A}|:\al{A}\in\class{K}\}$.
Hence, after at most $\aaa_\class{K}^2(n-\ddd)$ iterations of the \while\ loop
in Steps~6--8,
we have to get a $\textrm{NO}$ answer, and exit the loop.

Let $R_1$ denote the version of $R$
after Step~8, and let $R_2$ denote the $R$
returned at the end of Algorithm~\ref{alg-cr-direct} in Step~16.
We will prove that $R_2$ is a standardized representation for $\al{B}$
by showing that $R_1$ and $R_2$ satisfy
conditions~\eqref{it:char-cr1}--\eqref{it:char-cr4} of Lemma~\ref{lm-char-cr}.
Clearly, in Steps~9--15 
of Algorithm~\ref{alg-cr-direct} exactly
those new fork witnesses are added to $R_1$ which yield $R_2$ so that
condition~\eqref{it:char-cr4} of Lemma~\ref{lm-char-cr} is satisfied.
Since \eqref{eq-RinB} holds for $R=R_1$, and every element $R_2\setminus R_1$
is obtained from elements of $R_1$
by applying the term $p$, we have that
$R_0\subseteq R_1\subseteq R_2\subseteq\al{B}$.
This proves
condition~\eqref{it:char-cr1}
 of Lemma~\ref{lm-char-cr}.
It remains to prove condition~\eqref{it:char-cr3}.

Let $\bar{R}_1$ denote the set of all elements of
$\al{A}_1\times\dots\times\al{A}_n$ that are completely representable by $R_1$.
By Corollary~\ref{cor-representable}, 
$R_1\subseteq\al{B}$ implies that $\bar{R}_1\subseteq\al{B}$.
Next we want to prove that $\bar{R}_1$ is in fact a subalgebra of $\al{B}$.
Let $f\in F$ be an operation symbol, say $f$ is
$\ell$-ary, let $b_1,\dots,b_\ell\in\bar{R}_1$, and let
$b:=f(b_1,\dots,b_\ell)$. We need to show that $b\in\bar{R}_1$.
Since $b_1,\dots,b_\ell\in\bar{R}_1$, i.e., $b_1,\dots,b_\ell$
are completely representable by $R_1$, there exist
$(u_j^{(\ddd)},\hat{u}_j^{(\ddd)},\dots,u_j^{(n)},\hat{u}_j^{(n)},\bar{w_j^I})$
for $j\in[\ell]$ such that 
conditions~\eqref{it:needforks2}--\eqref{it:needforks4}
in the description of $\NeedForkWitnesses(\class{K})$ hold for $R=R_1$.
By the choice of $f$ and definition of $b$ above, we have that
actually, all
conditions~\eqref{it:needforks1}--\eqref{it:needforks5}
in the description of $\NeedForkWitnesses(\class{K})$ hold for $R=R_1$.
Since $R_1$ is the version of $R$ in Step~8 
of Algorithm~\ref{alg-cr-direct},
the last iteration of the \while\ loop ran with $R=R_1$ and produced the
answer $\textrm{NO}$. Thus, $b$ must be completely representable by $R_1$,
which proves that $b\in\bar{R}_1$.
Finally, notice that the \for\ loop in Steps~2--5 of
Algorithm~\ref{alg-cr-direct} ensure that the generators $a_1,\dots,a_k$
of $\al{B}$ belong to $\bar{R}_1$. Thus, $\bar{R}_1=\al{B}$, so
condition~\eqref{it:char-cr3} of Lemma~\ref{lm-char-cr} also holds.
This completes the proof of the correctness of Algorithm~\ref{alg-cr-direct}.

To estimate the running time of Algorithm~\ref{alg-cr-direct},
recall that Algorithm~\ref{alg-locrep} runs in $O(n^\ddd)$ time,
so this is the running time of Step~1.
In Steps~2--5, Algorithm~\ref{alg-ir} is called $k$ times, and each
iteration runs in $O(n^{\ddd+2})$ time, therefore the completion of Steps~2--5
requires $O(kn^{\ddd+2})$ time.
As we saw earlier in this proof, 
in Steps~6--8
the oracle $\NeedForkWitnesses(\class{K})$ is called
$O(n)$ times, therefore the running time of these steps is $O(n)$.
Finally, in Steps~9--15, 
the double \for\ loop is iterated
$O(n)$ times, and each iteration requires $O(n)$ time. Therefore,
Steps~9--15 
require $O(n^2)$ time. Hence, the overall running time of
Algorithm~\ref{alg-cr-direct} is $O(kn^{\ddd+2})$.
\end{proof}

\subsection{$\SMP(\class{K})$ is in $\NP$.}
\label{ssec-smp-np}

The main result of this subsection is the following theorem.

\begin{thm}
  \label{thm-short-rep}
  Let $a_1,\dots,a_k,b\in\al{A}_1\times\dots\times\al{A}_n$
  $(\al{A}_1,\dots,\al{A}_n\in\class{K})$ be an input for
  $\SMP(\class{K})$. If $b$ is in the subalgebra of
  $\al{A}_1\times\dots\times\al{A}_n$ generated by $\{a_1,\dots,a_k\}$,
  then there exists
  a term $t$ in the language $F$
  of $\class{K}$ such that
  \begin{itemize}
  \item
    $t(a_1,\dots,a_k)=b$ and
  \item
    $\crcf{F}{t}$ has size $O(kn^{\ddd+2})$.
  \end{itemize}
  Consequently, $\SMP(\class{K})\in\NP$.
\end{thm}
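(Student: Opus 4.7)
The plan is as follows. For membership in $\NP$, the certificate is simply an $F$-circuit $\crcf{F}{t}$ representing a term $t$ with $t(a_1,\dots,a_k)=b$; the verifier evaluates $\crcf{F}{t}$ coordinatewise in each $\al A_i$ and compares the output with $b$, which takes polynomial time (in $|\crcf{F}{t}|$ and $n$) because the algebras in $\class{K}$ are finite. So the substantive task is to show that, whenever $b$ lies in the subalgebra generated by $a_1,\dots,a_k$, such a term admits a circuit representation of size $O(kn^{\ddd+2})$.

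To construct $t$ I would trace a successful nondeterministic run of Algorithm~\ref{alg-cr-direct} on input $a_1,\dots,a_k$, and in parallel build an $F$-circuit whose gates mirror every operation used by the algorithm, starting from input variables for $a_1,\dots,a_k$. Every element ever added to the working set $R$ is produced from previously computed elements by one of four kinds of operations: (i) during Step~1, a bounded (in $\class{K}$) number of applications of basic $F$-operations during $\LocalRep$, since each $\al B|_I$ has at most $\aaa_{\class K}^{\ddd-1}$ elements; (ii) during $\IsRepresentable$ (in Steps~2--5 and inside each iteration of the \textbf{while} loop), an evaluation of the expanded $P$-circuit $\crcplus{T_n}$, whose size is $O(n^{\ddd+1})$ by Remark~\ref{rem-alg1}; (iii) during a YES-iteration of the \textbf{while} loop, $O(1)$ further evaluations of $T_n$ (each of $P$-circuit size $O(n^{\ddd+1})$ by \eqref{eq:cTn}) composed with one application of the basic operation $f\in F$ supplied by $\NeedForkWitnesses$; and (iv) during Steps~9--15, an $O(1)$-size combination of $p$'s as in Lemma~\ref{lm-newfork}.

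Summing: Step~1 contributes $O(n^{\ddd-1})$ gates; Steps~2--5 contribute $k\cdot O(n^{\ddd+1})=O(kn^{\ddd+1})$ gates; the \textbf{while} loop in Steps~6--8 iterates at most $\aaa_{\class K}^2(n-\ddd)$ times, as shown in the proof of Theorem~\ref{thm-cr-direct}, each iteration adding $O(n^{\ddd+1})$ gates, for a total of $O(n^{\ddd+2})$; Steps~9--15 contribute $O(n)$ gates; and a final evaluation of $T_n$ on the resulting standardized representation $R$ reconstructs $b$ by Lemma~\ref{lm-rep-generates}, contributing another $O(n^{\ddd+1})$ gates. The grand total, measured as a $P$-circuit, is $O(kn^{\ddd+2})$. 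Since $P$ is a fixed term in the language $F$, each $P$-gate expands to an $F$-subcircuit of constant size, so the $F$-circuit size of the resulting term $t$ is also $O(kn^{\ddd+2})$.

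The main subtlety is that $\NeedForkWitnesses$ is only known to be in $\FNP$, so Algorithm~\ref{alg-cr-direct} is not deterministically polynomial time; however, the $\NP$-membership argument needs only the existence of a short circuit, not a way to produce it in polynomial time. The correctness of Algorithm~\ref{alg-cr-direct} (Theorem~\ref{thm-cr-direct}) guarantees that for every $b\in\al{B}$ there is some sequence of oracle responses that terminates in a standardized representation for $\al{B}$, and the circuit-tracking described above yields the required short term $t$ corresponding to that sequence.
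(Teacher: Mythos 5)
Your proposal is correct and follows essentially the same route as the paper: the circuit-tracking of Algorithm~\ref{alg-cr-direct} you describe is exactly the content of Lemma~\ref{lm-short-cr}, and the final step of appending a circuit for $T_n$ evaluated on the designated witnesses in $R$ (via Lemma~\ref{lm-rep-generates} and Remark~\ref{rm-representable}) is precisely how the paper extracts the term $t$ of circuit size $O(kn^{\ddd+2})$. Your observation that only the \emph{existence} of the short circuit is needed for $\NP$-membership, despite $\NeedForkWitnesses$ being merely in $\FNP$, matches the paper's use of the certificate.
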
  

The proof relies on Lemma~\ref{lm-short-cr} below, which shows
that by recording the computations  
Algorithm~\ref{alg-cr-direct} performs on a given input $a_1,\dots,a_k$,
we can build a polynomial size circuit that produces a
standardized representation for
the algebra with generators $a_1,\dots,a_k$.
The precise statement is as follows.   

\begin{lem}
  \label{lm-short-cr}
  Let $a_1,\dots,a_k\in\al{A}_1\times\dots\times\al{A}_n$ with
  $\al{A}_1,\dots,\al{A}_n\in\class{K}$ $(n\ge\ddd)$, and let
  $\al{B}$ be the subalgebra of $\al{A}_1\times\dots\times\al{A}_n$
  generated by $\{a_1,\dots,a_k\}$.
  There exists an $F$-circuit $C$
  with the following properties:
  \begin{enumerate}
  \item\label{it:short-cr3}
    $C$ has $k$ input nodes and $O(n^{\ddd-1})$ labeled output nodes
    such that for the input $a_1,\dots,a_k$, the collection of outputs of $C$
    --- together with the labeling of the outputs --- is 
    a standardized representation $R$ for $\al{B}$ computed by
    Algorithm~\ref{alg-cr-direct}.
  \item\label{it:short-cr2}
    The size of $C$
    is $O(kn^{\ddd+2})$.
  \end{enumerate}  
\end{lem}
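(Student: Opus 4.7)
The plan is to construct the circuit $C$ by simulating Algorithm~\ref{alg-cr-direct} on the input $a_1,\dots,a_k$ and recording every computation performed as an $F$-subcircuit of $C$. Throughout the simulation I maintain a map $\nu$ sending each element $r$ currently in $R$ to a node $\nu(r)$ of a growing circuit, with the invariant that when $C$ is evaluated on $a_1,\dots,a_k$, the node $\nu(r)$ outputs $r$. The $k$ input nodes of $C$ represent $a_1,\dots,a_k$, and the designated output nodes of $C$ will be the nodes $\nu(r)$ for $r$ in the final standardized representation $R$ returned by the algorithm, each labeled by its designation; since $R$ is compact, its designation function has $O(n^{\ddd-1})$ entries, yielding part~\eqref{it:short-cr3}. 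Because $P$ is a term in the language $F$ of $\var{V}$, every $P$-gate translates to a fixed-size $F$-subcircuit, so $P$-circuits convert to $F$-circuits with only an $O(1)$ factor overhead.

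For the size bound~\eqref{it:short-cr2} I account for each stage of Algorithm~\ref{alg-cr-direct} separately. In Step~1, the call to $\LocalRep(\class{K})$ with all $\theta_i=0$ produces each designated witness $r_{I,\bar{b}}$ by lifting the generating sequence for $\bar{b}\in\al{B}|_I$ to a term evaluation on $a_1,\dots,a_k$; since $\bigl|\al{B}|_I\bigr|\le\aaa_{\class{K}}^{\ddd-1}$ is bounded by a constant depending only on $\class{K}$, each such element costs $O(1)$ gates, totaling $O(n^{\ddd-1})$. In Steps~2--5, each of the $k$ calls to $\IsRepresentable$ is realized by the expanded $P$-circuit $\crcplus{T_n}$ described in Remark~\ref{rem-alg1}, which has size $O(n^{\ddd+1})$ by~\eqref{eq:cTn}; all newly designated witnesses are intermediate values of this circuit, and the stage contributes $O(kn^{\ddd+1})$ gates. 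The while loop in Steps~6--8 iterates at most $\aaa_{\class{K}}^2(n-\ddd)=O(n)$ times, since each successful iteration adds a previously-missing pair of designated fork witnesses and the total number of such pairs is bounded by $\aaa_{\class{K}}^2(n-\ddd)$; in each iteration the oracle returns $\ell=O(1)$ elements $b_j$, each computable via $T_n$ applied to existing circuit nodes at cost $O(n^{\ddd+1})$, then $b=f(b_1,\dots,b_\ell)$ via a single $F$-gate, and the ensuing $\IsRepresentable(b,R)$ adds $O(n^{\ddd+1})$ more gates, giving $O(n^{\ddd+2})$ overall. Finally, Steps~9--15 add $O(n)$ gates via $O(1)$ applications of $P$ each. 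Summing yields $O(n^{\ddd-1})+O(kn^{\ddd+1})+O(n^{\ddd+2})+O(n)=O(kn^{\ddd+2})$.

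The main obstacle is the book-keeping required to verify that every element Algorithm~\ref{alg-cr-direct} places into $R$ is indeed produced from previously computed elements by operations in $F$. This requires looking inside each subroutine: for $\LocalRep$ one lifts the generating sequence of $\bar{b}\in\al{B}|_I$ to a generating sequence in $\al{B}$; for $\IsRepresentable$ the new elements are precisely the intermediate values of $\crcplus{T_n}$; and for $\NeedForkWitnesses$ the oracle's output already exhibits each $b_j$ as $T_n$ applied to prior elements of $R$ and names the operation $f\in F$ producing $b$. Once this correspondence is made explicit, assembling the $F$-circuit $C$ from the recorded simulation is a routine aggregation of the per-stage bounds already implicit in the running-time analysis of Algorithm~\ref{alg-cr-direct}.
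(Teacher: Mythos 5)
Your proof is correct and follows essentially the same route as the paper: simulate Algorithm~\ref{alg-cr-direct} stage by stage, realizing the $\LocalRep$ witnesses by constant-size terms, the $\IsRepresentable$ calls by $\crcplus{T_n}$ as in Remark~\ref{rem-alg1}, the $O(n)$ oracle iterations by copies of $\crc{T_n}$ plus one $f$-gate plus $\crcplus{T_n}$, and the final fork-closure by constantly many $P$-gates, then eliminating $P$-gates at constant overhead. The size accounting matches the paper's phase-by-phase bounds exactly.
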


\begin{proof}
  $C$ will be constructed in five phases, the first four of which produce \mbox{$F\cup\{P\}$}-circuits,
  which correspond to Step~1, Steps~2--5, Steps~6--8 and Steps~9--15 of  Algorithm~\ref{alg-cr-direct}.

  In Step~1, i.e., during the execution of Algorithm~\ref{alg-locrep}
  on the input $a_1,\dots,a_k$,
  for every choice of $I\in\binom{[n]}{\ddd-1}$ and
  $\bar{b}\in\al{B}|_I$, the designated local witness $r_{I,\bar{b}}\in R$ is
  obtained as $r_{I,\bar{b}}:=t_{I,\bar{b}}(a_1,\dots,a_k)$ for a term
  $t_{I,\bar{b}}$ (in the language $F$)
  such that $\bar{b}=t_{I,\bar{b}}(a_1|_I,\dots,a_k|_I)$.
  Since $\bigl|\al{B}|_I\bigr|\le\aaa_{\class{K}}^{\ddd-1}$ for every such $I$,
  where $\aaa_{\class{K}}:=\max\{|\al{A}|:\al{A}\in\class{K}\}$ is a constant
  independent of $a_1,\dots,a_k$, the terms $t_{I,\bar{b}}$ can be chosen
  to be of constant size ($\le \aaa_{\class{K}}^{\ddd-1}$).
  Therefore, an $F$-circuit $C_1$ which, for the input $a_1,\dots,a_k$, 
  computes the partial standardized representation
  obtained in Step~1 of Algorithm~\ref{alg-cr-direct}
  (i.e., a full set of designated local witnesses for $\al{B}$) 
  can be constructed as follows: $C_1$ is the disjoint union of the
  $F$-circuits $\crc{t_{I,\bar{b}}}$, except that they all have the same input
  nodes for $a_1,\dots,a_k$. The output $r_{I,\bar{b}}$
  of each $\crc{t_{I,\bar{b}}}$ is an output of $C_1$ as well, with
  the appropriate label recording its designation.
  Clearly, $C_1$ has size $O(n^{\ddd-1})$.
  
  In Steps~2--5,  Algorithm~\ref{alg-ir} is run on each input tuple
  $b\in\{a_1,\dots,a_k\}$
  together with
  the partial standardized representation $R$ constructed
  by that point in the
  algorithm (which  contains a full set
  of designated local witnesses for $\al{B}$ by
  Step~1). By Remark~\ref{rem-alg1}, each of these $k$
    applications of Algorithm~\ref{alg-ir} is a computation that can
    be performed by the $P$-circuit $\crcplus{T_n}$ on
    designated local witnesses in $R$ and on appropriately chosen
    designated fork witnesses, some of which are added to $R$
    during the computation.
    Therefore,
  an $F\cup\{P\}$-circuit $C_2$ which, for the inputs $a_1,\dots,a_k$, 
  computes the partial standardized representation for $\al{B}$
  obtained in Step~5 of Algorithm~\ref{alg-cr-direct}
  can be constructed as follows:
  add to $C_1$ $k$ disjoint copies of the $P$-circuit $\crcplus{T_n}$
  described in Remark~\ref{rem-alg1},
  one for each choice of $b\in\{a_1,\dots,a_k\}$, so that
  their input nodes are the appropriate designated local witnesses and
  fork witnesses computed earlier or during the evaluation of $T_n$ as described in Remark~\ref{rem-alg1},
  and they have appropriately
  labeled output nodes for every pair of
  designated
  fork witnesses added to $R$.
  It follows from Remark~\ref{rem-alg1} that $C_2$
  has size $O(n^{\ddd-1})+k\cdot O(n^{\ddd+1})=O(kn^{\ddd+1})$.
  
Similarly, in Steps~6--8,
in every iteration of the \while\ loop where
  the output of 
  $\NeedForkWitnesses(\class{K})$ has the form
  \[
  X=\bigl(\textrm{YES},f,
  \bigl((u_j^{(\ddd)},\hat{u}_j^{(\ddd)},\dots,u_j^{(n)},\hat{u}_j^{(n)},
  \bar{w_j^I})\bigr)_{j=1}^{\ell},b_1,\dots,b_\ell,b,\tilde{R}\bigr)
  \]
  such that conditions~\eqref{it:needforks1}--\eqref{it:needforks7}
  in the description of $\NeedForkWitnesses(\class{K})$ are satisfied,
  the fork witnesses $u_j^{(m)},\hat{u}_j^{(m)}$ and local witnesses
  $w_j^I$ had already been included in the current version of $R$,
  $b_1,\dots,b_\ell$ and $b$ can be computed as stated in
  conditions~\eqref{it:needforks4}--\eqref{it:needforks5}, and
  Algorithm~\ref{alg-ir} can be run on $b$ (and the current version of $R$)
  to find the fork witnesses that have to be added to $R$ to get $\tilde{R}$
  (the new $R$).
  We saw in the proof of Theorem~\ref{thm-cr-direct} that the \while\ loop
  is iterated at most $O(n)$ times. Therefore, 
  an $F\cup\{P\}$-circuit $C_3$ which, for the input $a_1,\dots,a_k$, 
  computes the partial standardized representation for $\al{B}$
  obtained in Step~8 of Algorithm~\ref{alg-cr-direct}
  can be constructed as follows:
  add to $C_2$, one after the other, as many as necessary (at most $O(n)$)
  circuits of the following form:
  \begin{itemize}
    \renewcommand{\labelitemi}{$\circ$}
    \item
  $\ell$ disjoint copies of the $P$-circuit $\crc{T_n}$ 
  whose inputs are (already computed) fork and local witnesses
  $u_j^{(\ddd)},\hat{u}_j^{(\ddd)},\dots,u_j^{(n)},\hat{u}_j^{(n)}, \bar{w_j^I}$
  ($j\in[\ell]$), and
  \item
  whose outputs $b_1,\dots,b_\ell$ are the inputs for an
  $f$-gate, which produces $b$,
  \item
followed by a copy of the $P$-circuit $\crcplus{T_n}$ from Remark~\ref{rem-alg1}
with appropriately labeled output nodes for every pair of designated fork witnesses added to
$R$, as described in Remark~\ref{rem-alg1}.
  \end{itemize}
  Since the
 $P$-circuits $\crc{T_n}$ and $\crcplus{T_n}$ have size $O(n^{\ddd+1})$ by \eqref{eq:cTn} and  Remark~\ref{rem-alg1},
  and the arities $\ell$ of the operation symbols $f$ are bounded by a constant
  independent of the input,
  we get that the size of $C_3$ is
  \[
  O(kn^{\ddd+1})+O(n)\cdot\bigl(\ell\cdot O(n^{\ddd+1})+1+O(n^{\ddd+1})\bigr)
  =O(kn^{\ddd+2}).
  \]

  Finally, to compute the additional fork witnesses included in $R$ in
  Steps~ 9--15
  of Algorithm~\ref{alg-cr-direct}, we have to add at most
  $O(n)$ many $P$-circuits to $C_3$,
  each one of size $1$ or $3$
   (see the definition of $p$ in \eqref{eq-ps}).

  The resulting $F\cup\{P\}$-circuit, $C_4$, is therefore of size $O(kn^{\ddd+2})$.
  Finally, since $P$ is a term in $F$, gates of type $P$ can be easily eliminated
  from $C_4$ 
  by replacing each of them with the corresponding $F$-circuit (of constant size).
 Thus we obtain an $F$-circuit $C$ from $C_4$ which satisfies
  conditions~\eqref{it:short-cr3}--\eqref{it:short-cr2}
   of the lemma.
\end{proof}

\begin{proof}[Proof of Theorem~\ref{thm-short-rep}]
  To prove the first statement of the theorem,    
  consider an input $a_1,\dots,a_k,b\in\al{A}_1\times\dots\times\al{A}_n$
  $(\al{A}_1,\dots,\al{A}_n\in\class{K})$ for
  $\SMP(\class{K})$, let $\al{B}$ denote the subalgebra of
  $\al{A}_1\times\dots\times\al{A}_n$ generated by $\{a_1,\dots,a_k\}$,
  and assume that $b\in\al{B}$.
  For $n<\ddd$ the assertion of the theorem is trivial ---
  by the same argument
  that we used in the second paragraph of the proof of
  Lemma~\ref{lm-short-cr} --- 
  since $|\al{B}|\le \aaa_{\class{K}}^{\ddd-1}$ for a constant $\aaa_{\class{K}}$
  that depends only on
  $\class{K}$. Therefore, we will assume from now on that $n\geq\ddd$.
  
  Let $C$ be a circuit satisfying
  properties~\eqref{it:short-cr3}--\eqref{it:short-cr2}
  of Lemma~\ref{lm-short-cr}, and let $R$ denote the standardized
  representation for $\al{B}$ that is computed by $C$ from $a_1,\dots,a_k$.
  As we noted in Remark~\ref{rm-representable}, 
  Lemma~\ref{lm-rep-generates}
  shows that if $b\in\al{B}$, then
  the equality \eqref{eq-rep-b},
  \[ b = T_n(u^{(\ddd)},\hat{u}^{(\ddd)},\dots,u^{(n)},\hat{u}^{(n)}, \bar{b^I)}, \]
  holds with all arguments of $T_n$ members of $R$; namely, each pair
  $u^{(m)},\hat{u}^{(m)}$ ($m=\ddd,\dots,n$) is
  designated to witness a derived fork in the $m$-th coordinate,
  and the tuple $\bar{b^I}$ ($I\in\binom{[n]}{\ddd-1}$)
  consists of some designated
  local witnesses for $\al{B}$.
Hence, $C$ has a subcircuit $U$ which outputs exactly the arguments of $T_n$ in the above expression for $b$. 
Since  $C$ is an $F$-circuit of size $O(kn^{d+2})$, so is $U$.
Identifying the outputs of $U$ with the inputs of an $F$-circuit for $T_n$, which is obtained from its $P$-circuit
by replacing each $P$-gate by the corresponding $F$-circuit, we get an $F$-circuit for a term
$t$ such that $t(a_1,\dots,a_k) = b$. 
 The size of $\crcf{F}{T_n}$ is constant times the size of $\crcf{P}{T_n}$, which is $O(n^{d+1})$ by~\eqref{eq:cTn}
 (for $e=1$).
Therefore the size of $\crcf{F}{t}$ is $O(kn^{\ddd+2})+O(n^{\ddd+1})=O(kn^{\ddd+2})$.
This proves the first statement of Theorem~\ref{thm-short-rep}.

 By this statement, we have a polynomial size certificate for
 ``$b$ is in the subalgebra generated by $\{a_1,\dots,a_k\}$''
 for every input $a_1,\dots,a_k,b$ for $\SMP(\class{K})$.
 This proves that $\SMP(\class{K})$ lies in $\NP$.
\end{proof}

\begin{rem}\label{rm-witness-term}
 In Theorem~\ref{thm-short-rep}, the term $t$ witnessing that $b$ belongs to the subalgebra generated by $a_1,\dots,a_k$
 clearly has depth polynomial in $n$.  
 In the case that the $\ddd$-cube term is a Mal'tsev term, this was already observed in~\cite[Theorem 2.2]{Ma:SMP}.
 If the $\ddd$-cube term is a near-unanimity term, then the fact that $b$ belongs to the subalgebra generated by
 $a_1,\dots,a_k$ can be witnessed by polynomially many terms of constant length by the
 Baker--Pixley Theorem~\cite{BP:PIAT}.
 It is however open in the general case, whether a polynomial bound can be imposed on the length of a representation 
 of $t$ by a term rather than by a circuit, as in Theorem~\ref{thm-short-rep}. 
\end{rem}

\subsection{Finding compact representations:
The relationship between $\SMP(\class{K})$ and $\CompactRep(\class{K})$}
\label{ssec-finding-crs2} 

 The main result of this section is the following theorem.

\begin{algorithm}
\caption{Reduction of $\CompactRep(\class{K})$ to $\SMP(\class{K})$}
\label{alg-smp-red-cr} 
\begin{algorithmic}[1] 
\REQUIRE $a_1,\dots,a_k\in\al A_1\times\dots\times\al A_n$ with 
$\al A_1,\dots,\al A_n\in\class{K}$, $n\geq\ddd$
\ENSURE Standardized representation $R$ for the subalgebra $\al B$ of
$\al A_1\times\dots\times\al A_n$ generated by $a_1,\dots,a_k$
\STATE
Run Algorithm~\ref{alg-locrep} for
  $\LocalRep(\class{K})$ with input
  $a_1,\dots, a_k\in\al{A}_1\times\dots\times\al{A}_n$ and
  $\theta_i=0$ ($i\in[n]$) to get output $R_0$
\STATE $R:=R_0$
\FOR{$i=\ddd,\dots,n$ and $\gamma\in\al B|_i$}
\STATE find $b\in R_0$ with $b|_i=\gamma$
\FOR{$\beta\in\al B|_i$}
\STATE let $c\in\al A_1\times\dots\times\al A_i$ be such that  
$c|_{[i-1]}=b|_{[i-1]}$ and $c|_i=\beta^\gamma$
\STATE run $\SMP(\class{K})$ with input 
$a_1|_{[i]},\dots,a_k|_{[i]},c\in\al A_1\times\dots\times\al A_i$
\IF{answer is YES}
\FOR{$j=i+1,\dots,n$}
\STATE find $c_j\in A_j$ such that $\SMP(\class{K})$ with input
$a_1|_{[j]},\dots,a_k|_{[j]},(c,c_j)\in\al A_1\times\dots\times\al A_j$ answers YES
\STATE $c:=(c,c_j)$
\ENDFOR
\STATE add $b,c$ to $R$, and designate them to witness the fork $(\gamma,\beta^\gamma)\in\fork_i(R)$.
\ENDIF
\ENDFOR
\ENDFOR 
\STATE \return\ $R$
\end{algorithmic}
\end{algorithm}

\begin{algorithm}
\caption{Reduction of $\SMP(\class{K})$ to $\CompactRep(\class{K})$}
\label{alg-cr-red-smp} 
\begin{algorithmic}[1] 
\REQUIRE $a_1,\dots,a_k,b\in\al A_1\times\dots\times\al A_n$ with 
$\al A_1,\dots,\al A_n\in\class{K}$, $n\geq\ddd$
\ENSURE Is $b$ in the subalgebra $\al B$ of
$\al A_1\times\dots\times\al A_n$ generated by $a_1,\dots,a_k$?
\STATE Run $\CompactRep(\class{K})$ with input 
$a_1,\dots,a_k$, and let $R$ be its output (a standardized
representation for $\al B$)
\FOR{$I\in\binom{[n]}{\ddd-1}$}
\IF{$R$ contains no designated witness for $b|_I$}
\STATE \return\ NO
\ENDIF
\ENDFOR
\STATE Run Algorithm~1 for $\IsRepresentable(\class{K})$ with input
$b,R$, to get output $(\textsf{A},S',S)$ with 
$\textsf{A}=\textrm{YES}$
or $\textsf{A}=\textrm{NO}$
\STATE \return\  $\textsf{A}$
\end{algorithmic}
\end{algorithm}

\begin{thm}
\label{thm-smp-cr-equiv}
The problems
$\SMP(\class{K})$ and
$\CompactRep(\class{K})$ are polynomial time reducible to one another.
 In more detail,
  Algorithm~\ref{alg-smp-red-cr} solves $\CompactRep(\class{K})$
  in polynomial time by repeated calls of $\SMP(\class{K})$, and
  Algorithm~\ref{alg-cr-red-smp} solves $\SMP(\class{K})$
  (for inputs satisfying $n\ge\ddd$) in
  polynomial time by a single call of $\CompactRep(\class{K})$.
\end{thm}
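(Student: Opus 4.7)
My plan is to verify correctness and polynomial running time of the two reductions given by Algorithms~\ref{alg-smp-red-cr} and~\ref{alg-cr-red-smp}.

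For Algorithm~\ref{alg-smp-red-cr} (computing a standardized representation using an $\SMP(\class{K})$ oracle): first invoke $\LocalRep(\class{K})$ with all $\theta_i=0$ to obtain, in $O(n^\ddd)$ time by Lemma~\ref{lm-locrep}, a set $R_0\subseteq\al{B}$ of designated local witnesses for $\al{B}$; this already takes care of condition~\eqref{it:strep1} of Definition~\ref{df-strep} and supplies $R_0|_m=\al{B}|_m$ for every $m$. To enforce condition~\eqref{it:strep2}, one must adjoin designated witnesses in $\al{B}$ for every derived fork $(\gamma,\beta^\gamma)\in\fork'_m(\al{B})$ in each coordinate $m\ge\ddd$. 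Fix any $b\in R_0$ with $b|_m=\gamma$ and let $\beta$ range over $\al{B}|_m$. The key observation --- an immediate instance of Lemma~\ref{lm-der-forks}~\eqref{it:df} --- is that $(\gamma,\beta^\gamma)\in\fork'_m(\al{B})$ is witnessed by a pair with left coordinate $b$ if and only if the partial tuple $c:=(b|_{[m-1]},\beta^\gamma)$ lies in the projection $\al{B}|_{[m]}$; this is a single $\SMP(\class{K})$ query on the projected generators $a_1|_{[m]},\dots,a_k|_{[m]}$. If the answer is YES, extend $c$ one coordinate at a time to a full element of $\al{B}$: for each $j=m+1,\dots,n$, iterate through the at most $\aaa_{\class{K}}$ candidates $c_j\in\al A_j$ until $\SMP$ confirms $(c,c_j)\in\al{B}|_{[j]}$. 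The total number of $\SMP$ calls is $O(\aaa_{\class{K}}^3 n^2)$ on inputs of size polynomial in the original, so the reduction runs in polynomial time.

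For Algorithm~\ref{alg-cr-red-smp} ($\SMP$ via a single call of $\CompactRep$), the case $n<\ddd$ is trivial since then $|\al{B}|\le\aaa_{\class{K}}^{\ddd-1}$ is bounded by a constant. For $n\ge\ddd$, one call of $\CompactRep(\class{K})$ yields a standardized representation $R$ of $\al{B}$. By Definition~\ref{df-rep}~\eqref{it:I} we have $R|_I=\al{B}|_I$ for each $I\in\binom{[n]}{\ddd-1}$, so if some $b|_I$ has no designated local witness in $R$, then $b|_I\notin\al{B}|_I$ and hence $b\notin\al{B}$; return NO. Otherwise $(b,R)$ is a valid input for $\IsRepresentable(\class{K})$. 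If $b\in\al{B}$, Lemma~\ref{lm-rep-generates} ensures $b$ is representable by $R$, so Algorithm~\ref{alg-ir} returns YES; conversely, if it returns YES then $b$ is representable by $R\subseteq\al{B}$, and Corollary~\ref{cor-representable} places $b$ in the $P$-subalgebra generated by $R$, which is contained in $\al{B}$. Lemma~\ref{lm-isrep} bounds this step by $O(n^{\ddd+2})$, so the whole reduction is polynomial.

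The conceptual crux, and essentially the only point requiring real care, is the equivalence used in the first reduction between the existence of a derived-fork witness with prescribed left coordinate and the membership of a short partial tuple in a projection of $\al{B}$; this is exactly the transferability property of derived forks in Lemma~\ref{lm-der-forks}~\eqref{it:df}. Everything else reduces to bookkeeping: verifying that the designations added during the construction exhibit the output as a standardized representation, and counting iterations to confirm polynomial time.
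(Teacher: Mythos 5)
Your proposal is correct and follows essentially the same route as the paper's own proof: the first reduction computes local witnesses via $\LocalRep(\class{K})$ and then, exactly as in Algorithm~\ref{alg-smp-red-cr}, uses the transferability of derived forks (Lemma~\ref{lm-der-forks}~\eqref{it:df}) to turn the existence of a derived-fork witness with prescribed left coordinate $b$ into an $\SMP$ query on $(b|_{[i-1]},\beta^\gamma)$ followed by coordinate-by-coordinate extension, while the second reduction checks designated local witnesses for $b|_I$ and then invokes $\IsRepresentable(\class{K})$, justified by Lemma~\ref{lm-rep-generates}/Remark~\ref{rm-representable} and Corollary~\ref{cor-representable}. (Your ``if and only if'' should only be read in the direction you actually use --- a YES answer need not certify a \emph{derived} fork --- but, as in the paper, adding witnesses for extra non-derived forks is permitted by Definition~\ref{df-strep}, so this does not affect correctness.)
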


\begin{proof}
  As we mentioned earlier, $\SMP(\class{K})$ for inputs
    $a_1,\dots,a_k,b$ that are $n$-tuples with $n<\ddd$ can be solved in
    constant time, since the algebra $\al{B}$ generated by $a_1,\dots,a_k$
    has size bounded by a constant that depends only on $\class{K}$.
    Therefore it will not restrict generality to assume $n\ge\ddd$
    for the inputs of $\SMP(\class{K})$ in Algorithm~\ref{alg-cr-red-smp},
    and the first statement of Theorem~\ref{thm-smp-cr-equiv} will follow
    from the second statement on Algorithms~\ref{alg-smp-red-cr}
    and~\ref{alg-cr-red-smp}.

    First we will discuss Algorithm~\ref{alg-smp-red-cr}.
 To prove its correctness, note that Step~1 produces a
  partial standardized representation $R_0$ for $\al{B}$, which
  is a full set of designated local witnesses for $\al{B}$.
Elements without designations are not added to $R=R_0$
later on in Algorithm~\ref{alg-smp-red-cr},
therefore we will be done if we show that
Steps~3--16 add to $R$ a pair of designated witnesses (from $\al B$) 
for all derived forks of $\al B$ in coordinates $\ge\ddd$,
  and possibly for some non-derived forks as well, but nothing else is added
  to $R$.

Lines 3--7 show that the loop in Steps~3--16 examines
each pair $(\gamma,\beta)\in\al B|_i\times\al B|_i$ for every
$i=\ddd,\dots,n$, finds 
$b=(b_1,\dots,b_n)\in R_0\,(\subseteq R\subseteq B)$
such that $b_i=\gamma$, and checks --- using $\SMP(\class{K})$ ---
whether or not the tuple $c=(b_1,\dots,b_{i-1},\beta^\gamma)$ is in the
subalgebra $\al B|_{[i]}$ generated by the elements 
$a_1|_{[i]},\dots,a_k|_{[i]}$.
If the answer is YES, then $B$ contains a tuple of the form 
$(c,c_{i+1},\dots,c_n)=(b_1,\dots,b_{i-1},\beta^\gamma,c_{i+1},\dots,c_n)$
for some $c_j\in A_j$ ($j=i+1,\dots,n$),
which will be found, coordinate-by-coordinate, by repeated applications
of $\SMP(\class{K})$ in Steps~9--12.
It is clear that in this case $b,c$ are in $\al{B}$ and witness that $(\gamma,\beta^\gamma)$ is a fork in $B$,
so these witnesses are correctly added to $R$  in Step~13.
This is the only step when $R$ changes, so this shows that the $R$ returned
in Step~17 is a partial standardized representation for $\al{B}$.

It remains to show that $R$ contains designated witnesses for all derived
forks of $\al{B}$ in coordinates $\ge\ddd$. Let $\ddd\le i\le n$, and let
$(\gamma,\beta^\gamma)\in\fork_i'(B)$ with
$(\gamma,\beta)\in\fork_i(B)$.
By statement~\eqref{it:df} of Lemma~\ref{lm-der-forks}, for every choice of
$b\in\al{B}$ with $b|_i=\gamma$ there exists an element $\hat{b}$ such that
$b$ and $\hat{b}$ are witnesses for the derived fork
$(\gamma,\beta^\gamma)\in\fork_i'(B)$, that is, such that
$\hat{b}|_{[i-1]}=b|_{[i-1]}$ and $\hat{b}|_i=\beta^\gamma$.
Hence, for the $b$ and for the $i$-tuple $c=\hat{b}|_{[i]}$
found in Steps~4 and~6 of Algorithm~\ref{alg-smp-red-cr},
$\SMP(\class{K})$ in Step~7 has to give
the answer YES, and the $n$-tuple $c$ produced in Steps~8--12 and
added to $R$ in Step~13, along with $b$,
to witness $(\gamma,\beta^\gamma)\in\fork_i'(R)$ is such a $\hat{b}$.
This finishes the proof of the correctness of Algorithm~\ref{alg-smp-red-cr}.

In Step~1, Algorithm~\ref{alg-locrep} runs in $O(n^\ddd)$ time.
Steps~3--16 require running $\SMP(\class{K})$ $O(n^2)$ times
on inputs not larger than the input for Algorithm~\ref{alg-smp-red-cr}, and adding one pair of
witnesses to $R$ no more than $O(n)$ times.
This show that Algorithm~\ref{alg-smp-red-cr} reduces $\CompactRep(\class{K})$ 
to $\SMP(\class{K})$ in $O(n^\ddd)$ time.

Now we turn to Algorithm~\ref{alg-cr-red-smp}.
The algorithm starts with a call to $\CompactRep(\class{K})$ to compute
a standardized representation $R$ for the algebra $\al B$
generated by the input tuples $a_1,\dots,a_k$.
A necessary condition for the input tuple
$b$ to be in $\al B$ is that
$b|_I\in\al B|_I$ for all $I\in\binom{[n]}{\ddd-1}$.
Since $R$ contains a full set of designated local witnesses for $\al{B}$,
$b$ will satisfy this necessary condition if and only if
$R$ contains designated local witnesses for all projections $b|_I$
$\bigl(I\in\binom{[n]}{\ddd-1}\bigr)$ of $b$.
This is being checked in 
Steps~2--6 of Algorithm~\ref{alg-cr-red-smp}; if the condition fails
for some $I\in\binom{[n]}{\ddd-1}$, the algorithm returns the correct answer
NO, meaning, $b\notin\al B$.

If the algorithm passes 
Steps~2--6 without returning NO, then
$b,R$ is a correct input for $\IsRepresentable(\class{K})$, which
checks in 
Step~7 whether $b$ is representable by $R$.
Since every tuple representable by $R$ must be in $\al B$, and conversely,
by Remark~\ref{rm-representable}, every element of $\al B$ is representable
by $R$, we get that the YES/NO answer provided by $\IsRepresentable(\class{K})$
is the correct answer to $\SMP(\class{K})$ for the given input.
This shows the correctness of Algorithm~\ref{alg-cr-red-smp}.

Steps~2--6 of Algorithm~\ref{alg-cr-red-smp} run in 
$O(n^{\ddd-1})$ time, while 
$\IsRepresentable(\class{K})$ in 
Step~7 requires $O(n^{\ddd+2})$ time.
Thus, Algorithm~\ref{alg-cr-red-smp} reduces
$\SMP(\class{K})$ to $\CompactRep(\class{K})$ in $O(n^{\ddd+2})$ time.
\end{proof}

\section{Structure Theory and the Subpower Membership Problem}
\label{sec-subpowers}

Our global assumption for this section is the following:

\begin{asm}
    $\var{V}$ is a variety with a $\ddd$-cube term, or equivalently,
    with a $(1,\ddd-1)$-parallelogram term ($\ddd>1$). 
\end{asm}    

As in Section~\ref{sec-cRep},
we do not assume that the algebras we are considering
are finite, or have a finite language,
so the results in this section hold for arbitrary algebras.

One of the main results of
\cite{KS:CAPT} is a structure theorem for the 
critical subalgebras of finite powers of algebras with a
cube (or parallelogram) term.
In this section we adapt this structure theorem to
find a new representation (different from compact representations)
for subalgebras of products of algebras in a variety $\var{V}$
with a cube term. In the next section, this representation will be used
to prove the main result of the paper.

To restate the results from \cite{KS:CAPT} that we need here, 
we introduce some terminology and notation.
Let $\al R$ be a subalgebra of a product 
$\al A^{(1)}\times\dots\times\al A^{(n)}$
of some algebras
$\al A^{(1)},\dots,\al A^{(n)}\in\var{V}$.
Let
$\al A_i:=\al{R}|_i$ for each $i\in[n]$, and
let $\al C:=\al A_1\times\dots\times\al A_n$.
So, $\al R$ is a \emph{subdirect subalgebra} of $\al C$.

We say that $\al R$ is a 
\emph{critical subalgebra} of $\al A^{(1)}\times\dots\times\al A^{(n)}$
if it has the following two properties:
\begin{itemize}
\item
$\al R$ is \emph{completely $\cap$-irreducible} in the lattice of subalgebras of
  $\al A^{(1)}\times\dots\times\al A^{(n)}$,
that is,
  whenever $\al{R}$ is an intersection of a family of subalgebras $\al{S}_j$
  ($j\in J$) of $\al A^{(1)}\times\dots\times\al A^{(n)}$, we have that
  $\al{R}=\al{S}_j$ for some $j\in J$; and
\item
$\al R$ is \emph{directly indecomposable} in the following sense:
$[n]$ cannot be partitioned 
into two nonempty sets $I$ and $J$ such that $\al R$ and 
$\al{R}|_I\times\al{R}|_J$ differ only by a permutation of 
coordinates.
\end{itemize}
  Note that if $\al{R}$ is completely $\cap$-irreducible,
  it will not have a direct decomposition
 $\al{R}|_I\times\al{R}|_J$ (up to a permutation of coordinates) 
  where $\al{R}|_I$ and $\al{R}|_J$ are both proper subalgebras of the
  corresponding products $\prod_{i\in I}\al{A}^{(i)}$ and
  $\prod_{i\in J}\al{A}^{(i)}$, respectively. However, $\al{R}$ may be
  completely $\cap$-irreducible and still directly decomposable
as $\al{R}|_I\times\al{R}|_J$ (up to a permutation of coordinates) if, say,
  $\al{R}|_J=\prod_{i\in J}\al{A}^{(i)}$
  and $\al{R}|_I$ is a completely $\cap$-irreducible subalgebra of
  $\prod_{i\in I}\al{A}^{(i)}$.

Now let us assume that $\al R$ is a critical subalgebra of 
$\al A^{(1)}\times\dots\times\al A^{(n)}$.
Continuing with the notation introduced
in the second to last paragraph, let us choose
and fix $\delta_i\in\Con(\al A_i)$ ($i\in[n]$)
such that 
$\delta:=\delta_1\times\dots\times\delta_n$
is the largest product congruence of 
$\al C$ with the property that $\al R$ is a $\delta$-saturated
subalgebra of $\al C$.
(Such a congruence exists, because the join of product congruences
of $\al{C}$ is a
product congruence, and if $\al R$ is saturated with respect to a family
of congruences of $\al C$, then it is saturated with respect to their join.)
With this notation, let
$\bar{\al R}:=\al R/\delta_{\al R}$, and let 
$\bar{\al A}_i:=\al A_i/\delta_i$ $(i\in[n])$; 
we call $\bar{\al R}$ 
\emph{the reduced representation of $\al R$}.

Theorems~2.5 and 4.1 of \cite{KS:CAPT} yield a structure theorem
for the critical subalgebras of finite 
powers $\al A^n$ of an arbitrary algebra $\al A\in\var{V}$.
The relevant proofs in \cite{KS:CAPT}, namely 
the proofs of Theorem~2.5 (and its preparatory Lemmas~2.1, 2.3, 2.4)
and Theorem~3.6 (part (3), implication $\Rightarrow$), 
carry over without any essential changes to the more general situation
when instead of subalgebras of powers $\al A^n$  with $\al A\in\var{V}$ 
we consider subalgebras of products 
$\al A^{(1)}\times\dots\times\al A^{(n)}$ with
$\al A^{(1)},\dots,\al A^{(n)}\in\var{V}$.
Thus, we get the theorem below, where
we state only those parts
of the structure theorem that we need later on, 
retaining the numbering from \cite[Theorem~2.5]{KS:CAPT}.
The superscript ${}^\flat$ in (6)$^\flat$
 indicates that 
instead of the original condition (6) we state a weaker condition
which is sufficient for our purposes. 

\begin{thm}[Cf.\ \cite{KS:CAPT}]
\label{thm-paralg}
Let $\al A^{(1)},\dots,\al A^{(n)}\in\var{V}$, and
let $\bar{\al R}$ be the reduced representation of a 
critical subalgebra $\al R$ of 
$\al A^{(1)}\times\dots\times\al A^{(n)}$.
If $n\ge \ddd$, then the following hold.
\begin{enumerate}
\item
$\bar{\al R}\le\prod_{i=1}^n\bar{\al A}_i$ is a representation of
$\bar{\al R}$ as a subdirect product of subdirectly irreducible
algebras $\bar{\al A}_i$.
\setcounter{enumi}{6}
\item[{\rm(6)}$^\flat$] \label{it:sim}
$\bar{\al A}_i$ and $\bar{\al A}_j$ are similar 
for any $i,j\in[n]$ (see Section~\ref{sec:cm}).
\item
If $n>2$, then each $\bar{\al A}_i$ has abelian monolith $\mu_i$
$(i\in[n])$.
\item \label{it:centiso} 
For the centralizers $\rho_\ell:=(0:\mu_\ell)$ of the monoliths $\mu_\ell$
$(\ell\in[n])$, the image of the composite map
\[
\bar{\al R} \stackrel{\proj_{ij}}{\rightarrow}
\bar{\al A}_i\times\bar{\al A}_j
\twoheadrightarrow\bar{\al A}_i/\rho_i\times\bar{\al A}_j/\rho_j.
\]
is the graph of an isomorphism $\bar{\al A}_i/\rho_i\to\bar{\al A}_j/\rho_j$
for any distinct $i,j\in[n]$.
\end{enumerate}
\end{thm}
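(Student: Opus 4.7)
The plan is to follow the proof of \cite[Theorem~2.5]{KS:CAPT} (together with its supporting Lemmas~2.1, 2.3, 2.4 and the $(\Rightarrow)$ direction of the relevant part of \cite[Theorem~3.6]{KS:CAPT}), with only superficial changes to accommodate products of possibly distinct algebras $\al{A}^{(1)},\dots,\al{A}^{(n)}$ in place of a fixed power $\al{A}^n$. The key ingredient throughout is the $(1,\ddd-1)$-parallelogram term $P$, whose existence is guaranteed by Theorem~\ref{thm-cube-par}, and which acts coordinate-wise and so is oblivious to whether the factors are all equal.

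For part~(1), subdirectness of $\bar{\al{R}}$ in $\prod_i\bar{\al{A}}_i$ is immediate from $\al{A}_i=\al{R}|_i$ together with the definition of the quotient. To see that each $\bar{\al{A}}_i$ is subdirectly irreducible, I would argue by contradiction. If some $\bar{\al{A}}_i$ were not subdirectly irreducible, pulling back a subdirect decomposition produces congruences $\beta_j$ of $\al{A}_i$ with $\beta_j > \delta_i$ and $\bigwedge_j\beta_j=\delta_i$. Form the product congruences $\theta^{(j)}:=\beta_j\times\prod_{k\ne i}\delta_k$, each strictly above $\delta$. By maximality of $\delta$, the $\theta^{(j)}$-saturations $\al{R}[\theta^{(j)}]$ all properly contain $\al{R}$; but using $\delta$-saturation of $\al{R}$ together with the parallelogram term to patch preimages coordinate-by-coordinate (this is where the cube-term hypothesis is essential), one verifies $\bigcap_j \al{R}[\theta^{(j)}]=\al{R}$, contradicting complete $\cap$-irreducibility of $\al{R}$.

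For parts~(6)$^\flat$, (7), and~(8), the direct indecomposability of $\al{R}$ is the crucial hypothesis. For any distinct $i,j\in[n]$, the algebra $\al{E}:=\bar{\al{R}}|_{\{i,j\}}$ is a subdirect subalgebra of $\bar{\al{A}}_i\times\bar{\al{A}}_j$ whose projection-kernel congruences $\eta_i,\eta_j$ satisfy $\al{E}/\eta_i\cong\bar{\al{A}}_i$ and $\al{E}/\eta_j\cong\bar{\al{A}}_j$, both subdirectly irreducible by~(1). Direct indecomposability forces the monoliths of $\bar{\al{A}}_i$ and $\bar{\al{A}}_j$ to correspond inside $\al{E}$; analyzing the congruence intervals $[\eta_i,\eta_i^+]$ and $[\eta_j,\eta_j^+]$ and producing the requisite projectivity (again via the parallelogram term, which supplies the needed skew-congruence elements) yields the similarity of~(6)$^\flat$ through the criterion recalled on page~\pageref{p-similar-alg}. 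For~(7), having $n>2$ lets one take three distinct coordinates and apply the parallelogram identity to an essentially three-variable commutator computation on the monoliths, forcing $[\mu_i,\mu_i]=0$ for each $i$. For~(8), once abelianness is available, the centralizer $\rho_i$ absorbs the skew-congruence machinery, and the composite map $\bar{\al{R}}\to\bar{\al{A}}_i/\rho_i\times\bar{\al{A}}_j/\rho_j$ becomes simultaneously subdirect and functional in each coordinate, i.e., the graph of an isomorphism.

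The main obstacle is verifying that the lemmas in \cite{KS:CAPT} --- in particular the skew-congruence constructions in Lemmas~2.1, 2.3, 2.4 and the handling of critical subalgebras --- genuinely depend only on the coordinate-wise action of the parallelogram term and not on any hidden symmetry afforded by having the same algebra in every coordinate. Once this is checked (as the authors assert it can be), the rest of the argument transfers with essentially only notational changes, producing the heterogeneous form of the structure theorem stated above.
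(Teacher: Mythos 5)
Your proposal takes essentially the same route as the paper: the paper does not reprove the theorem but simply asserts that the proofs of Theorem~2.5 (with its preparatory Lemmas~2.1, 2.3, 2.4) and the $\Rightarrow$ direction of Theorem~3.6(3) in \cite{KS:CAPT} carry over without essential change from powers $\al{A}^n$ to products $\al{A}^{(1)}\times\dots\times\al{A}^{(n)}$, which is exactly your plan, including your identification of the sole real obligation (checking that those arguments use only the coordinatewise action of the parallelogram term and not homogeneity of the factors). Your sketches of the internal arguments go beyond what the paper records and so cannot be checked against it, but the approach itself coincides with the paper's.
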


Note that the homomorphism in part~\eqref{it:centiso} is the same as in
\cite[Theorem~2.5~(8)]{KS:CAPT}; the slightly different description
presented here will be more convenient later on.

\medskip

Now we are ready to discuss our representation theorem.
Let $\al B_1,\dots,\al B_n$ be
nontrivial algebras in $\var{V}$, and let $\al B$ be a
subdirect subalgebra of $\al B_1\times\dots\times\al B_n$.

First, we replace each $\al B_j$ ($j\in[n]$)
by its image under the embedding 
\begin{equation}\label{eq-repBj}
\al B_j
\hookrightarrow\prod_{\sigma\in\Irr(\al B_j)}\al B_j/\sigma,
\quad\qquad
x_j\mapsto(x_j/\sigma)_{\sigma\in\Irr(\al B_j)};
\end{equation}
thus, each $\al B_j$ is replaced by
a subdirect product of all of its subdirectly irreducible quotients
$\al B_j/\sigma$.
To set up a more convenient notation, let
\[
W:=\{(j,\sigma):j\in[n],\sigma\in\Irr(\al B_j)\},
\] 
and for each $j\in[n]$,
let $W_j:=\{j\}\times\Irr(\al B_j)$; thus, $W=W_1\cup\dots\cup W_n$.
Furthermore, for each $w=(j,\sigma)\in W$ let $\hat{\al B}_w:=\al B_j/\sigma$,
and for every element $x_j$ in $\al B_j$ let $\hat{x}_w:=x_j/\sigma$.
Then the product of the embeddings \eqref{eq-repBj} for all $j\in[n]$ 
yields an embedding
\begin{equation*}
\hat{\phantom{\al B}}
\colon \prod_{j\in[n]}\al B_j
\hookrightarrow
\prod_{w\in W}\hat{\al B}_w,
\qquad
x=(x_j)_{j\in[n]}
\mapsto \hat x:=(\hat{x}_w)_{w\in W}.
\end{equation*}
We will denote the image of $\al B$ under this embedding $\hat{\phantom{n}}$
by $\hat{\al B}$. By construction, $\hat{\al B}$ is a subdirect product
of the subdirectly irreducible algebras $\hat{\al B}_w$ ($w\in W$);
in particular,
$\hat{\al B}_w=\hat{\al B}|_w$ for all $w\in W$.
For each $w\in W$ let $\mu_w$ denote the monolith of $\hat{\al B}_w$ and
$\rho_w$ its centralizer $(0:\mu_w)$. 

Next we define a relation $\sim$ on $W$ as follows: we require $\sim$
to be reflexive, and   
for distinct $v,w\in W$ we define $v \sim w$ to hold if and only if
\begin{itemize}
\item 
the subdirectly irreducible algebras
$\hat{\al B}_v$ and $\hat{\al B}_w$ are similar with abelian monoliths
$\mu_v$ and $\mu_w$, and
\item
the image of $\hat{\al B}|_{vw}$ under the natural map
$\hat{\al B}_v\times\hat{\al B}_w\twoheadrightarrow
(\hat{\al B}_v/\rho_v)\times(\hat{\al B}_w/\rho_w)$
is the graph of an isomorphism
$\hat{\al B}_v/\rho_v\to\hat{\al B}_w/\rho_w$.
\end{itemize}
It is easy to see that $\sim$ is an equivalence relation on $W$.

Our representation theorem describes the algebra $\al B$ in terms of its
image $\hat{\al B}$, namely, it shows that $\hat{\al B}$ is determined by
its projections onto small sets of coordinates (i.e., small subsets of $W$)
and by its projections onto the blocks of $\sim$. 
A block of $\sim$ may be large,
but the image of $\hat{\al B}$ under a projection onto a block of $\sim$
has a special structure.

\begin{thm}
\label{thm-repr}
Let $\var{V}$ be a variety with a $\ddd$-cube term, 
let $\al B_1,\dots,\al B_n$ be
nontrivial algebras in $\var{V}$, and let $\al B$ be a
subdirect subalgebra of $\al B_1\times\dots\times\al B_n$.
Furthermore, let $W$, $\hat{\phantom{\al B}}$, $\hat{\al B}_w$ $(w\in W)$,
$\hat{\al B}$, and $\sim$ be as defined above.
Then, for any tuple $c\in\al B_1\times\dots\times\al B_n$,
the following conditions are equivalent:
\begin{enumerate}
\item \label{it:cB}
$c\in\al B$.
\item \label{it:c2}
  $c$ satisfies
  \begin{itemize}
  \item
    $c|_I\in\al{B}|_I$ for all $I\subseteq[n]$ with
    $|I|<\max\{\ddd,3\}$, and
  \item
  $\hat{c}|_U\in\hat{\al B}|_U$
for all blocks $U\,(\subseteq W)$ of $\sim$
of size $|U|\ge\max\{\ddd,3\}$.
\end{itemize}
\item \label{it:c3}
 $\hat{c}|_U\in\hat{\al B}|_U$
for all $U\subseteq W$ such that
\begin{itemize}
\item
$|U|<\mm\cdot\max\{\ddd,3\}$ where $\mm=\max\{|\Irr(\al B_j)|:j\in[n]\}$, or
\item
$U$ is a block of $\sim$ of size $|U|\ge\mm\cdot\max\{\ddd,3\}$.
\end{itemize}
\end{enumerate}
\end{thm}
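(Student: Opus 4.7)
I plan to prove the cycle (1) $\Rightarrow$ (3) $\Rightarrow$ (2) $\Rightarrow$ (1). The implication (1) $\Rightarrow$ (3) is immediate: $c \in \al B$ gives $\hat c \in \hat{\al B}$, and all projections carry over. For (3) $\Rightarrow$ (2), given $I \subseteq [n]$ with $|I| < \max\{\ddd, 3\}$, the set $U := \bigcup_{j \in I} W_j \subseteq W$ has size $|U| \leq |I| \cdot \mm < \mm \cdot \max\{\ddd, 3\}$, so (3) produces $b \in \al B$ with $\hat c|_U = \hat b|_U$; the injectivity of each embedding $\al B_j \hookrightarrow \prod_{\sigma \in \Irr(\al B_j)} \al B_j/\sigma$ then forces $c|_I = b|_I \in \al B|_I$. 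For a $\sim$-block $U$ of size $\geq \max\{\ddd, 3\}$, one of the two clauses of (3) applies depending on whether $|U| < \mm \cdot \max\{\ddd, 3\}$ or not, giving $\hat c|_U \in \hat{\al B}|_U$.

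The substantive implication (2) $\Rightarrow$ (1) I prove by contrapositive. Suppose $\hat c \notin \hat{\al B}$ and use Zorn's lemma to pick a subalgebra $\al R$ of $\prod_{w \in W} \hat{\al B}_w$ that is maximal with $\hat{\al B} \subseteq \al R$ and $\hat c \notin \al R$; such $\al R$ is automatically completely $\cap$-irreducible. If $\al R$ decomposes along a partition $W = W' \sqcup W''$ as $\al R|_{W'} \times \al R|_{W''}$, then $\hat c|_{W'} \notin \al R|_{W'}$ or $\hat c|_{W''} \notin \al R|_{W''}$; cylindrification on the bad side produces a subalgebra $\supseteq \al R$ still missing $\hat c$ and containing $\hat{\al B}$, so by maximality of $\al R$ the other projection must be the whole product. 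Iterating, we obtain a minimal $W_{\al R} \subseteq W$ with $\al R = \al R|_{W_{\al R}} \times \prod_{w \in W \setminus W_{\al R}} \hat{\al B}_w$, such that $\al R|_{W_{\al R}}$ is a critical subalgebra of $\prod_{w \in W_{\al R}} \hat{\al B}_w$ still containing $\hat{\al B}|_{W_{\al R}}$ subdirectly, with $\hat c|_{W_{\al R}} \notin \al R|_{W_{\al R}}$.

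Two cases remain. If $|W_{\al R}| < \max\{\ddd, 3\}$, let $I := \{j \in [n] : W_{\al R} \cap W_j \neq \emptyset\}$, so $|I| \leq |W_{\al R}| < \max\{\ddd, 3\}$; if the first clause of (2) held on $I$, one would find $b \in \al B$ with $b|_I = c|_I$, hence $\hat b|_{W_{\al R}} = \hat c|_{W_{\al R}} \in \al R|_{W_{\al R}}$ and therefore $\hat c \in \al R$, a contradiction. If $|W_{\al R}| \geq \max\{\ddd, 3\}$, apply Theorem~\ref{thm-paralg} to the critical $\al R|_{W_{\al R}}$: its conclusions (6)$^\flat$, (7), (8), translated from the reduced representation back to $\hat{\al B}$, show that every pair $v, w \in W_{\al R}$ satisfies $v \sim w$; hence $W_{\al R}$ lies in a single $\sim$-block $U$ of size $\geq \max\{\ddd, 3\}$, and the second clause of (2) then gives $\hat c|_U \in \hat{\al B}|_U$, forcing $\hat c \in \al R$ and again contradicting the choice of $\al R$.

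The main technical obstacle is this last translation: Theorem~\ref{thm-paralg} describes $\bar{\al R} = \al R|_{W_{\al R}}/\delta$ and its subdirectly irreducible quotients $\bar{\al A}_w = \hat{\al B}_w/\delta_w$, whereas $\sim$ is phrased in terms of $\hat{\al B}$ and the unreduced factors $\hat{\al B}_w$. The point to verify is that because each $w = (j,\sigma) \in W$ was indexed by a completely meet-irreducible $\sigma \in \Irr(\al B_j)$, either $\delta_w = 0$ (so the theorem's conclusions apply verbatim to $\hat{\al B}_w$), or $\delta_w$ corresponds to a strictly larger $\sigma' \in \Irr(\al B_j)$, so that the coordinate $w' = (j,\sigma') \in W$ is already present and the 2-projection $\hat{\al B}|_{ww'}$ can be used to transfer the abelian-monolith and graph-of-isomorphism conditions back to the unreduced factors. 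Completing this lifting, together with the routine observation that the subdirect containment $\hat{\al B}|_{W_{\al R}} \subseteq \al R|_{W_{\al R}}$ preserves the relevant congruence-theoretic data, finishes the verification that $W_{\al R}$ is contained in a $\sim$-block and hence completes the proof.
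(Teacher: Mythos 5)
Your implications (1)$\Rightarrow$(3) and (3)$\Rightarrow$(2) are fine and agree with the paper, and the skeleton of your (2)$\Rightarrow$(1) argument --- Zorn's lemma, a maximal subalgebra avoiding the element, complete $\cap$-irreducibility, splitting off a critical directly indecomposable part, invoking Theorem~\ref{thm-paralg}, landing inside a block of $\sim$ --- is also the right one. The genuine gap is in \emph{where} you run this argument, precisely at the ``translation'' step you flag at the end. You form the critical algebra $\al{R}|_{W_{\al R}}$ inside the product $\prod_{w\in W}\hat{\al B}_w$, but Theorem~\ref{thm-paralg} only describes its \emph{reduced representation}: its conclusions (similarity, abelian monoliths, graph of an isomorphism modulo centralizers) hold for the quotients $\hat{\al B}_w/\delta_w$, where $\delta=\prod_{w}\delta_w$ is the largest product congruence saturating $\al{R}|_{W_{\al R}}$, and not for the factors $\hat{\al B}_w$ themselves. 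So your claim that every pair $v,w\in W_{\al R}$ satisfies $v\sim w$ does not follow: there is no reason for $\hat{\al B}_w$ to have an abelian monolith, or to be similar to $\hat{\al B}_v$, just because $\hat{\al B}_w/\delta_w$ has these properties, and even when both monoliths are abelian the centralizer $\rho_w=(0:\mu_w)$ occurring in the definition of $\sim$ is unrelated to $\delta_w$. Your proposed fix --- using the $2$-projection $\hat{\al B}|_{ww'}$, which is merely the graph of the natural surjection $\al B_j/\sigma\to\al B_j/\sigma'$, to ``transfer the abelian-monolith and graph-of-isomorphism conditions back to the unreduced factors'' --- cannot supply these missing properties of the unreduced factors.

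What is salvageable: since each reduced factor is subdirectly irreducible, $\delta_w$ does correspond to some $\sigma'\in\Irr(\al B_j)$, so the \emph{reduced} coordinates $w'=(j,\sigma')$ lie in $W$; it is these (not the elements of $W_{\al R}$) that are pairwise $\sim$-related, and $\delta$-saturation of $\al R|_{W_{\al R}}$ lets you convert membership of $\hat{c}$'s projection on those coordinates in $\hat{\al B}$'s projection into $\hat{c}|_{W_{\al R}}\in\al R|_{W_{\al R}}$. But then you must also control the size of the set of reduced coordinates (distinct $w$'s over the same $j$ may collapse to the same $w'$, so it can be smaller than $W_{\al R}$) and add a corresponding case split --- at which point you are essentially rebuilding the paper's proof. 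The paper sidesteps the whole issue by running the Zorn/criticality argument in the original product $\al B_1\times\dots\times\al B_n$: there the critical piece $\al R\le\prod_{t\in T}\al B_t$ has reduced factors $\al B_t/\delta_t$ with $\delta_t\in\Irr(\al B_t)$, which are \emph{literally} hat factors $\hat{\al B}_{(t,\delta_t)}$, the map $t\mapsto(t,\delta_t)$ is injective (so the relevant $\sim$-block has size at least $|T|\ge\max\{\ddd,3\}$, where $|T|\ge\max\{\ddd,3\}$ comes from $c|_T\notin\al B|_T$ and the first clause of (2)), and the final contradiction $c|_T\in\al R$ falls out directly from $\delta$-saturation. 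You should either restructure your proof along these lines or carry out the reduced-coordinate repair in full; as written, the central step is unjustified.
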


\begin{rem}
The equivalence of conditions \eqref{it:cB} and~\eqref{it:c3} in Theorem~\ref{thm-repr}
can be restated as follows: $\hat{\al B}$ is the intersection of the 
subalgebras 
\[
\proj_U^{-1}\bigl(\proj_U(\hat{\al B})\bigr)
=\proj_U^{-1}\bigl(\hat{\al B}|_U\bigr)
\]
of $\prod_{w\in W}\hat{\al B}_w$ as $U$ runs over the subsets of $W$ 
listed in~\eqref{it:c3}.
\end{rem}

\begin{proof}[Proof of Theorem~\ref{thm-repr}]
Since $\hat{\phantom{\al B}}$ and $|_U$ 
($U\subseteq W$) are
homomorphisms, it is clear that $c\in\al B$ implies
$\hat{c}|_U\in\hat{\al B}|_U$ for all $U\subseteq W$.
This proves \eqref{it:cB} $\Rightarrow$~\eqref{it:c3}.

For the implication~\eqref{it:c3} $\Rightarrow$~\eqref{it:c2}, assume that~\eqref{it:c3} holds.
Then $\hat{c}|_U\in\hat{\al B}|_U$ for all blocks
$U$ of $\sim$, so the second statement in~\eqref{it:c2} holds.
To establish the first statement, choose $I\subseteq[n]$ such that
$|I|<\max\{\ddd,3\}$, and let $W_I:=\bigcup_{j\in I} W_j$.
Since the product of the isomorphisms 
$\al B_j\to\hat{\al B}|_{W_j}$, $ x_j\mapsto(\hat{x}_w)_{w\in W_j}$
(induced by the embeddings in \eqref{eq-repBj}) yields
an isomorphism
\[
\prod_{j\in I}\al B_j\to\prod_{j\in I}\hat{\al B}|_{W_j}
\,\Bigl(\le\prod_{w\in W_I}\hat{\al B}_w\Bigr)
,\quad
(x_j)_{j\in I}\mapsto(x_w)_{w\in W_I}\qquad  (x_j\in\al B_j),
\] 
which maps $\al{B}|_I$ onto $\hat{\al B}|_{W_I}$,
we get that $c|_I\in\al{B}|_I$ holds if and only if
$\hat{c}|_{W_I}\in\hat{\al B}|_{W_I}$.
The latter follows from assumption~\eqref{it:c3}, because
$|W_I|\le\sum_{j\in I}|W_j|=\sum_{j\in I}|\Irr(\al B_j)|\le\mm|I|$.
This completes the proof of~\eqref{it:c3} $\Rightarrow$~\eqref{it:c2}.

The remaining implication~\eqref{it:c2}~$\Rightarrow$~\eqref{it:cB} is the main statement
in Theorem~\ref{thm-repr}, which we will prove now. 
Assume that $c\notin\al B$, but $c|_I\in\al{B}|_I$ for all
$I\subseteq[n]$ with $|I|<\max\{\ddd,3\}$.
We have to show  that $\hat{c}|_U\notin\hat{\al B}|_U$
for some block $U\,(\subseteq W)$ of $\sim$
of size $|U|\ge\max\{\ddd,3\}$.

Using the assumption $c\notin\al B$
and Zorn's Lemma,
we first choose and fix a subalgebra 
$\al M$ of $\al B_1\times\dots\times\al B_n$ containing 
$\al B$, which is maximal for the property that it fails to contain $c$.
Then $\al M$ is completely $\cap$-irreducible in the
lattice of subalgebras of $\al B_1\times\dots\times\al B_n$.
Let $\{T_1,\dots,T_\ell\}$ be a partition of $[n]$ such that
$\al{M}|_{T_i}$ is directly indecomposable for every $i\in[\ell]$, and
$\al M$ differs from $\al{M}|_{T_1}\times\dots\times\al{M}|_{T_\ell}$
by a permutation of variables only; we will denote this fact by
$\al M\approx\al{M}|_{T_1}\times\dots\times\al{M}|_{T_\ell}$.
We must have $c|_T\notin\al{M}|_T$ for at least 
one block $T:=T_i$, because otherwise 
$\al M\approx\al{M}|_{T_1}\times\dots\times\al{M}|_{T_\ell}$
would imply that $c\in\al M$, contradicting the choice of $\al M$. 
Let us fix such a $T$ for the rest of the proof. Note that $|T|>1$, 
because $\al M$ is a subdirect product of 
$\al B_1,\dots,\al B_n$ (as $\al B\le\al M$), so we have
$c|_j\in\al B_j=\al{M}|_j$
for every one-element set $\{j\}\subseteq[n]$.

It follows from
$\al M\approx\al{M}|_{T_1}\times\dots\times\al{M}|_{T_\ell}$
that $\al M$ is the intersection of two subalgebras of 
$\al B_1\times\dots\times\al B_n$, as shown below:
\[
\al M\approx 
\Bigl(\al{M}|_T\times\prod_{j\in[n]\setminus T}\al B_j\Bigr)
\cap
\Bigl(\prod_{j\in T}\al B_j\times\al{M}|_{[n]\setminus T}\Bigr).
\]
Since $\al M$ is a completely $\cap$-irreducible subalgebra of 
$\al B_1\times\dots\times\al B_n$, we get that 
$\al M\approx\al{M}|_T\times\prod_{j\in[n]\setminus T}\al B_j$
or
$\al M\approx\prod_{j\in T}\al B_j\times\al{M}|_{[n]\setminus T}$.
The latter is impossible, because $\al{M}|_T\lneq\prod_{j\in T}\al B_j$
(as witnessed by $c|_T$).
Hence, for the subalgebra 
$\al R:=\al{M}|_T$ of $\prod_{j\in T}\al B_j$ we get that 
$\al M\approx\al R\times\prod_{j\in[n]\setminus T}\al B_j$. 
Furthermore,
$\al R$ is both completely
$\cap$-irreducible (because $\al M$ is)
and directly indecomposable (by construction),
so
$\al R$ is a critical subalgebra of $\prod_{j\in T}\al B_j$.
Our construction also implies that $\al{B}|_T\le\al R$
and $c|_T\notin\al R$.
Hence, $c|_T\notin\al{B}|_T$.
Thus, our assumption that $c|_I\in\al{B}|_I$ holds for all
$I\subseteq[n]$ with $|I|<\max\{\ddd,3\}$ forces that
$|T|\ge \max\{\ddd,3\}$.

Now we can apply Theorem~\ref{thm-paralg} to the
algebras $\al B_t$ ($t\in T$) in $\var{V}$,
and the critical subalgebra $\al R$ of $\prod_{t\in T}\al B_t$
where the number of factors in the product is $|T|\ge\max\{\ddd,3\}$.
Since $\al R$ is a subdirect product of the algebras $\al B_t$ ($t\in T$),
the reduced representation $\bar{\al R}$ of $\al R$ is the quotient algebra
$\al R/\delta_{\al R}$
where $\delta=\prod_{t\in T}\delta_t$
($\delta_t\in\Con(\al B_t)$) is the largest product congruence of
$\prod_{t\in T}\al B_t$ for which $\al R$ is $\delta$-saturated.
Let $\bar{\al B}_t:=\al B_t/\delta_t$ for every $t\in T$.
So, the conclusions of Theorem~\ref{thm-paralg} can be restated as follows:
\begin{enumerate}[align=left]
\item\label{it:barR} 
$\bar{\al R}\le\prod_{t\in T}\bar{\al B}_t$ is a representation of
$\bar{\al R}$ as a subdirect product of subdirectly irreducible
algebras $\bar{\al B}_t$.
\setcounter{enumi}{6}
\item[{\rm(6)}$^\flat$]
$\bar{\al B}_s$ and $\bar{\al B}_t$ are similar 
for any $s,t\in T$.
\item \label{it:abmu} 
Each $\bar{\al B}_t$ has abelian monolith $\mu_t$ $(t\in T)$.
\item
For the centralizers $\rho_\ell:=(0:\mu_\ell)$ of the monoliths $\mu_\ell$
$(\ell\in T)$, the image of the composite map
\begin{equation}
\label{eq-comp8}
\bar{\al R} \stackrel{\proj_{st}}{\rightarrow}
\bar{\al B}_s\times\bar{\al B}_t
\twoheadrightarrow\bar{\al B}_s/\rho_s\times\bar{\al B}_t/\rho_t
\end{equation}
is the graph of an isomorphism $\bar{\al B}_s/\rho_s\to\bar{\al B}_t/\rho_t$
for any distinct $s,t\in T$.
\end{enumerate}

By conclusion~\eqref{it:barR}, we have for each $t\in T$ that
$\bar{\al B}_t=\al B_t/\delta_t$
is subdirectly irreducible, so $\delta_t\in\Irr(\al B_t)$ and
$(t,\delta_t)\in W$.
Hence, the algebra $\bar{\al B}_t=\al B_t/\delta_t$
is one of the subdirect factors of $\hat{\al B}$, namely,
$\bar{\al B}_t=\al B_t/\delta_t=\hat{\al B}_w$ for
$w=(t,\delta_t)$.
This implies also that 
$\mu_t, \rho_t$ are the congruences of $\bar{\al B}_t=\hat{\al B}_w$ 
that we denoted earlier by
$\mu_w,\rho_w$. 

Let $\hat{T}:=\{(t,\delta_t):t\in T\}$.
Next we want to show that any two elements of $\hat{T}$ are
related by $\sim$.
Let $s,t\in T$, and let $v:=(s,\delta_s)$,
$w:=(t,\delta_t)$. As we noticed in the preceding paragraph,
we have that $\bar{\al B}_s=\hat{\al B}_v$,
$\mu_s=\mu_v$, $\rho_s=\rho_v$,
and 
$\bar{\al B}_t=\hat{\al B}_w$,
$\mu_t=\mu_w$, $\rho_t=\rho_w$.
If $s=t$, then $v=w$, and hence $v\sim w$ holds because $\sim$
is reflexive.
So, assume from now on that $s\not=t$. Hence $v\not= w$.
In this case, checking whether $v\sim w$ holds involves two conditions.
One is that the subdirectly irreducible algebras
$\hat{\al B}_v$ and $\hat{\al B}_w$ are similar with abelian monoliths
$\mu_v$ and $\mu_w$, which follows from conclusions  (6)$^\flat$--\eqref{it:abmu}.  
The other is that the image of 
$\hat{\al B}|_{vw}$ under the natural map
$\phi\colon\hat{\al B}_v\times\hat{\al B}_w\twoheadrightarrow
(\hat{\al B}_v/\rho_v)\times(\hat{\al B}_w/\rho_w)$,
is the graph of an isomorphism
$\hat{\al B}_v/\rho_v\to\hat{\al B}_w/\rho_w$.
We will establish this property by proving that
$\tilde{\al B}_{vw}:=\phi(\hat{\al B}|_{vw})$ 
is equal to the image of $\bar{\al R}$ 
under the homomorphism in~\eqref{eq-comp8},
which is $\phi(\bar{\al{R}}|_{st})$, because the map 
$\twoheadrightarrow$ in~\eqref{eq-comp8} is 
$\phi$. Let $\tilde{\al R}_{st}:=\phi(\bar{\al R}|_{st})$.
Since $\hat{\al B}|_{vw}$ is a subdirect product of
$\hat{\al B}_v$ and $\hat{\al B}_w$, we get that
$\tilde{\al B}_{vw}$ is a subdirect product of
$\hat{\al B}_v/\rho_v$ and $\hat{\al B}_w/\rho_w$.
By the construction of $\al R$, we have that $\al R\ge\al{B}|_T$, 
and $\ge$ is preserved under the natural homomorphism
\begin{equation}
\label{eq-nathomT}
\prod_{r\in T}\al B_r\twoheadrightarrow\prod_{r\in T}\al B_r/\delta_r
=\prod_{u\in\hat{T}}\hat{\al B}_u.
\end{equation}
The images of $\al R$ and $\al{B}|_T$ under this homomorphism are
$\bar{\al R}$ and $\hat{\al B}|_{\hat{T}}$, respectively, hence
we conclude that $\bar{\al R}\ge\hat{\al B}|_{\hat{T}}$.
Projecting further onto the coordinates $s,t$ in $T$, and the corresponding
coordinates $v=(s,\delta_s)$, $w=(t,\delta_t)$ in $\hat{T}$, we get that
$\bar{\al R}|_{st}\ge\hat{\al B}|_{vw}$.
Hence, it follows that
$\tilde{\al R}_{st}=\phi(\bar{\al R}|_{st})
\ge\phi(\hat{\al B}|_{vw})=\tilde{\al B}_{vw}$.
By conclusion~\eqref{it:centiso} above,
$\tilde{\al R}_{st}$ is the graph of an isomorphism 
$\bar{\al B}_s/\rho_s\to\bar{\al B}_t/\rho_t$, or equivalently, the graph of an
isomorphism $\hat{\al B}_v/\rho_v\to\hat{\al B}_w/\rho_w$.
Combining this fact with the earlier observation that
$\tilde{\al B}_{vw}$ is a subdirect product of 
$\hat{\al B}_v/\rho_v$ and $\hat{\al B}_w\rho_w$, 
we obtain that  
$\tilde{\al R}_{st}$ and $\tilde{\al B}_{vw}$ must be equal.
In particular, $\tilde{\al B}_{vw}$ is the graph of an
isomorphism $\hat{\al B}_v/\rho_v\to\hat{\al B}_w/\rho_w$, and hence $v\sim w$.

Our arguments in the last two paragraphs show that 
$\hat{T}$ is contained in one of the blocks $U$ of $\sim$.
We have $|U|\ge|\hat{T}|=|T|\ge\max\{\ddd,3\}$. 
It remains to verify that $\hat{c}|_U\notin\hat{\al B}|_U$.

Assume, for a contradiction, that 
$\hat{c}|_U\in\hat{\al B}|_U$.
Then projecting further to $\hat{T}\subseteq U$ yields that
$\hat{c}|_{\hat{T}}\in\hat{\al B}|_{\hat{T}}$.
As we saw earlier, $\hat{\al B}|_{\hat{T}}\le\bar{\al R}$, therefore
we get that the tuple $\hat{c}|_{\hat{T}}=(c_r/\delta_r)_{r\in T}$ lies in
$\bar{\al R}$. Hence the tuple
$c|_T=(c_r)_{r\in T}$ lies in the full inverse image of $\bar{\al R}$
under the natural homomorphism (\ref{eq-nathomT}).
This inverse image is $\al R$, because 
$\bar{\al R}=\al R/\delta_\al R$
with $\delta=\prod_{r\in T}\delta_r$,
and $\al R$ is $\delta$-saturated in $\prod_{r\in T}\al B_r$.
Thus, we obtain that $c|_T\in\al R$, which 
is impossible, since $\al R$ was chosen so that
$c|_T\notin\al R$. 
This contradiction proves that
$\hat{c}|_U\notin\hat{\al B}|_U$, and
completes the proof of Theorem~\ref{thm-repr}.
\end{proof}

\section{Algorithms Based on Theorem~\ref{thm-repr}}
\label{sec-str-alg}

Throughout this section we will work under the following global
assumptions.

\begin{asm} \label{ass-6}\quad 
  \begin{itemize}
  \item
    $\var{V}$ is a variety in a finite language with a $\ddd$-cube term ($\ddd>1$),
  \item
    $\class{K}$ is a finite set of finite algebras in $\var{V}$.
  \end{itemize}  
\end{asm}    

\begin{defi}
\label{def-dcoh}
Let $a_1,\dots,a_k,b\in\al A_1\times\dots\times\al A_n$
($\al A_1,\dots,\al A_n\in\class{K}$) be an input for $\SMP(\class{K})$
where $a_r=(a_{r1},\dots,a_{rn})$ ($r\in[k]$) and $b=(b_1,\dots,b_n)$.
We call this input \emph{$\ddd$-coherent} if the following
conditions are satisfied:
\begin{enumerate}
\item \label{it:nd3} 
$n\ge\max\{\ddd,3\}$;
\item 
 $a_1,\dots,a_k$ generate a subdirect subalgebra of $\al A_1\times\dots\times\al A_n$;
\item \label{it:ssi} 
$\al A_1,\dots,\al A_n$ are similar subdirectly irreducible algebras, 
and each $\al A_\ell$ has abelian monolith $\mu_\ell$;
\item \label{it:bId3} 
$b|_I$ is in the subalgebra of $\prod_{i\in I}\al A_i$
generated by $\{a_1|_I,\dots,a_k|_I\}$
for all $I\subseteq[n]$, $|I|<\max\{\ddd,3\}$; and
\item \label{it:cent} 
for the centralizers $\rho_\ell:=(0:\mu_\ell)$ of the monoliths $\mu_\ell$,
the subalgebra of $\al A_i/\rho_i\times\al A_j/\rho_j$ generated by
$\{(a_{1i}/\rho_i,a_{1j}/\rho_j),\dots,(a_{ki}/\rho_i,a_{kj}/\rho_j)\}$
is the graph of an isomorphism $\al A_i/\rho_i\to\al A_j/\rho_j$
for any distinct $i,j\in[n]$.
\end{enumerate}
\end{defi}

\begin{defi}
\label{def-dcohSMP}
We define $\SMPd(\class{K})$ to be 
the restriction of $\SMP(\class{K})$ to $\ddd$-coherent inputs.
\end{defi}

It is clear from Definition~\ref{def-dcoh} that $\ddd$-coherence for inputs
of $\SMP(\class{K})$ can be checked in polynomial time.

\begin{algorithm}
\caption{Reduction of $\SMP(\class{K})$ to $\SMPd(\HH\SSS\class{K})$}
\label{alg-smp-red-smpd} 
\begin{algorithmic}[1] 
\REQUIRE $b_1,\dots,b_k,b_{k+1}\in\al A_1\times\dots\times\al A_n$ 
with $\al A_1,\dots,\al A_n\in\class{K}\ (n\geq\ddd)$
\ENSURE Is $b_{k+1}$ in the subalgebra of 
$\al A_1\times\dots\times\al A_n$ generated by $b_1,\dots,b_k$?
\STATE $\ddd^-:=\max(\ddd-1,2)$,
\FOR{$I\in\binom{[n]}{\ddd^-}$}
\STATE generate $\al B|_I$ by $b_1|_I,\dots,b_k|_I$
\STATE \return\ NO if $b_{k+1}|_I\notin\al B|_I$ 
\ENDFOR
\STATE generate $\al B_i$ by $b_1|_i,\dots,b_k|_i$ for $i\in[n]$
\STATE omit all coordinates $i$ from the input for which $|\al B_i|=1$
(but keep earlier notation)
\STATE \return\ YES if $n\le\ddd^-$
\STATE $W:=\{(j,\sigma):j\in[n],\sigma\in\Irr(\al B_j)\}$
\STATE $\hat{\al B}_w:=\al B_j/\sigma$,\ \  
$\mu_w:=\text{monolith of }\hat{\al B}_w$,\ \ 
$\rho_w:=(0:\mu_w)$ for each $w=(j,\sigma)\in W$
\STATE $\hat{b}_i:=(\hat{b}_{iw})_{w\in W}$ where $\hat{b}_{iw}:=b_i|_j/\sigma$ for each $i\in[k+1]$ and $w=(j,\sigma)$

\FOR{distinct $v,w\in W$}
\STATE generate $\tilde{\al B}_{vw}$ by the pairs
$(\hat{b}_{iv}/\rho_v,\hat{b}_{iw}/\rho_w)$, $i\in[k]$
\ENDFOR
\STATE compute the equivalence relation $\sim$ on $W$ determined 
by the following condition: for distinct $v,w\in W$, 
\[\qquad\quad
v\sim w
\quad\Leftrightarrow\quad
\begin{cases}
\mu_v\le\rho_v,\ \mu_w\le\rho_w,\ 
\text{$\hat{B}_v$ and $\hat{B}_w$ are similar, and}\\
\text{$\tilde{\al B}_{vw}$ is the graph of an isomorphism
$\hat{\al B}_v/\rho_v\to\hat{\al B}_w/\rho_w$}
\end{cases}
\]
\STATE find the equivalence classes $E_1,\dots,E_\kappa$ of $\sim$ of size $>\ddd^-$
\FOR{$\lambda=1,\dots,\kappa$}
\STATE run $\SMPd(\HH\SSS\class{K})$ with input
$\hat{b}_1|_{E_\lambda},\dots,\hat{b}_{k}|_{E_\lambda},
\hat{b}_{k+1}|_{E_\lambda}\in\prod_{w\in E_\lambda}\hat{\al B}_w$
to get answer $\sf{A}\in\{\text{YES},\text{NO}\}$
\STATE \return\ NO if $\sf{A}=\text{NO}$\\
\ENDFOR 
\STATE \return\ YES
\end{algorithmic}
\end{algorithm}

\begin{thm}
\label{thm-polyequiv}
The decision problems $\SMP(\class{K})$ and 
$\SMPd(\HH\SSS\class{K})$ are
polynomial time equivalent.
\end{thm}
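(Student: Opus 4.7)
The easy direction is immediate: since $\SMPd(\HH\SSS\class{K})$ is by definition a restriction of $\SMP(\HH\SSS\class{K})$, it reduces trivially, and by Theorem~\ref{thm-smp-hom} the latter problem reduces in polynomial time to $\SMP(\class{K})$. For the harder direction I would verify that Algorithm~\ref{alg-smp-red-smpd} correctly reduces $\SMP(\class{K})$ to $\SMPd(\HH\SSS\class{K})$ in polynomial time, with correctness driven entirely by the representation Theorem~\ref{thm-repr}.

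For correctness, let $\al{B}\le\al{A}_1\times\dots\times\al{A}_n$ be the subalgebra generated by $b_1,\dots,b_k$. Steps~2--5 test the first bullet of Theorem~\ref{thm-repr}(2), namely $b_{k+1}|_I\in\al{B}|_I$ for every $I$ with $|I|\le\ddd^-$, equivalently $|I|<\max\{\ddd,3\}$ (smaller $I$ are handled by monotonicity of projection). Steps~6--7 discard coordinates with singleton projection, which cannot affect membership; step~8 then resolves the base case $n\le\ddd^-$, where the remaining membership problem is determined by the small-projection checks already performed on the original $[n]$. Steps~9--15 assemble precisely the data of Theorem~\ref{thm-repr}: the set $W$, the subdirectly irreducible quotients $\hat{\al{B}}_w=\al{B}_j/\sigma\in\HH\SSS\class{K}$, their monoliths and centralizers, and the equivalence $\sim$ (computable because each $\tilde{\al{B}}_{vw}$ is generated by the projected generators). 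Finally, steps~16--19 check the second bullet of Theorem~\ref{thm-repr}(2) on each $\sim$-block $E_\lambda$ of size $>\ddd^-$; by the equivalence of (1) and (2) in that theorem, the algorithm returns YES precisely when $b_{k+1}\in\al{B}$.

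The technical crux is verifying that each recursive call at step~17 is genuinely a $\ddd$-coherent input. The size requirement $|E_\lambda|\ge\max\{\ddd,3\}$ is immediate. The projected generators generate $\hat{\al{B}}|_{E_\lambda}$, which is subdirect in $\prod_{w\in E_\lambda}\hat{\al{B}}_w$ since $\hat{\al{B}}_w=\hat{\al{B}}|_w$. Similarity of the factors and abelianness of their monoliths are encoded in the definition of $\sim$ (using that $\mu_w\le\rho_w$ forces $[\mu_w,\mu_w]\le[\mu_w,\rho_w]=0$), as is the centralizer/graph-of-isomorphism condition. For any $J\subseteq E_\lambda$ with $|J|<\max\{\ddd,3\}$, setting $I:=\{j:(j,\sigma)\in J\}$ gives $|I|\le|J|<\max\{\ddd,3\}$, so the check in steps~2--5 yields $b_{k+1}|_I$ in the subalgebra generated by the $b_i|_I$, and applying the appropriate quotient maps produces $\hat{b}_{k+1}|_J$ in the subalgebra generated by the $\hat{b}_i|_J$. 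Polynomial running time follows from $|W|\le\mm n$ with $\mm:=\max_j|\Irr(\al{B}_j)|$ bounded by a constant depending only on $\class{K}$: each recursive input has size $O(kn)$, and at most $|W|$ of them are produced. The main obstacle is this verification of $\ddd$-coherence, since it requires combining the structural content of Theorem~\ref{thm-paralg} (captured in the definition of $\sim$) with the elementary translation between the $[n]$- and $W$-indexed coordinate systems.
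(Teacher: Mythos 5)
Your proposal is correct and follows essentially the same route as the paper: the easy direction via Theorem~\ref{thm-smp-hom} together with the observation that $\SMPd(\HH\SSS\class{K})$ is a subproblem of $\SMP(\HH\SSS\class{K})$, and the hard direction by verifying that Algorithm~\ref{alg-smp-red-smpd} is a polynomial-time reduction whose correctness rests on the equivalence of conditions \eqref{it:cB} and \eqref{it:c2} in Theorem~\ref{thm-repr}. Your explicit verification that each projected input $\hat{b}_1|_{E_\lambda},\dots,\hat{b}_{k+1}|_{E_\lambda}$ is $\ddd$-coherent is exactly the point the paper disposes of with ``it follows from the construction,'' so you have simply filled in a step the paper leaves brief.
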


\begin{proof}
By Theorem~\ref{thm-smp-hom}, $\SMP(\class{K})$ is polynomial time equivalent
to $\SMP(\HH\SSS\class{K})$. Clearly, 
$\SMPd(\HH\SSS\class{K})$ is polynomial time reducible to 
$\SMP(\HH\SSS\class{K})$,
because it is a subproblem of $\SMP(\HH\SSS\class{K})$. 
Therefore it only remains to show that Algorithm~\ref{alg-smp-red-smpd} reduces
$\SMP(\class{K})$ to
$\SMPd(\HH\SSS\class{K})$ in polynomial time.

The correctness of Algorithm~\ref{alg-smp-red-smpd} is based on Theorem~\ref{thm-repr}.
Let $\al B$ denote the subalgebra of $\al A_1\times\dots\times\al A_n$
generated by $b_1,\dots,b_k$.
Without loss of generality we may assume that $n\geq\ddd$.
In Steps~1--5, it is checked whether $b_{k+1}|_I$ is contained in the projection $\al B|_I$ of $\al B$ for each
 $I\in\binom{[n]}{\ddd^-}$.
If not,
then clearly $b_{k+1}\notin\al B$ and the algorithm correctly
returns the answer NO in Step~4.
In Step~6 the algebras $\al B_i:=\al B|_i$ are computed for every 
$i\in[n]$. Note that by this time, 
$b_{k+1}$ had passed the tests in 
Step~4, which implies that
$b_{k+1}|_i\in\al B_i$ holds for every $i\in[n]$. 
Therefore, when in 
Step~7
all coordinates with $|B_i|=1$ are omitted from
the input tuples (but the earlier notation is kept for simplicity),
this deletion of the trivial coordinates does no affect
whether or not $b_{k+1}\in\al B$.
It follows that by the end of 
Step~7 we have that
\begin{enumerate}
\item \label{it:Bsd} 
$\al B$ is a subdirect subalgebra of $\al B_1\times\dots\times\al B_n$
where $\al B_1,\dots,\al B_n$ are nontrivial, and 
\item \label{it:bII}
$b_{k+1}|_I\in\al B|_I$ 
for all $I\in\binom{[n]}{\ddd^-}$.
\end{enumerate}

By~\eqref{it:Bsd} the algebra $\al B$ satisfies the assumptions of
Theorem~\ref{thm-repr}, and~\eqref{it:bII} shows that the tuple
$c:=b_{k+1}\in\al B_1\times\dots\times\al B_n$
satisfies the first part of condition~\eqref{it:c2} in Theorem~\ref{thm-repr}.

In Steps~9--16, Algorithm~\ref{alg-smp-red-smpd} computes 
the data needed to check whether or not
$c=b_{k+1}$ satisfies the second part
of condition~\eqref{it:c2} as well.
In more detail, the algorithm computes the set $W$,
the subdirectly irreducible 
algebras $\hat{\mathbf{B}}_w$ ($w\in W$), their congruences $\mu_w$ (the monolith) and
$\rho_w=(0:\mu_w)$,
the images $\hat{b}_i$ of the input tuples $b_i$ ($i\in[k+1]$) 
under the homomorphism $\hat{\phantom{d}}$, and finally, the equivalence relation $\sim$
on $W$, and its equivalence classes $E_1,\dots,E_\kappa$ of size ${}>\ddd^-$.
All computations follow the definitions exactly, except Step~15.
We explain now why the conditions used in 
Step~15 to compute $\sim$ are 
equivalent to the conditions in the definition of $\sim$ (stated 
right before Theorem~\ref{thm-repr}):
\begin{itemize}
\item
The inclusion $\mu_w\le\rho_w$ ($w\in W$) 
is equivalent to the condition that $\mu_w$ is abelian, because
the inclusion is true if $\mu_w$ is abelian, whereas
$\rho_w=0$ (and hence the inclusion fails)
if $\mu_w$ is nonabelian. 
\item
As discussed in the proof of Theorem~\ref{thm-repr}, 
$\tilde{B}_{vw}$ is the image of $\hat{B}|_{vw}$ under the 
natural map 
$\hat{\al B}_v\times\hat{\al B}_w\twoheadrightarrow
(\hat{\al B}_v/\rho_v)\times(\hat{\al B}_w/\rho_w)$.
\end{itemize}

Since $c=b_{k+1}$ satisfies the first part of condition~\eqref{it:c2}
in Theorem~\ref{thm-repr}, it follows from the
equivalence of conditions \eqref{it:cB} and~\eqref{it:c2} in Theorem~\ref{thm-repr}
that $b_{k+1}\in\al B$ if and only if 
$c=b_{k+1}$ satisfies the second part of condition~\eqref{it:c2} as well, that is,
\begin{enumerate}
\item[(3)]
$\hat{b}_{k+1}|_{E_\lambda}\in\hat{\al B}|_{E_\lambda}$
for all $\lambda\in[\kappa]$.
\end{enumerate} 
This is clearly equivalent to the condition that
\begin{enumerate}[align=left]
\item[(3)$'$]
$\hat{b}_{k+1}|_{E_\lambda}$ belongs to the subalgebra of 
$\prod_{w\in E_\lambda} \hat{B}_w$ generated by the tuples\break
$\hat{b}_1|_{E_\lambda},\dots,  \hat{b}_k|_{E_\lambda}$
for all $\lambda\in[\kappa]$.
\end{enumerate} 
It follows from the construction that for each $\lambda\in[\kappa]$,
$\hat{b}_1|_{E_\lambda},\dots,  \hat{b}_k|_{E_\lambda}, \hat{b}_{k+1}|_{E_\lambda}$
is a $\ddd$-coherent input for $\SMP(\HH\SSS\class{K})$, so
condition (3)$'$ can be checked using $\SMPd(\HH\SSS\class{K})$.
This is exactly what Algorithm~\ref{alg-smp-red-smpd} does in 
Steps~17--20, and
returns the correct answer: YES if (3)$'$ holds 
(including the case
when $\kappa=0$), and NO otherwise.
This completes the proof of the correctness of Algorithm~\ref{alg-smp-red-smpd}.

Now we show that Algorithm~6 reduces $\SMP(\class{K})$
to $\SMPd(\HH\SSS\class{K})$ in polynomial time.
Clearly, Steps~1,~8, and~21 require constant time.
Since each $\al B_i$ ($i\in[n]$) is a subalgebra of some member of
$\class{K}$, we have $|\al B_i|\le\aaa_{\class{K}}$ where
the constant $\aaa_{\class{K}}$ is independent of the input.
The parameter $\ddd$ is also independent of the input, and
so is $\sss:=\max\{|\Irr(\al A)|:\al A\in\SSS\class{K}\}$.
It follows that $|W|\le n\sss$ and that 
each iteration of the \for\ loops in Steps~2--5 and~12--14 require time $O(k)$.
Hence, Steps~2--5, 6--7, 9--10, 11, and~12--14
run in $O(kn^{\ddd^-})$, $O(kn)$, $O(n)$, $O(kn)$, and $O(kn^2)$ time, respectively.

In 
Step~15, to determine whether $v\sim w$ holds for a particular pair
of elements $v,w\in W$ requires constant time, because the condition
only involves data on algebras in $\HH\SSS\class{K}$ and on products
of two such algebras. (In particular, recall from Section~2 that similarity of 
$\hat{B}_v$ and $\hat{B}_w$ can be checked by looking at congruences
of subalgebras of $\hat{B}_v\times\hat{B}_w$.)
Thus, Step~15 runs in $O(n^2)$ time as does Step~16.
Since $E_1,\ldots,E_\kappa$ are disjoint subsets of $W$ and $|W|\le n\sss$,
we get that $\kappa\le n\sss$ and each $E_\lambda$ has size $|E_\lambda|\le n\sss$.
Thus, in 
Steps~17--20, $\SMPd(\HH\SSS\class{K})$ has to be run at most $O(n)$
times, and the input size of each run is $O(kn)$, approximately the 
same as the size of the original input.

This proves that Algorithm~\ref{alg-smp-red-smpd} reduces $\SMP(\class{K})$ to
$\SMPd(\HH\SSS\class{K})$ in polynomial time.
\end{proof}

\begin{algorithm}
\caption{For $\SMPd(\class{K})$ if $\class{K}$ 
is in a residually small variety}
\label{alg-smp-red-sm-v}
\begin{algorithmic}[1] 
\REQUIRE $\ddd$-coherent $a_1,\dots,a_k,b\in\al A_1\times\dots\times\al A_n$ 
$(\al A_1,\dots,\al A_n\in\class{K})$ where \\
$a_i=(a_{i1},\dots,a_{in})$ for all $i\in[k]$ 
\ENSURE Is $b$ in the subalgebra of $\al A_1\times\dots\times\al A_n$
generated by $a_1,\dots,a_k$?
\STATE let $\mu_j$ be the monolith of $\al A_j$,  
$\rho_j:=(0:\mu_j)$ for every $j\in[n]$ and $\rho := \rho_1\times\dots\times\rho_n$
\STATE reindex $a_1,\dots,a_k$ so that
$a_{11}/\rho_1,\dots,a_{r1}/\rho_1$ are pairwise distinct
and $\{a_{11}/\rho_1,\dots,a_{r1}/\rho_1\}=\{a_{11}/\rho_1,\dots,a_{k1}/\rho_1\}$\\
let $\mathcal{O} :=\{a_1,\dots,a_r\}$ 
\STATE generate $\al A_1/\rho_1$ by $a_{11}/\rho_1,\dots,a_{r1}/\rho_1$, and simultaneously,
\FOR{each new 
$a/\rho_1=t(a_{11}/\rho_1,\dots,a_{r1}/\rho_1)\in\al A_1/\rho_1$ 
($t$ a term)}
\STATE $\mathcal{O}:=\mathcal{O}\cup\{t(a_1,\dots,a_r)\}$  
\ENDFOR 
\STATE find the equivalence relation $\equiv$ on $[n]$ defined by
\[
s\equiv t\ \ \Leftrightarrow\ \ \al A_s=\al A_t \text{ and }
o|_s=o|_t \text{ for all $o\in\mathcal{O}$}
\qquad\text{($s,t\in[n]$)}
\]
let $T$ be a transversal for the blocks of $\equiv$, and let
$\al A_T:=\prod_{j\in T}\al A_j$
\STATE enumerate the elements of the subalgebra $P$ of $\al A_T^{A_T}$ generated by 
the identity function $A_T\to A_T$ and by the constant functions
with value $o|_T$ ($o\in\mathcal{O}$) 
(so every $p\in P$ is of the form $A_T\to A_T, (x_t)_{t\in T} \mapsto (p_t(x_t))_{t\in T}$ for some function $p_t\colon A_t\to A_t$)
\STATE find $o\in\mathcal{O}$ such that $b\in o/\rho$
\STATE $H:=\emptyset$
\FOR{$p\in P$}
\FOR{$c\in\{a_1,\dots,a_k\}$}
\FOR{$j\in[n]$}
\STATE find $t\in T$ with $t\equiv j$ 
\STATE let $d_j:=p_t(c|_j)$
\ENDFOR
\STATE $d := (d_1,\dots,d_n)$
\IF{$d\in o/\rho$}
\STATE $H:=H\cup\{d\}$
\ENDIF
\ENDFOR
\ENDFOR
\STATE run Sims' algorithm for $\SMP(\mathcal{G})$ with the input
$H\cup\{b\}\subseteq \al G_1\times\dots\times\al G_n$ to check whether $b\in\langle H\rangle$ where
$\al G_j$ is the group $(o_j/\rho_j;+_{o_j},-_{o_j},o_j)$ for each $j\in[n]$
and $\mathcal{G}$ is the family of all induced abelian groups on blocks
of abelian congruences of algebras in $\mathcal{K}$;
get answer $\sf{A}\in\{\text{YES},\text{NO}\}$
\STATE \return\ $\sf{A}$
\end{algorithmic}
\end{algorithm}

\begin{thm}
\label{thm-RScase}
If, in addition to Assumption~\ref{ass-6}, the variety $\var{V}$ is residually small, then
\[
\SMP(\class{K})\in\PP.
\]
\end{thm}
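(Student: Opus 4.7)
The plan is to combine Theorem~\ref{thm-polyequiv} with Algorithm~\ref{alg-smp-red-sm-v} to conclude $\SMP(\class{K})\in\PP$. By Theorem~\ref{thm-polyequiv}, $\SMP(\class{K})$ is polynomial time equivalent to $\SMPd(\HH\SSS\class{K})$. Since $\HH\SSS\class{K}$ is again a finite set of finite algebras in the residually small variety $\var{V}$, it suffices to verify that Algorithm~\ref{alg-smp-red-sm-v} solves $\SMPd(\class{K}')$ in polynomial time for any finite set $\class{K}'$ of finite algebras in a residually small congruence modular variety. The rest of the proof concerns correctness and running time of Algorithm~\ref{alg-smp-red-sm-v}.

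For correctness, I would argue as follows. On a $\ddd$-coherent input $a_1,\dots,a_k,b$, each factor $\al A_j$ is subdirectly irreducible with abelian monolith $\mu_j$, so Corollary~\ref{cor-rs} guarantees that each centralizer $\rho_j=(0:\mu_j)$ is an abelian congruence of $\al A_j$. Hence $\rho=\rho_1\times\dots\times\rho_n$ is an abelian congruence on $\al A_1\times\dots\times\al A_n$, and by Theorem~\ref{thm-abelian-congr}~\eqref{it:abelian1} each $\rho$-class $o/\rho$ carries an induced abelian group structure $\al G_1\times\dots\times\al G_n$ via the difference term. The membership question $b\in\al B$ (where $\al B:=\langle a_1,\dots,a_k\rangle$) splits into two subquestions: (i) does the $\rho$-class of $b$ meet $\al B$, and if so (ii) is $b$ itself in $\al B\cap(o/\rho)$, viewed as a subgroup of $o/\rho$.

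Subquestion (i) is handled by Steps~3--9: enumerating $\al A_1/\rho_1$ from $a_{11}/\rho_1,\dots,a_{k1}/\rho_1$ is enough because, by $\ddd$-coherence condition~\eqref{it:cent}, the subalgebra of $\al A_i/\rho_i\times\al A_j/\rho_j$ generated by the pairs $(a_{\ell i}/\rho_i,a_{\ell j}/\rho_j)$ is the graph of an isomorphism, so knowing an element modulo $\rho_1$ determines it modulo $\rho_j$ for every $j$. For subquestion (ii) one uses Theorem~\ref{thm-abelian-congr}~\eqref{it:abelian2}: any element $d\in\al B\cap(o/\rho)$ is expressible as a term $t(a_1,\dots,a_k)$, and by multilinearity $d$ decomposes in the induced abelian group on $o/\rho$ as a sum of values $t(o_{i_1},\dots,o_{i_{j-1}},a_{i_j},o_{i_{j+1}},\dots,o_{i_k})$ in which only one generator occurs at a time. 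Each such summand, in turn, is of the form $(p_t(c|_t))_{t\in T}$ obtained by applying a polynomial $p\in P$ coordinate-wise to a single generator $c\in\{a_1,\dots,a_k\}$, and is exactly an element added to $H$ in Steps~10--21. Consequently $H$ generates $\al B\cap(o/\rho)$ as a subgroup of $\al G_1\times\dots\times\al G_n$, and Sims' algorithm invoked in Step~22 correctly decides whether $b\in\langle H\rangle$.

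For the running time, the key observation is that $|\mathcal{O}|\le|\al A_1/\rho_1|\le\aaa_{\class{K}}$ is a constant, so the equivalence relation $\equiv$ of Step~7 has at most $|\class{K}|\cdot\aaa_{\class{K}}^{\aaa_{\class{K}}}$ classes; thus $|T|$, $|A_T|$ and $|P|$ are all bounded by constants depending only on $\class{K}$. Steps~3--9 run in time $O(kn)$, Steps~10--21 contribute $O(|P|\cdot k\cdot n)=O(kn)$, and Sims' algorithm on $H\cup\{b\}$ inside a product of $n$ abelian groups of constant size runs in polynomial time. The main obstacle I expect is the verification of (ii) above: one must ensure that the multilinear expansion from Theorem~\ref{thm-abelian-congr}~\eqref{it:abelian2} can be realized purely by coordinate-wise polynomials $p\in P$ obtained from $\mathcal{O}$, using only the base point $o$ whose $\rho$-class already meets $\al B$, and that transporting the polynomials across $\equiv$-classes via coherence condition~\eqref{it:cent} preserves membership in $\al B$. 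Once this is checked, $\SMPd(\class{K}')\in\PP$, and by the reduction above $\SMP(\class{K})\in\PP$.
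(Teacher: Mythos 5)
Your proposal is correct and follows essentially the same route as the paper: reduce via Theorem~\ref{thm-polyequiv} to $\SMPd(\HH\SSS\class{K})$, use Corollary~\ref{cor-rs} to get that each $\rho_j=(0:\mu_j)$ is abelian, build the transversal $\mathcal{O}$, decompose elements of $\al B$ lying in $o/\rho$ by the multilinearity in Theorem~\ref{thm-abelian-congr}~\eqref{it:abelian2} into sums of one-generator terms lying in $H$, and finish with Sims' algorithm; the complexity analysis also matches. The step you flag as the ``main obstacle'' is exactly the paper's Claim~\ref{clm-setH}, and it is routine: the coordinate-wise application $(p_t(c|_j))_j$ equals the global term value $\ttt^{\al A}(c,\mathcal{O})$ because the relation $\equiv$ is defined so that $\al A_t=\al A_j$ and $\mathcal{O}|_t=\mathcal{O}|_j$ whenever $t\equiv j$ --- this uses only the definition of $\equiv$, not coherence condition~\eqref{it:cent}; condition~\eqref{it:cent} is what you need (together with condition~\eqref{it:bId3} applied to pairs) to guarantee that $b\in\al B[\rho]$, so that Step~9 succeeds, and that $\mathcal{O}$, built from the first coordinate alone, is a transversal of all $\rho$-classes of $\al B[\rho]$. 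You should also record the easy converse inclusion $\langle H\rangle\subseteq\al B$ (the elements of $H$ and the group operations $+_o,-_o$ are term/polynomial values with parameters in $\mathcal{O}\subseteq B$), which is needed to conclude that membership in $\langle H\rangle$ implies membership in $\al B$.
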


\begin{proof}
By Theorem~\ref{thm-polyequiv}, $\SMP(\class{K})$ is polynomial time
equivalent to $\SMPd(\HH\SSS\class{K})$. Also, if $\class{K}$ is finite, so
is $\HH\SSS\class{K}$. Therefore it suffices to prove that 
$\SMPd(\class{K})\in\PP$ holds for every $\class{K}$ 
as in the theorem.
We will prove this by showing that, under the assumptions of the theorem,
Algorithm~\ref{alg-smp-red-sm-v} solves $\SMPd(\class{K})$ in polynomial time.

First we discuss the correctness of Algorithm~\ref{alg-smp-red-sm-v}.
Let $a_1,\dots,a_k,b\in\al A_1\times\dots\times\al A_n$ 
($\al A_1,\dots,\al A_n\in\class{K}$) be a correct input for
$\SMPd(\class{K})$ (i.e., a $\ddd$-coherent input for $\SMP(\class{K})$).
Then conditions~\eqref{it:nd3}--\eqref{it:cent} in Definition~\ref{def-dcoh} hold.
By condition~\eqref{it:ssi},
the algebras $\al A_j$ ($j\in[n]$) are subdirectly irreducible
with abelian monoliths, so in Step~1 of Algorithm~\ref{alg-smp-red-sm-v}
the monoliths $\mu_j$ and their
centralizers $\rho_j$ will be found. Moreover, since $\class{K}$ is
assumed to be in a residually small variety, we get from 
Theorem~\ref{thm-cube-par}\eqref{it:cubecm} and
Corollary~\ref{cor-rs} that
\begin{equation} \label{eq:rhoj}
\rho_j \text{ is an abelian congruence of } \al A_j \text{ for every } j\in[n].
\end{equation}

Let $\al B$ denote the subalgebra of $\al A_1\times\dots\times\al A_n$
generated by the input tuples $a_1,\dots,a_k$, and let 
$\rho$ denote the product congruence $\rho_1\times\dots\times\rho_n$
on $\al A_1\times\dots\times\al A_n$. The restriction of $\rho$ to $\al B$
will be denoted by $\rho_{\al B}$.
Condition~\eqref{it:cent} in Definition~\ref{def-dcoh} implies that 
\begin{equation} \label{eq:BAj}
\text{the map } \al B/\rho_{\al B}\to \al A_j/\rho_j, 
(x_1,\dots,x_n)/\rho_{\al B}\mapsto x_j/\rho_j
\text{ is a bijection for every } j\in[n].
\end{equation}
Finally, conditions~\eqref{it:nd3} and~\eqref{it:bId3} together imply that for the input tuple
$b$ we have $b|_I\in\al B|_I$ for all sets $I\in\binom{[n]}{2}$.
Hence, we have 
$b|_{i,j}/(\rho_i\times\rho_j)\in \al B|_{i,j}/(\rho_i\times\rho_j)$
for all $i,j\in[n]$.
Since, by condition~\eqref{it:cent}, $\al B|_{i,j}/(\rho_i\times\rho_j)$ is the graph
of an isomorphism $\al A_i/\rho_i\to\al A_j/\rho_j$
for every pair $i,j\in[n]$ ($i\neq j$), it follows that 
the tuple $b/\rho=(b|_1/\rho_1,\dots,b|_n/\rho_n)$ belongs to
$\al B[\rho]/\rho$.
 Hence, 
\begin{equation} \label{eq:Brho}
 b\text{ is an element of the algebra } \al B[\rho], \text{ the } \rho\text{-saturation of } \al B.
\end{equation}

We return to the analysis of Algorithm~\ref{alg-smp-red-sm-v}.
After appropriately reindexing $a_1,\dots,a_k$,
Steps~2--6 produce a subset $\mathcal{O}$ of $\al B$
such that the first coordinates of the tuples in $\mathcal{O}$
form a transversal for the $\rho_1$-classes of $\al A_1$.
Thus, it follows from~\eqref{eq:BAj} that the tuples in
$\mathcal{O}$ form a transversal for the $\rho_{\al B}$-classes of
$\al B$, and hence also for the $\rho$-classes of $\al B[\rho]$. 
Let $|\mathcal{O}|=\ell$.
Since $|\mathcal{O}|=\bigl|\mathcal{O}|_1\bigr|$, we have
$\ell=|{\al A}_1/\rho_1|<|A_1|$.

Now let $\equiv$, $T$, and $\al A_T$ be as defined (and computed)
in 
Step~7, and let $\al A:=\al A_1\times\dots\times\al A_n$.
During our discussion of 
Steps~8--24 we will use computations
where the same term is used in different algebras. 
To make it easier for the reader to keep track of where each 
computation takes place, we will use superscripts to
indicate the relevant algebras.

It is easy to see that 
the set $P$ of functions $A_T\to A_T$ computed in 
Step~8 is 
\begin{equation}
\label{eq-setP}
P=\{\ttt^{{\al A}_T}(x,\mathcal{O}|_T):
\text{$\ttt$ is a $(1+\ell)$-ary term}\}
\end{equation}
where we assume that an ordering of $\mathcal{O}$ has been fixed to
ensure that its elements are always substituted into terms in that 
fixed order.
Our observation~\eqref{eq:Brho} shows that in 
Step~9 Algorithm~\ref{alg-smp-red-sm-v} will find
an element $o\in\mathcal{O}$ such that $b\in o/\rho$.

\begin{clm}
\label{clm-setH}
The set $H$ obtained by Algorithm~\ref{alg-smp-red-sm-v} after completing 
Steps~10--22 is  
\begin{equation}
\label{eq-setH}
H=\{\ttt^{\al A}(a_i,\mathcal{O})\in o/\rho: i\in[k],\ 
\text{$\ttt$ is a $(1+\ell)$-ary term}\}.
\end{equation}
\end{clm}

\begin{proof}[Proof of Claim~\ref{clm-setH}.]
We will use the notation from
Steps~10--22.
Let $p\in P$ and $c\in\{a_1,\dots,a_k\}$,
say $c=(c_1,\dots,c_n)$. 
By \eqref{eq-setP}, $p\in P$ if and only if $p$
is a unary polynomial operation of $\al A_T$ of the form
$\ttt^{{\al A}_T}(x,\mathcal{O}|_T)$ 
for some $(1+\ell)$-ary term $\ttt$. 
Our goal is to show that for every choice of $p$ and $c$
the tuple $d$ computed in Steps~13--17
is 
\begin{equation}
\label{eq-element-d}
d=\ttt^{\al A}(c,\mathcal{O}).
\end{equation}
This will prove that the tuples $d$ computed in 
Steps~13--17
are exactly the elements of $\al A$ of the form
$\ttt^{\al A}(c,\mathcal{O})$ where $c\in\{a_1,\dots,a_k\}$
and $\ttt$ is a $(1+\ell)$-ary term. 
Since such a tuple $d$ is added to $H$ in 
Steps~18--20
if and only if $d$ also satisfies $d\in o/\rho$,
the equality \eqref{eq-setH} will follow.

To verify \eqref{eq-element-d} let $j\in[n]$ and let $t\in T$ 
be the unique transversal element such that $t\equiv j$.
By the definition of $\equiv$ we have that 
 $o'|_t=o'|_j$ for all $o'\in\mathcal{O}$.
The latter condition may be written as $\mathcal{O}|_t=\mathcal{O}|_j$
(with the fixed ordering of $\mathcal{O}$ in mind,
these are tuples of elements in $\al A_t=\al A_j$).
The function $p_t\colon \al A_t\to\al A_t$ computed in 
 Step~8
is the polynomial function
$\ttt^{{\al A}_t}(x,\mathcal{O}|_t)$ of $\al A_t$.
Thus, using the equalities $\al A_t=\al A_j$
and $\mathcal{O}|_t=\mathcal{O}|_j$ we get that
\[
d_j=p_t(c_j)
=\ttt^{{\al A}_t}(c_j,\mathcal{O}|_t)
=\ttt^{{\al A}_j}(c_j,\mathcal{O}|_j)
=\ttt^{\al A}(c,\mathcal{O})|_j.
\]
This holds for every $j\in[n]$, so the proof of 
\eqref{eq-element-d}, and hence the proof of Claim~\ref{clm-setH},
is complete.
\renewcommand{\qedsymbol}{$\diamond$}
\end{proof}

To establish the correctness of the last two steps of
Algorithm~\ref{alg-smp-red-sm-v} recall from~\eqref{eq:rhoj} that 
$\rho_j$ is an abelian congruence of $\al A_j$ for every
$j\in[n]$. Therefore, $\rho=\rho_1\times\dots\times\rho_n$ 
is an abelian congruence of 
$\al A=\al A_1\times\dots\times\al A_n$. 
Hence, by Theorem~\ref{thm-abelian-congr}~\eqref{it:abelian1}, 
there is an induced abelian group
$\al G:=(o/\rho;+_0,-_o,o)$ on the $\rho$-class
$o/\rho$.
Moreover, since $\rho$ is the product congruence 
$\rho_1\times\dots\times\rho_n$ of $\al A$, we get that
$\al G=\al G_1\times\dots\times\al G_n$ where
$\al G_j$ is the group $(o_j/\rho_j;+_{o_j},-_{o_j},o_j)$
for every $j\in[n]$.
Recall also that $b\in o/\rho$, that is, 
$b$ is contained in $\al G=\al G_1\times\dots\times\al G_n$.

\begin{clm}
\label{clm-red-to-groups}
The following conditions on $b$ are equivalent:
\begin{enumerate}
\item \label{it:binB} 
$b$ is in the subalgebra $\al B$ of $\al A=\al A_1\times\dots\times\al A_n$ 
generated by $\{a_1,\dots,a_k\}$;
\item \label{it:binS} 
$b$ is in the subgroup of $\al G=\al G_1\times\dots\times\al G_n$
generated by the set $H$.
\end{enumerate}
\end{clm}

\begin{proof}[Proof of Claim~\ref{clm-red-to-groups}.]
To prove the implication 
 \eqref{it:binB}~$\Rightarrow$~\eqref{it:binS}
  assume that 
$b\in\al B$, that is, $b=g(a_1,\dots,a_k)$ for some
$k$-ary term $g$.
For each $i\in[k]$, let $o^{(i)}$
denote the unique element of $\mathcal{O}$
in the $\rho_{\al B}$-class of $a_i$.
Since $g(a_1,\dots,a_k)=b\in o/\rho$,
it follows from Theorem~\ref{thm-abelian-congr}~\eqref{it:abelian2}
that 
\begin{multline*}
g(a_1,\dots,a_k)=
g(a_1,o^{(2)}\dots,o^{(k)})
+_o g(o^{(1)},a_2,o^{(3)}\dots,o^{(k)})
+_o\dots\\
+_o g(o^{(1)},\dots,o^{(k-1)},a_k)
-_o(k-1)g(o^{(1)},\dots,o^{(k-1)},o^{(k)}).
\end{multline*}
All $+_o$-summands on the right hand side belong to $H$, therefore
$b$ is in the subgroup of 
$\al G=\al G_1\times\dots\times\al G_n$ generated by $H$.

For the reverse implication 
\eqref{it:binS}~$\Rightarrow$~\eqref{it:binB}
notice first that
$H\subseteq B$, because $\al B$ is a subalgebra of 
$\al A=\al A_1\times\dots\times\al A_n$ and 
the elements of $H$ are obtained from 
$a_1,\dots,a_k\in B$ by unary polynomial operations of $\al A$
that are constructed from term
operations by
using parameters from $\mathcal{O}\subseteq B$ only.
Since the group operations $+_o$, $-_o$, $o$ of 
$\al G$ are also polynomial operations
of $\al A$ obtained from term operations
using parameters from $\mathcal{O}$ only, we get that
the subgroup of $\al G$ generated by $H$ is contained in $\al B$.
\renewcommand{\qedsymbol}{$\diamond$}
\end{proof}

As in 
 Step~23 of Algorithm~\ref{alg-smp-red-sm-v}, let $\class{G}$ denote the set of all
induced abelian groups on blocks of 
abelian congruences of algebras in $\class{K}$.
Then, clearly, $\al G_1,\dots,\al G_n\in\class{G}$.
Therefore,
Claim~\ref{clm-red-to-groups} shows that $\SMPd(\class{K})$ run with the input
$a_1,\dots,a_k,b$ in $\al A_1\times\dots\times\al A_n$ 
($\al A_1,\dots,\al A_n\in\class{K}$)
has the same answer as $\SMP(\class{G})$ run with the input $H$ and $b$
in $\al G_1\times\dots\times\al G_n$ 
($\al G_1,\dots,\al G_n\in\class{G}$).
Hence, Algorithm~\ref{alg-smp-red-sm-v} finds the correct answer in 
Steps~23, 24. The proof
of the correctness of Algorithm~\ref{alg-smp-red-sm-v} is complete.

To prove that Algorithm~\ref{alg-smp-red-sm-v} runs in polynomial time, we will estimate
the time complexity of each step separately.
Recall that $\aaa_{\class{K}}$ denotes the maximum size of an algebra 
in $\class{K}$.

Steps~1 and~3--6 run in $O(n)$ time, while Step~2 runs in $O(kn)$ time.
In 
Step~7, $\equiv$ and a transversal $T$ for the
blocks of $\equiv$
can be found in $O(n^2)$ time, 
since $\ell=|\mathcal{O}|$ is bounded above by a constant 
($< |A_1|\le \aaa_{\class{K}}$) which is
independent of the size of the input.
Since $\equiv$ is the kernel of the map 
$[n]\to\class{K}\times \bigcup_{i\in[n]}A_i^\ell$, 
$j\mapsto (\al A_j,\mathcal{O}|_j)$, the number $|T|$
of the $\equiv$-blocks
is at most $|\class{K}|\aaa_{\class{K}}^\ell$,
independent of the input. 
Therefore, $\al A_T$ can be computed in constant time, and
Step~7 
altogether requires $O(n^2)$ time.
For the same reason, $|\al A_T^{A_T}|$ is also bounded above by a constant,
independent of the input, therefore 
Step~8 runs in constant time.
Step~9 also runs in constant time, because, in view of~\eqref{eq:BAj},
$b\in o/\rho$ is equivalent to $b|_1\in o|_1/\rho_1$.
Clearly, 
Step~10 also runs in constant time.

Using the previous estimates on $|P|\,(\le |\al A_T^{A_T}|)$ and $|T|$
we see that the number of iterations of the outer 
\for\ loop (line~11) is bounded above by a constant, while the inner \for\ loops starting in Step~12 and Step~13
  iterate $k$ and $n$ times, respectively. In each iteration, Steps 14--15 and 18--20 require constant time.
Thus Steps 11--22 run in $O(kn)$ time.

In 
Steps~11--22 at most one element is added to $H$ for each choice of
$p\in P$ and $c\in\{a_1,\dots,a_k\}$. Since $|P|$ is bounded above
by a constant independent of the input, we get that $|H|$ has size 
$O(k)$.
Thus, in 
Step~23 the size of the input $H\cup\{b\}$ for
$\SMP(\class{G})$ is $O(kn)$. 
Moreover, the size of each group in $\class{G}$ is $\le\aaa_{\class{K}}$.
Since Sims' algorithm for $\SMP(\class{G})$ 
runs in $O(kn^3)$ time~\cite[Corollary 3.7]{Zw:CDP} on an input
$H\cup\{b\}\subseteq\al G_1\times\dots\times\al G_n$ 
with $|H|=O(k)$, we get that 
Step~23 of Algorithm~\ref{alg-smp-red-sm-v} requires $O(kn^3)$
time. Clearly, 
Step~24 runs in constant time.

Combining the time complexities of 
Steps~1--24 we get that Algorithm~\ref{alg-smp-red-sm-v}
runs in $O(kn^3)$ time. This completes the proof of 
Theorem~\ref{thm-RScase}.
\end{proof}

\subsection*{Acknowledgement.}
The authors are thankful to the anonymous referees for their careful reading and suggestions.

\def\cprime{$'$}

\end{document}